\pgfplotsset{compat=1.11}
\newcommand{\MMfqm}{MM-$\ff{q^m}$\xspace}
\newcommand{\SMfqm}{SM-$\ff{q^m}$\xspace}
\newcommand{\SMpfqm}{SM-$\ff{q^m}^+$\xspace}
\newcommand{\MMfq}{MM-$\ff{q}$\xspace}
\newcommand{\SMfq}{SM-$\ff{q}$\xspace}
\newcommand{\mSMfqm}{\ensuremath{\text{SM-}\ff{q^m}}}
\newcommand{\mSMpfqm}{\ensuremath{\text{SM-}\ff{q^m}^+}}
\newcommand{\mMMfq}{\ensuremath{\text{MM-}\fq}}
\newcommand{\mSMfq}{\ensuremath{\text{SM-}\fq}}
\newcommand{\mSMMfq}{\ensuremath{\text{SMM-}\fq}}
\newcommand{\ff}[1]{\mathbb{F}_{#1}} \newcommand{\fq}{\ff{q}}
\newcommand{\fqm}{\ff{q^m}}
\newcommand{\Nbfqm}[1][b]{  \ensuremath{\mathcal N_{#1}^{\fqm}}}
\newcommand{\Nbfq}[1][b]{  \ensuremath{\mathcal N_{#1}^{\fq}}}
\newcommand{\Nbfqsyz}[1][b]{  \ensuremath{\mathcal N_{#1,syz}^{\fq}}}
\newcommand{\Mbfqm}[1][b]{        \ensuremath{\mathcal M_{#1}^{\ff{q^m}}}}
\newcommand{\Mbfq}[1][b]{        \ensuremath{\mathcal M_{#1}^{\ff{q}}}}
\newcommand{\pow}[2]{{#1}^{[#2]}}
\newcommand{\Fq}{\ff{q}}
\newcommand{\Fqm}{\ff{q^m}}
  \newcommand{\NN}{\mathbb{N}} \newcommand{\minor}[2]{\left| #1 \right|_{#2}}
\newcommand{\any}{*} \DeclareMathOperator{\rank}{rk}
\newcommand{\rw}[1]{\left| #1 \right|}\DeclareMathOperator{\Mat}{Mat}
\newcommand{\Matb}[1]{\Mat\left( #1\right)}
\DeclareMathOperator{\MaxMinors}{MaxMinors}
\DeclareMathOperator{\tr}{Tr}
\DeclareMathOperator{\LT}{LT}
\DeclareMathOperator{\NF}{NF}
\DeclareMathOperator{\Unfold}{Unfold}
\DeclareMathOperator{\Pos}{Pos}
\newcommand{\matRing}[3]{#1^{#2 \times #3}}
\newcommand{\mat}[1]{\boldsymbol{#1}} 
\newcommand{\zerov}{\boldsymbol{0}} 
\newcommand{\ident}{\mat{I}} 
\newcommand{\trsp}[1]{#1^\mathsf{T}}
\newcommand{\vsg}[2]{\langle #1 \rangle_{#2}} 
\newcommand{\cv}{\mat{c}}
\newcommand{\ev}{\mat{e}}
\newcommand{\vv}{\mat{v}}
\newcommand{\sv}{\mat{s}}
\newcommand{\uv}{\mat{u}}
\newcommand{\gv}{\mat{g}}
\newcommand{\hv}{\mat{h}}
\newcommand{\rv}{\mat{r}}
\newcommand{\xv}{\mat{x}}
\newcommand{\yv}{\mat{y}}
\newcommand{\zv}{\mat{z}}
\newcommand{\row}{\mat{r}}
\newcommand{\zerom}{\mat{0}}
\newcommand{\Am}{\mat{A}}
\newcommand{\Bm}{\mat{B}}
\newcommand{\Cm}{\mat{C}}
\newcommand{\Dm}{\mat{D}}
\newcommand{\Em}{\mat{E}}
\newcommand{\Gm}{\mat{G}}
\newcommand{\Hm}{\mat{H}}
\renewcommand{\Im}{\mat{I}}
\newcommand{\Lm}{\mat{L}}
\newcommand{\Mm}{\mat{M}}
\newcommand{\Pm}{\mat{P}}
\newcommand{\Rm}{\mat{R}}
\newcommand{\Sm}{\mat{S}}
\newcommand{\pri}{{^\prime}}
\newcommand{\dpr}{^{\prime\prime}}
\tikzset{
  strike through/.style={
    postaction=decorate,
    decoration={
      markings,
      mark=at position 0.5 with {
        \draw[-] (-5pt,-5pt) -- (5pt, 5pt);
      }
    }
  }
}
\newcommand{\Iint}[2]{\lbrace #1.. #2\rbrace}
\newcommand{\Ich}{\check{I}}
\newcommand{\Jch}{\check{J}}
\newcommand{\Prob}[1]{\mathrm{Pr}\left( #1 \right) } 
\newcommand{\esp}{\mathbb{E}}
\newcommand{\Cc}{{\mathcal C}}
\newcommand{\cP}{{\mathcal P}}
\newcommand{\eqdef}{:=}
\newcommand{\sh}[2]{\mathbf{Sh}_{#1}\left(#2\right)}
\newcommand{\dual}[1]{#1^{\perp}}
\newcommand{\Om}[1]{\Omega \left( #1 \right)}
\newcommand\bibalias[2]{%
  \@namedef{bibali@#1}{#2}%
}
\newtoks\biba@toks
\newcommand\acite[2][]{%
  \biba@toks{\cite#1}%
  \def\biba@comma{}%
  \def\biba@all{}%
  \@for\biba@one:=#2\do{%
    \@ifundefined{bibali@\biba@one}{%
      \edef\biba@all{\biba@all\biba@comma\biba@one}%
    }{%
      \PackageInfo{bibalias}{%
        Replacing citation `\biba@one' with `\@nameuse{bibali@\biba@one}'
      }%
      \edef\biba@all{\biba@all\biba@comma\@nameuse{bibali@\biba@one}}%
    }%
    \def\biba@comma{,}%
  }%
  \edef\biba@tmp{\the\biba@toks{\biba@all}}%
  \biba@tmp
}
\newtheorem{assumption}{Assumption}
\newtheorem{modeling}{Modeling}
\newtheorem{fact}{Fact}
\crefname{modeling}{Modeling}{Modelings}
\newcommand{\repeattheorem}[1]{%
  \begingroup
  \renewcommand{\thetheorem}{\ref{#1}}%
  \expandafter\expandafter\expandafter\theorem
  \csname reptheorem@#1\endcsname
  \endtheorem
  \endgroup
}
\newcommand{\repeatproposition}[1]{%
  \begingroup
  \renewcommand{\theproposition}{\ref{#1}}%
  \expandafter\expandafter\expandafter\proposition
  \csname repproposition@#1\endcsname
  \endproposition
  \endgroup
}
\xdef\csname reptheorem@#1\endcsname{%
    \unexpanded\expandafter{\BODY}%
  }%
\unskip\label{#1}\endtheorem
\xdef\csname repproposition@#1\endcsname{%
    \unexpanded\expandafter{\BODY}%
  }%
\unskip\label{#1}\endproposition
\spnewtheorem*{theorem*}{Theorem}{normalshapebfseries}{itshape}
\title{Revisiting Algebraic Attacks on MinRank and on the Rank Decoding Problem}
\author{
  Magali Bardet\inst{1,2} \and
  Pierre Briaud\inst{1,3} \and 
  Maxime Bros\inst{4} \and
  Philippe Gaborit\inst{4} \and
  Jean-Pierre Tillich\inst{1}
}
\institute{
  Inria, 2 rue Simone Iff, 75012 Paris, France\
  \and
  LITIS, University of Rouen Normandie, France\
  \and
  Sorbonne Universit\'es, UPMC Univ Paris 06 \\
  \email{pierre.briaud@inria.fr}
  \and
  Univ. Limoges, CNRS, XLIM, UMR 7252, F-87000 Limoges, France\
  \email{maxime.bros@unilim.fr}
}
\date{}
\begin{document}
\normalem
\maketitle

\begin{abstract}
The Rank Decoding problem (RD) is at the core of 
rank-based cryptography. Cryptosystems such as ROLLO and RQC, 
which made it to the second round of the NIST 
Post-Quantum Standardization Process, 
as well as the Durandal signature scheme, 
rely on it or its variants. This problem can also be seen as a structured version of MinRank, which is ubiquitous in multivariate cryptography. Recently,  \cite{BBBGNRT20,BBCGPSTV20} proposed attacks based on two new algebraic modelings, namely the MaxMinors modeling which is specific to RD and the Support-Minors modeling which applies to MinRank in general. Both  improved significantly the complexity of algebraic attacks on these two problems. In the case of RD and contrarily to what was believed up to now, these new attacks were shown to be able to outperform combinatorial attacks and this even for very small field sizes. 

However, we prove here that the analysis performed in \cite{BBCGPSTV20} for one of these attacks which consists in mixing the MaxMinors modeling with the Support-Minors modeling to solve RD is too optimistic and leads to underestimate the overall complexity. This is done by exhibiting linear dependencies between these equations and by considering an $\fqm$ version of these modelings which turns out to be instrumental for getting a better understanding of both systems. Moreover, by working over $\Fqm$ rather than over $\ff{q}$, we are able to drastically reduce the number of variables in the system and we (i) still keep enough algebraic equations to be able to solve the system, (ii) are  able to analyze rigorously the complexity of our approach. This new approach may improve the older MaxMinors approach on RD from \cite{BBBGNRT20,BBCGPSTV20} for certain parameters. We also introduce a new hybrid approach on the Support-Minors system whose impact is much more general since it applies to any MinRank problem. This technique improves significantly the complexity of the Support-Minors approach for small to moderate field sizes. 

  \keywords{Post-quantum cryptography
    \and NIST-PQC candidates
    \and rank metric code-based cryptography
    \and algebraic attack.}
\end{abstract}

\section{Introduction}

\subsubsection{Rank Metric Code-based Cryptography.}
Code-based cryptography using the rank metric, rank-based cryptography for short, 
started 30 years ago with the GPT cryptosystem \cite{GPT91} based on Gabidulin codes \cite{G85}. 
These codes can be viewed as analogues of Reed-Solomon codes in the rank metric, 
where polynomials are replaced by linearized polynomials. 
However this proposal and its variants were attacked with the Overbeck attack \cite{O05}, much in the same way 
as McEliece schemes based on Reed-Solomon codes (or variants of them) have been attacked in 
\cite{SS92,CGGOT14}.

Still, these attacks really exploited the strong algebraic 
structure of Gabidulin codes and did not rule out obtaining a 
secure version of the McEliece cryptosystem for the rank metric as we will see. One of the nice features of this metric is that it allows to exploit, 
in a much better way than the Hamming metric, codes which are linear over a very large extension field $\fqm$. Indeed, assume that we could come up with a code family which is able to decode a linear number of errors in the code length $n$ and which would remain secure when used in a McEliece scheme. In the Hamming metric, the best algorithms for solving the decoding problem for a generic linear code are exponential in this regime in $n$, whereas they are exponential in $m\cdot n$ in the case of the rank metric. This would give cryptosystems with much smaller keysize in the rank metric case, which somehow mitigates the main drawback of the original McEliece proposal that is its large keysize. This dependency of the complexity exponent in the two parameters $m$ and $n$ also allows for much finer tuning of the parameters of such schemes.

A very significant step in this direction was made with the 
Low Rank Parity Check codes (LRPC) that were  introduced in \cite{GMRZ13}. 
This type of codes made it possible to build McEliece schemes that can be viewed 
as the rank metric analogue of NTRU in the Euclidean metric \cite{HPS98} or of the MDPC cryptosystem 
in the Hamming metric \cite{MTSB12}, where the trapdoor is given by small weight vectors which 
allow efficient decoding. Contrarily to the GPT cryptosystem, this gives a cryptosystem whose security really relies on decoding an unstructured linear code and on distinguishing codes with moderate weight codewords from random linear codes. It can be argued that this second problem is similar in nature to the first one and so we have in a sense a cryptosystem whose security relies solely on the difficulty of generic decoding in the rank metric. 
This approach led to the design of several cryptosystems: \acite{GMRZ13,GRSZ14,LAKE,LOCKER}, 
and in 2019, four rank-based schemes of this form \acite{Ouroboros-R,RQC,LAKE,LOCKER} 
made it to the Second Round of the NIST Post-Quantum Standardization Process and were later merged into \acite{ROLLO,RQC2}. 

At the time of these submissions, the combinatorial attacks \cite{OJ02,GRS16,AGHT17} were thought to be the most effective against these cryptosystems, especially for small values of $q$. However, it turned out later that algebraic attacks \cite{BBBGNRT20,BBCGPSTV20} 
could be improved a great deal and may be able to outperform the combinatorial attacks. 
This is the reason why these
candidates were not kept for the Third Round, even if NIST still encourages further research on rank-based cryptography \acite{NISTR2}.  A first motivation is that these schemes still offer an interesting gain in terms of public-key size due to the algebraic structure. Another one is that the use of rank metric for wider cryptographic applications remains to be explored, and a first challenging task would already be the design of a competitive code-based signature scheme. Early attempts \acite{RankSign} based on the hash-and-sign paradigm and on structural masking where broken \cite{DT18}. More recently, a promising approach, namely Durandal,  adapting the Schnorr-Lyubashevsky framework to the rank metric, was proposed \cite{ABGHZ19}. Its security proof relies on the hardness of two problems: the first one is 
the decoding problem in the rank metric with multiple instances sharing the same support (the so-called RSL problem), while the second one is a new assumption called the Product Spaces Subspaces Indistinguishability  problem. The RSL problem was introduced in \cite{GHPT17a_sv} and also studied in \cite{BB21}. It may become instrumental to build more efficient rank-based primitives as shown by the recent work \acite{AADGZ22,BBBG22}. Finally, a third type of approach is to rely on the famous Stern's Zero-Knowledge identification protocol \cite{S93}, which is turned into a signature scheme thanks to the Fiat-Shamir transform. The advantage of this technique is that it only relies on the hardness of decoding a random linear code: first, the security is well understood, and second one can use a seed to generate the public key. This method has already inspired a long sequence of optimizations and adaptations to the rank metric setting, see for instance \acite{GSZ11,BCGMM19,BGHM20}.

\subsubsection{Rank Decoding and MinRank Problems.}
Codes used in rank metric cryptography are linear codes over an extension field $\ff{q^m}$ of degree $m$ of $\ff{q}$. An $\ff{q^m}$-linear code of length $n$ is an $\Fqm$-linear subspace of $\Fqm^n$, but its codewords can also be viewed as matrices in $\ff{q}^{m \times n}$. Indeed, if $(\beta_1,\dots,\beta_m)$ is an $\ff{q}$-basis of $\ff{q^m}$, the word $\xv = (x_1,\dots,x_n) \in \ff{q^m}^{n}$ corresponds to the matrix $\Mat(\xv) = (X_{ij})_{i,j} \in \ff{q}^{m \times n}$, where $x_j = \beta_1X_{1j} + \dots + \beta_m X_{mj}$ for $j \in \Iint{1}{n}$. The weight of $\xv$ is then defined by using the underlying rank metric on $\ff{q}^{m \times n}$, namely
$\rw{\xv} \eqdef  \rank{(\Mat{(\xv)})}$, and it is also equal to the dimension of the \emph{support} $\text{Supp}(\xv) \eqdef  \langle x_1,\dots,x_n \rangle_{\ff{q}}$. Similarly to the Hamming metric, the main source of computational hardness for rank-based cryptosystems is a decoding problem. It is the decoding problem in rank metric restricted to $\Fqm$-linear codes, namely
\begin{problem}[$(m,n,k,r)$ Rank Decoding problem (RD)]\\ 
The Rank Decoding problem of parameters $(m,n,k,r)$ is given by\\
	\indent	\emph{Input}: an $\Fqm$-linear subspace $\Cc$ of
	$\Fqm^n$, an integer $r \in \NN$, and a vector $\yv \in \Fqm^n$ such that $\rw{\yv-\cv} \leq r$ for some $\cv \in \Cc$. \\
	\indent	\emph{Output}: $\cv \in \Cc$ and an \emph{error} $\ev \in \Fqm^n$  such that $\yv=\cv+\ev$
	and $\rw{\ev} \leq r$.
\noindent We call this an $(\yv,\Cc,r)$ instance of the RD problem.
\end{problem}
\begin{remark}
{From now on, we consider that the error $\ev$ is of maximal weight $r$.  This can be done without loss of generality,  since we can run the algebraic attacks which follow for increasing values of $r' \leq r$ and since the most costly part corresponds always to $r'=r$.}
\end{remark}

Given $\sv \in \ff{q^m}^{n-k}$ and $\mat{H}\in \ff{q^m}^{(n-k) \times n}$ a parity-check matrix of an $\ff{q^m}$-linear code $\Cc$, the \emph{syndrome} version, denoted by RSD for \emph{Rank Syndrome Decoding}, asks to find $\ev \in \ff{q^m}^n$ such that $\mat{H}\trsp{\ev} = \trsp{\sv}$ and $\rw{\ev} = r$, and it is equivalent to RD. Even if RD is not known to be NP-complete, there is a randomized reduction from RD to an NP-complete problem \cite{GZ14}, namely to decoding in the Hamming metric. An RD instance can also be viewed as a structured instance of the following inhomogeneous MinRank problem. 
\begin{problem}[Inhomogeneous $(m,n,K,r)$ MinRank problem]\\
  \label{problem_minrank0}
  The  MinRank problem with parameters $(m,n,K,r)$ is given by\\
  \indent\emph{Input}: an integer $r \in \NN$ and $K+1$ 
  matrices $\mat{M}_0,\mat{M}_1,\dots,\mat{M}_K \in \matRing{\ff{q}}{m}{n}$.\\
  \indent\emph{Output}: field elements $x_1,x_2,\dots,x_K \in \ff{q}$ 
  \begin{equation*}
  \rank{\left(\mat{M}_0+\sum_{i=1}^{K}x_i\Mm_i\right)} = r.
\end{equation*}
\end{problem}

More precisely, there exists a reduction from RD to the MinRank problem \cite{FLP08}. The latter was defined and proven NP-complete in \cite{BFS99}, and it is now ubiquitous in multivariate cryptography \cite{KS99,PCYTD15,CSV17,VBCPS19,B21,TPD21,BBCPSV21,B22}. In the cryptographically relevant regime, the current best known algorithms to solve it are algebraic attacks which all have 
exponential complexity.

\subsubsection{Solving RD.} First, note that owing to the aforementioned reduction  \cite{FLP08}, all the methods for solving MinRank can be applied to the RD problem.  However, a plain MinRank solver would not be the most suitable as it forgets the $\ff{q^m}$-linear structure inherent to RD. In particular, the first attacks specific to the RD problem were of combinatorial nature \cite{CS96}. They were significantly improved in \cite{OJ02} and further refined in \cite{GRS16,AGHT18}. These works can be viewed as the continuation of the former Goubin's kernel attack on generic MinRank \cite{GC00}, which consists of first guessing sufficiently many vectors in the kernel of the rank $r$ matrix and then solving a linear system. The considerable difference in the case of RD is that the success probability of this guess can be greatly increased thanks to the $\ff{q^m}$-linearity. Another way to solve RD is provided by algebraic attacks which are not plain MinRank attacks \cite{LP06a,GRS16}. These techniques were considered to be less efficient than the combinatorial ones for a long time, especially for small values of $q$. In particular, the parameters of the rank based NIST submissions \cite{ABDGHRTZ17,ABDGHRTZ17a,AABBBDGZ17} were chosen according to the best combinatorial attacks. However, a breakthrough paper \cite{BBBGNRT20}  showed how the  $\ff{q^m}$-linear structure of the problem could be used to devise a dedicated and more efficient algebraic attack based on the so-called MaxMinors modeling. This was further improved in \cite{BBCGPSTV20}, which also introduced another algebraic modeling, the so-called Support-Minors modeling. Support-Minors is a generic MinRank modeling but it can be combined with MaxMinors in order to solve the RD problem.  In particular, this thread of work contributed to significantly break the proposed parameters for ROLLO and RQC, and these rank-based schemes have not passed the Second Round of the NIST PQC competition.

\paragraph{The MaxMinors modeling \cite{BBBGNRT20,BBCGPSTV20}.}

The attack introduced  in \cite{BBBGNRT20} relies on the following observations
\begin{itemize}
\item a vector $\uv \in \fqm^n$ is of rank $r$ iff its entries generate a subspace of $\fqm$ of dimension $r$, say 
$\vsg{s_1,\dots,s_r}{\fq}$. In such a case, there exists $\Cm \in \fq^{r \times n}$ such that
$$
\uv = (s_1,\dots,s_r) \Cm.
$$
\item
Let $(\cv,\ev)$ be the solution to RD. There exists $s_1, \dots , s_r \in \fqm$ and $\Cm \in \fq^{r \times n}$ such that
$
\yv - \cv = (s_1,\dots,s_r) \Cm,
$ because $\yv- \cv=\ev$ is of rank $\leq r$. 
If we bring in a parity check matrix $\Hm_{\yv} \in \fqm^{(n-k-1)\times n}$ of the extended code $\mathcal C + \langle \yv\rangle$ then we have
$$
(s_1,\dots,s_r) \Cm \trsp{\Hm_{\yv}}=0.
$$
This implies that the $r \times (n-k-1)$ matrix $\Cm \trsp{\Hm_{\yv}}$ is not of full rank and that all its maximal minors are equal to $0$.
By using the Cauchy-Binet formula~\eqref{eq:Cauchy-Binet}, each of these maximal minors can be expressed as a linear combination of the maximal minors 
$c_T$ of the matrix $\Cm$. Here $c_T$ denotes the maximal minor equal to the determinant of the square submatrix of $\Cm$ whose column indexes belong to $T \subset \Iint{1}{n},~\#T = r$. 
\end{itemize}
From there one readily obtains:
\begin{modeling}[MM-$\Fqm$] \label{modeling:MMfqm}~
  \begin{align}
    \MaxMinors(\Cm\trsp{\Hm_{\yv}}) =    \left\lbrace P_J\eqdef \minor{\Cm\trsp{\Hm_{\yv}}}{\any,J}: J\subset\Iint{1}{n-k-1}, \# J = r \right\rbrace \tag{\MMfqm}\label{eq:MMfqm}
  \end{align}
  {\bf Unknowns:} $\textstyle{\binom{n}{r}}$ variables $c_T \eqdef |\Cm|_{\any,T}$, $T \subset \Iint{1}{n}$, $\#T=r$, searched over $\Fq$,\\
  {\bf Equations:} $\textstyle{\binom{n-k-1}{r}}$ equations $P_J=0$, $J\subset\Iint{1}{n-k-1}$, $\#J=r$ 
  viewed as linear equations over $\Fqm$ in the $c_T$'s.
\end{modeling}
{As these polynomials have coefficients in $\fqm$ while the $c_T$'s belong to $\fq$, a standard approach is to consider a system with equations over the small field with the same solutions over $\ff{q}$. This is formalized in \cite[Notation 2]{BBCGPSTV20} with an operation\footnote{a more canonical definition will be given in Section \ref{sec:preliminaries}} which associates to a system $\mathcal F \eqdef \lbrace f_1,\dots,f_M\rbrace\subset\ff{q^m}[z_1,\dots,z_N]$ with coefficients in $\fqm$ a second system
\begin{equation*}
\label{eq:unfoldintro}
\Unfold (\mathcal{F}) := \left\lbrace f_{i,j} : 1\le i \le m, 1\le j \le M\right\rbrace\in \ff{q}[\zv]^{M \cdot m},
\end{equation*}
such that for all $j \in \{1..M\}$ and $\zv \in \ff{q}^N$, $f_j(\zv) = 0 \Leftrightarrow \left( \forall i \in \{1..m\},~f_{i,j}(\zv) = 0 \right)$, and such that the variables involved are the same. Applying this procedure to MM-$\Fqm$ yields Modeling \ref{modeling:MMfq}, denoted MM-$\Fq$, which is the relevant one for the cryptographic attack:}
\begin{modeling}[MM-$\Fq$]\label{modeling:MMfq}~
	\begin{align}
	&    \Unfold(\MaxMinors(\Cm\trsp{\Hm_{\yv}})) = 
\left\lbrace  P_{i,J} : i\in\Iint{1}{m},~J\subset\Iint{1}{n-k-1},~\#J=r\right\rbrace \tag{\MMfq}\label{eq:MMfq}
	\end{align}
	{\bf Unknowns:} $\textstyle{\binom{n}{r}}$ variables $c_T \eqdef |\Cm|_{\any,T}$, $T \subset \Iint{1}{n}$, $\#T=r$, searched over $\Fq$,\\
	{\bf Equations:} $\textstyle{m \binom{n-k-1}{r}}$ equations $P_{i,J}=0$, which are linear over $\Fq$ in the $c_T$'s.
\end{modeling}
If $\textstyle{m \binom{n-k-1}{r} \geq \binom{n}{r}-1}$, the value of the $c_T$'s may be found by solving the linear system \MMfq. This is the so-called \emph{overdetermined} case in \cite{BBCGPSTV20}. Otherwise, in the \emph{underdetermined} case, one can adopt a form of hybrid approach by adding random linear constraints on the variables to obtain another linear system that can be solved. 

\paragraph{The Support-Minors modeling \cite{BBCGPSTV20}.} 
An alternative method in the {underdetermined}  case is to rely on the Support-Minors modeling which was introduced in \cite{BBCGPSTV20}. The Support-Minors modeling is a generic MinRank modeling which is not specific to the RD problem and which can be quite effective in a certain parameter range. In particular, it turned out to be instrumental for breaking 
the third round or alternate third round multivariate finalists Rainbow and GeMSS of the NIST competition \cite{B21,B22,TPD21,BBCPSV21}. Applied to the specific RD case, Support-Minors can be explained as follows. First, rewrite $\yv - \cv = (s_1,\dots,s_r) \Cm$ in a matrix form. On the one hand, the matrix
$\Mat\left((s_1,\dots,s_r) \Cm\right)$ is readily seen to be equal to $\Sm \Cm$ where $\Sm \eqdef \Mat{(s_1,\dots,s_r)}$ and therefore we have
\begin{equation}
\label{eq:fundamental}
\Mat\left( \yv+\xv \Gm \right) = \Sm \Cm,
\end{equation}
where $\Gm$ is a generator matrix of $\Cc$, $\xv=(x_1,\dots,x_k) \in \fqm^k$ and $-\cv = \xv \Gm$. On the other hand, Equation \eqref{eq:fundamental} implies that any row 
 $\row_i$ of $\Matb{\yv+\xv \Gm}$ is in the row space of $\Cm$ and therefore all the maximal minors of the 
matrix $\textstyle{\binom{ \row_i}{ \Cm}}$ are equal to $0$. Also, it is straightforward to check that row $\row_i$ in 
$\Matb{\yv+\xv \Gm}$ is a vector whose components are affine linear forms in the $x_{i,j}$'s which are the entries of $\Matb{\xv}$. By performing Laplace expansion of any such maximal minor with respect to the first row, this minor can be written as a bilinear polynomial in the $x_{i,j}$'s on the one hand and the maximal minors $c_T$ of $\Cm$ on the other hand. This gives a bilinear system \SMfq, which as explained above, is not specific to the RD problem: we obtain a similar system for generic MinRank whose 
$x_i$ variables (coefficients in the rank $\leq r$ linear combination) play the role of our $x_{i,j}$'s. Following the terminology of \cite{BBCGPSTV20}, we call these $x_{i,j}$ variables {\em linear} variables and the $c_T$'s the {\em minor} variables.
\begin{modeling}[\SMfq]\label{modeling:SMfq} Applied to an RD instance, the SM Modeling from~\cite{BBCGPSTV20} is the system
  \begin{align}
\left\lbrace Q_{i,I}\eqdef \minor{
    \begin{pmatrix}
      \rv_i\\\Cm
    \end{pmatrix}}{\any,I} : I\subset\Iint{1}{n}, \#I=r+1, i\in\Iint{1}{m}, \rv_i=\Mat(\yv+\xv\Gm)_{i,\any}
\right\rbrace    \tag{\SMfq}\label{eq:SMfq}
  \end{align}
{\bf Unknowns:} $\textstyle{\binom{n}{r}}$ variables $c_T$ searched over $\Fq$, $k \cdot m$ variables $x_{j,j'}$ searched over $\Fq$, $j \in \Iint{1}{m}$, $j' \in \Iint{1}{k}$,\\
{\bf Equations:} $\textstyle{m \binom{n}{r+1}}$ equations $Q_{i,I}=0$ which are affine bilinear polynomials over $\Fq$ in the $x_{j,j'}$'s and in the $c_T$'s.
\end{modeling}
 If there are more linearly independent equations than bilinear monomials, the system may be solved by linearization (i.e. by replacing the monomials by single variables and then obtaining the values of these variables from solving the resulting linear system). Otherwise, the authors propose a dedicated technique to solve at higher degree by multiplying the  \SMfq equations by monomials of degree $b-1$ in the linear variables to obtain equations of degree $b$ in the linear variables and degree $1$ in the $c_T$'s . This amounts to constructing the bi-degree $(b,1)$ Macaulay matrix $\Mm_b(\mSMfq)$ whose columns are indexed by the $\mathcal{M}_b$ bi-degree $(b,1)$ monomials and then to finding a non-trivial element in the right kernel of this matrix. This approach works if the rank of $\Mm_b(\mSMfq)$ is $|\mathcal{M}_b| - 1$, so that the solution space is one-dimensional and allows to recover the original solution to the MinRank problem. The complexity of the attack is then dominated by the one of solving the system at bi-degree $(b,1)$, and for this it can be beneficial to use the Wiedemann algorithm as the Macaulay matrix is sparse enough for large values of $b$.

\subsubsection*{Solving RD by combining MaxMinors and Support-Minors.}
Recall that in the particular RD case we obtained two algebraic systems involving the same $c_T$ variables, namely the MaxMinors system \MMfq and the Support-Minors system \SMfq . This suggests to combine both modelings by multiplying the MaxMinors equations  by degree $b$ monomials in the linear variables and the Support-Minors equations by degree $b-1$ monomials in the linear variables to get equations of bi-degree $(b,1)$. In \cite{BBCGPSTV20}, it was implicitly assumed that the MaxMinors and the Support-Minors systems behave independently at higher degree, namely   
$\rank(\Mm_b(\mSMMfq)) = \rank(\Mm_b(\mMMfq)) + \rank(\Mm_b(\mSMfq))$ when this number is smaller than $\mathcal{M}_b$ which is the number of bi-degree $(b,1)$ monomials.
Here   $\Mm_b(\mMMfq)$ and $\Mm_b(\mSMMfq)$ respectively denote the Macaulay matrices of the 
MaxMinors system multiplied by the monomials of degree $b$ in the linear variables and the vertical join of $\Mm_b(\mMMfq)$ and $\Mm_b(\mSMfq)$. While it is trivial to estimate $\rank(\Mm_b(\mMMfq))$ as the MaxMinors equations \MMfq are linear in the other block of variables, note that the value obtained in \cite{BBCGPSTV20} for $\rank(\Mm_b(\mSMfq))$ is based on much more involved combinatorial arguments and remains conjectural.

\subsubsection{Contributions.} Most of our work concerns the aforementioned combined approach on the Rank Decoding Problem but some of our results will also apply to the Support-Minors strategy of \cite{BBCGPSTV20} on non-structured MinRank instances. 

First, in this combined RD approach, we show that the implicitly assumed relation $\rank(\Mm_b(\mSMMfq)) = \rank(\Mm_b(\mMMfq)) + \rank(\Mm_b(\mSMfq))$ does not hold. 
Indeed, there are linear dependencies 
between the two systems: in particular, we will prove that the
MaxMinors equations and some multiples are included in the vector space
generated by the Support-Minors equations. We will prove this by considering the ``$\fqm$-version'' of both systems. For MaxMinors, this is nothing but the original MaxMinors system \MMfqm with coefficients over $\fqm$. 
For the $\fqm$-Support-Minors modeling, this $\fqm$-version comes from a slight variation of the argument used 
in \cite{BBCGPSTV20} for obtaining the Support-Minors modeling. Instead of considering the matrix version of 
\begin{equation}
\label{eq:veryfundamental}
 \yv+\xv \Gm = (s_1,\dots,s_r) \Cm,
\end{equation}
we can directly use this equation to argue that the vector $ \yv+\xv \Gm$ is in the row space of $\Cm$, which in turn implies that all the maximal minors of the matrix
$\textstyle{\binom{\yv+\xv \Gm}{\Cm}}$ are equal to $0$. By performing Laplace expansion of these minors according to the first row, we obtain in this way $\textstyle{\binom{n}{r+1}}$ equations which are bilinear in the entries $x_i$ of $\xv$ (we still call them the {\em linear} variables) and in the maximal minors $c_T$ of $\Cm$:
\begin{modeling}[SM-$\Fqm$]\label{modeling:SMfqm}~
  \begin{align}
    \left\lbrace Q_I\eqdef \minor{
    \begin{pmatrix}
      \xv\Gm+ \yv\\\Cm
    \end{pmatrix}
    }{\any,I} : I\subset\Iint{1}{n}, \#I=r+1 \right\rbrace\tag{\SMfqm}\label{eq:SMfqm}
  \end{align}
  {\bf Unknowns:} $\textstyle{\binom{n}{r}}$ variables $c_T$ searched over $\Fq$, $k$ variables $x_1, \dots, x_k$ searched over $ \Fqm$,\\
  {\bf Equations:} $\textstyle{\binom{n}{r+1}}$ equations $Q_I=0$ for $I\subset\Iint{1}n$, $\#I=r+1$, viewed as affine bilinear equations over $\Fqm$ in the $x_i$'s on the one hand and in the $c_T$'s on the other hand.
\end{modeling}
This \SMfqm system presents the advantage of being much more compact than the original Support-Minors modeling: the number of linear variables is divided by $m$ (but the unknowns are now in $\fqm$) and the number of equations is also divided by $m$. Also, this reduced system will be very handy to study the aforementioned linear dependencies, see~\cref{sec:two}:
\begin{itemize}
\item[(i)] it is readily seen that the Support-Minors equations are the result of the Unfold operation applied to these \SMfqm equations;
\item[(ii)] it is easier to exhibit linear dependencies between the equations in \MMfqm and \SMfqm, which in turn yield 
linear dependencies between the MaxMinors and the Support-Minors equations over $\ff{q}$.
\end{itemize}
This is not the only advantage in considering \SMfqm instead of the original Support-Minors equations. It will namely be easier to understand the linear dependencies in the \SMfqm equations themselves (which also exist as we will show). Moreover, the very fact that the number of linear variables has shrunk a great deal suggests that instead of using the linearization strategy followed in \cite{BBCGPSTV20}, it might be much more favorable to
\begin{itemize}
\item[(i)] use the linear equations linking the minor variables $c_T$
  from unfolding \MMfqm (the \MMfq linear system)
  equations to substitute for some of them in \SMfqm and decrease the number of minor variables in  it to obtain a new bilinear system \SMpfqm;
\item [(ii)] multiply these equations by monomials of degree $b-1$ in the
  linear variables $x_i$ to obtain a new bi-degree $(b,1)$ system with a reduced number of bi-degree $(b,1)$ monomials and
  choose $b$ large enough so that the linearizing strategy is able to
  recover the values of these bi-degree $(b,1)$ monomials.
\end{itemize}
We call this the ``attack over $\ff{q^m}$''and we describe it in~\cref{sec:solvingfqm}, together with the count of the number of equations.
\begin{modeling}[\SMpfqm over $\ff{q^m}$]\label{modeling:final_fqm}
   \begin{align}
\mSMpfqm \eqdef \mSMfqm \mod(\mMMfq) \tag{\SMpfqm}\label{eq:SMpfqm}
   \end{align}
{\bf Unknowns:} $\textstyle{\binom{n}{r} - m\binom{n-k-1}{r}}$ variables $c_T$ searched over $ \Fq$, and $k$ unknowns $x_1,\dots,x_k$ searched over $\Fqm$,\\
{\bf Equations:} $\textstyle{\binom{n}{r+1} - \binom{n-k-1}{r+1} - (k+1) \binom{n-k-1}{r}}$ equations of the form $\widetilde{Q_I} = 0$ with 
$I \subset \Iint{1}{n}$, $\#I=r+1$, $\#(I \cap \Iint{1}{k+1} \geq 2)$, where 
$\widetilde{Q_I} =  Q_I\mod (\mMMfq)$
is the $Q_I$ equation with $c_T$ variables removed using \MMfq.
\end{modeling}

Second, we show how this ``attack over $\ff{q^m}$'' and more generally any Support-Minors based MinRank attack may benefit from a hybrid approach similar to the one presented in \cite[\S4.3]{BBCGPSTV20} on MaxMinors. There, it was used to decrease the number of minor variables. However, we will show that in our case where we consider systems with minor and linear variables, this hybrid technique has the additional benefit of decreasing the number of linear variables. Roughly speaking, our approach is to associate to a given instance of MinRank (resp. RD) $q^{a\cdot r}$ new MinRank instances (resp. RD instances) with smaller parameters for which we know that one of them has its rank $r$ matrix $\Mm$ equal to zero on a fixed set of $a \geq 0$ columns. On any of these instances and by starting from the initial modeling, we hope to find a solution of this particular form by (i) writing that $\textstyle{\binom{n}{r}-\binom{n-a}{r}}$ minors  $c_T$ should be equal to $0$, namely all those that involve one of these $a$ columns (ii) writing $a \cdot m$ linear relations between the linear variables which correspond to the $a \cdot m$ zero entries of $\Mm$. All in all, we may attack a MinRank problem of  parameters $(m,n,K,r)$ by performing $q^{a\cdot r}$ attacks on smaller instances with parameters $(m,n-a,K- a \cdot m,r)$ and such that only one of them has a solution. This is much more efficient than the straightforward hybrid approach suggested in \cite[\S 5.5]{BBCGPSTV20} which consists in fixing a few linear variables and which results only at best in a marginal gain in the complexity. Here, the gain in complexity is much more significant as shown in Subsection \ref{ss:numerical}. On a deeper level, our approach also allows to interpolate between the former combinatorial attacks \cite{GC00} and the algebraic attacks (in particular the plain Support-Minors attack).

\section{Notation and preliminaries}
\label{sec:preliminaries}
Vectors are denoted by lower case boldface letters such as
$\xv,~\ev$ and matrices by upper case letters $\Gm,~ \Mm$. The all-zero vector of length $\ell$ is denoted by $\zerov_\ell$. The $j$-th
coordinate of a vector $\xv$ is denoted by $x_{j}$ and the submatrix of a
matrix $\Mm$ formed from the rows in $I$ and columns in $J$ is denoted by
$\Mm_{I,J}$. When $I$ (resp. $J$) consists of all the rows
(resp. columns), we may use the notation $\Mm_{\any, J}$
(resp. $\Mm_{I,\any}$). Similarly, we simplify $\Mm_{i,\any} = \Mm_{\lbrace i\rbrace, \any}$ (resp. $\Mm_{\any,j} = \Mm_{\any,\lbrace j \rbrace}$) for the $i$-th row of $\Mm$ (resp. $j$-th
column of $\Mm$) and
$\Mm_{i,j}= \Mm_{\lbrace i\rbrace, \lbrace j\rbrace} $ for the entry
in row $i$ and column $j$. Finally, $\left\vert \Mm \right\vert$ is the determinant of a matrix $\Mm$, $\left\vert \Mm \right\vert_{I,J}$ is the determinant of the submatrix $\Mm_{I,J}$ and $\left\vert \Mm \right\vert_{\any,J}$ the one of $\Mm_{\any, J}$.

We will intensively use the Cauchy-Binet formula that expresses the determinant of the product of two matrices 
$A\in\mathbb K^{r\times n}$ and $B\in\mathbb K^{n\times r}$ as
\begin{equation}
\label{eq:Cauchy-Binet}
\textstyle{\minor{AB}{}=\sum_{T\subset\Iint1n, \#T=r} \minor{A}{\any,T}\minor{B}{T,\any}}.
\end{equation}

 The notation $\Iint{1}n$ stands for the set of integers from $1$ to $n$, and for any subset $J\subset\Iint{k+1}n$, we denote by $J-k$ the set $J-k = \lbrace j-k : j \in J\rbrace \subset\Iint{1}{n-k}$.

For $q$  a prime power and $m \geq 1$ an integer, let $\ff{q}$ be the finite field
with $q$ elements and let $\ff{q^m}$ be the extension of $\ff{q}$ of degree
$m$. For $x\in\ff{q^m}$ and $0\le \ell \le m-1$, we write
$\pow{x}{\ell} \eqdef x^{q^{\ell}}$ for the $\ell$-th Frobenius iterate of $x$, and this notation is extended to
matrices component by component, namely $\pow{\Mm}{\ell} \eqdef
\begin{pmatrix}
\pow{\Mm_{i,j}}{\ell}
\end{pmatrix}_{i,j}$.
We also make use of the trace operator which is the $\fq$-linear mapping from $\fqm$ to $\fq$ defined by
\[
\tr(x) \eqdef x + x^q + \dots + x^{q^{m-1}} =\sum_{\ell=0}^{m-1} \pow{x}{\ell}.
\]
In the whole paper, we consider a fixed basis $\mat\beta \eqdef (\beta_1,\dots,\beta_m)$ of $\mathbb F_{q^m}$
over $\mathbb F_q$. The {\em dual basis} $\mat\beta^{\star} \eqdef (\beta_1^\star,\dots,\beta_m^\star)$ of $\mat\beta$ is defined by
\[\tr(\beta_i\beta_j^\star) =
 \begin{cases}
   1 & \text{ if } i=j\\
   0 & \text{ otherwise}
 \end{cases}.\]
  Note that for any decomposition  in 
  $\mat{\beta}$ of the form $\textstyle{x = \sum_{i=1}^{m} x_i \beta_i} \in \ff{q^m}$ and any $i \in \Iint{1}{m}$, we can recover
 \begin{equation}\label{eq:dual}
 \tr(\beta_i^{\star} x) = x_i.
 \end{equation}
For a vector $\xv=(x_1,\dots,x_n) \in \fqm^n$ we denote by $\tr(\xv)$ the vector $(\tr(x_i))_{1 \leq i \leq n}$ where the trace is applied componentwise, and for any matrix $\Mm\in\fqm^{b\times c}$ we denote by $\tr(\Mm)=(\tr(\Mm_{i,j}))_{i,j}$. It will be helpful to notice that, thanks to the linearity of $\tr$ over $\fq$, 
\begin{align}
\label{eq:row}
  \tr(\beta_i^\star \xv) &= \Mat(\xv)_{i,\any} & \forall i \in \Iint{1}{m},\\
  \label{eq:trlinearity}
  \tr(\Cm\Mm) &= \Cm\tr(\Mm) & \text{ if } \Cm\in\fq^{a\times b}, \Mm\in\fqm^{b\times c}.
\end{align}

When looking for solutions of a polynomial system in $\ff{q}$ with coefficients in $\fqm$, it will be helpful to notice that for
$f(\zv) \in \fqm[z_1,\dots,z_N]$ and  $\xv=(x_1,\dots,x_N) \in \fq^N$, we have:
\begin{equation}
f(x_1,\dots,x_N)=0 \Longleftrightarrow \forall i \in \Iint{1}{m},~\tr(\beta_i^\star f(\xv)) =0.
\end{equation}
This motivates to define the ``unfolding'' operation which associates to an algebraic system $\mathcal F \eqdef \lbrace f_1,\dots,f_M\rbrace\subset\ff{q^m}[z_1,\dots,z_N]$ with coefficients in $\fqm$ an equivalent 
algebraic system over $\fq$ which defines the same variety over $\fq$. We call it the 
{\em associated unfolded system}:
\begin{align}
  \label{eq:unfold}
  \Unfold\left(\lbrace f_1,\dots,f_M\rbrace\right) &\eqdef \left\lbrace \tr(\beta_i^\star f_j) \mod I_q : 1\le i \le m, 1\le j \le M\right\rbrace\in \ff{q}[\zv]^{M \cdot m},
\end{align}
where we reduce the polynomials modulo the field equations, i.e. $I_q \eqdef \langle z_1^q-z_1,\dots, z_N^q-z_N\rangle$.
For one single polynomial $\textstyle{f(\zv) = \sum_{\mat \alpha\in\mathbb N^N} a_{\mat \alpha} \zv^{\mat
  \alpha}\in \ff{q^m}[\zv]}$, this reduction process reads
\begin{equation}\label{eq:trmodIq}
  \tr(\beta_i^\star f(\zv)) \mod I_q = \sum_{\mat \alpha\in\mathbb N^N} \tr(\beta_i^\star a_{\mat \alpha}) \zv^{\mat \alpha}
  \in \ff{q}[\zv].
\end{equation}
In other words, this results in applying the function $x \mapsto \tr(\beta_i^\star x)$ to each coefficient of the polynomial.

It is clear that the solutions to $\mathcal F$ in $\ff{q}^N$ are exactly the
solutions to $\Unfold(\mathcal F)$ in $\ff{q}^N$ and that any solution to
$\Unfold(\mathcal F)$ in any extension field of $\ff{q}$ is a solution to $\mathcal F$. However, note that it may be the case that $\mathcal F$ has more solutions than $\Unfold(\mathcal F)$ in some extension
field. \footnote{For instance in $\ff{q^2}$, $f=\beta_1z_1+\beta_2z_2$ admits all multiples of $(\beta_2/\beta_1,1)$ as solution, whereas $\Unfold(f) = \lbrace z_1,z_2\rbrace$ admits only $(0,0)$ as a solution in the algebraic closure of $\ff{q^2}$.}

We refer to \cite{CLO15} for basics on polynomial systems and Gröbner basis computation.
For the different results in the paper, we consider a particular
monomial ordering on our two sets of variables $x_1,\dots,x_k$ and
$c_T$'s for any subset $T$ of size $r$. The $c_T$'s are ordered with a
reverse lexicographical order according to $T$: $c_{T'}>c_{T}$ if
$t'_j=t_j$ for $j<j_0$ and $t'_{j_0}>t_{j_0}$ where
$T=\lbrace t_1<\dots<t_r\rbrace$ and
$ T'=\lbrace t'_1<\dots<t'_r\rbrace$. We then choose a grevlex (graded
reverse lexicographical) monomial ordering
$x_1>\dots>x_k>c_T$. Finally, we denote by $\LT(f)$ the leading term
of a polynomial $f$ with respect to this term order, and
$\NF(f,\mathcal G)$ the normal form of a polynomial $f$ with respect to
a system $\mathcal G$.

\section{MaxMinors and Support-Minors systems for RD instances}\label{sec:two}

In this section, we analyse the two RD modelings over $\ff{q^m}$ which take advantage of the underlying extension field structure, namely the MaxMinors (\MMfqm) and the Support-Minors (\SMfqm) systems.  

The \eqref{eq:MMfqm} system was already described in~\cite{BBBGNRT20,BBCGPSTV20} and recalled in the introduction.
The particular form of the \MMfqm polynomials $P_J$ as linear polynomials comes from the fact that these $P_J$'s can be expressed in terms of the maximal minors of $\Cm$ by using  the Cauchy-Binet formula~\eqref{eq:Cauchy-Binet}. Actually, we also use implicitly the Pl\"ucker coordinates associated to the vector space generated by the rows of $\Cm$ by defining new variables $c_T=\minor{\Cm}{\any,T}$, see~\cite[p.6]{BV88}. For $N=\textstyle{\binom{n}{r}}-1$ and $\mathbb{P}^N(\ff{q}) = \mathbb{P}(\ff{q}^{N+1})$ the projective space, the Pl\"ucker map is defined by
\begin{align*}
  p :  \lbrace \mathcal W \subset  \ff{q}^n : \dim(\mathcal W)=r\rbrace &\to \mathbb P^N(\ff{q}) & \\
  \Cm & \mapsto  (c_T)_{T\subset\Iint{1}{n}, \#T = r}
\end{align*}
where $\Cm$ is any matrix whose rows generate the vector space
$\mathcal W$. The map is well defined: any other generating matrix of $\mathcal W$ can be written $\Am\Cm$ for some invertible matrix $\Am\in GL(r,\ff{q})$, and the image $p(\Am\Cm)=\det(\Am)(c_T)_T$ is the same projective point as $(c_T)_T$. Moreover, the map is injective, and given the values of all maximal minors of a matrix it is easy to reconstruct an equivalent matrix (up to the multiplication by an invertible $\Am$) that has the same values for the minors  {(see~\cite[p.7]{BV88} for instance)}.

In our algebraic system, introducing such coordinates brings the benefit of reducing the number of solutions: for a given RD solution, there are several solutions {$\Cm \in \ff{q}^{r \times n}$} to the initial equation \eqref{eq:veryfundamental} but there are unique Plücker coordinates.
As already pointed out in \cite{BBCGPSTV20}, it is also extremely
beneficial for the computation to replace polynomials
$\minor{\Cm}{\any,T}$ with $r!$ terms of degree $r$ in the entries of $\Cm$ by single variables $c_T$'s in $\ff{q}$. Our second set of polynomials, namely the \eqref{eq:SMfqm} system, was also described in the introduction. The particular bilinear shape of these polynomials in the linear and in the minor variables follows by applying Laplace expansion along the first row $\xv \Gm +\yv$ of
$\textstyle{\binom{\xv\Gm+\yv}{\Cm}}$. Recall also that these minor
variables $c_T$ are searched over $\ff{q}$ while the linear variables $x_j$ are
searched over $\ff{q^m}$. In particular, as the \MMfqm polynomials are over $\ff{q^m}$ but linear in these $c_T$ variables, it is possible to generate $m$ times more
linear polynomials in the same variables by forming the unfolded
system \MMfq $= \Unfold$(\MMfqm) as already explained in~\cref{sec:preliminaries}. While these \MMfqm polynomials are proven to be linearly independent in \cite{BBCGPSTV20}, it is only conjectured that the resulting
\MMfq polynomials are linearly independent with overwhelming probability.

In~\cref{sec:modelingfqm}, we show that the two systems over
$\ff{q^m}$ described above are not independent: the \MMfqm polynomials are actually
included in \SMfqm; thus, adding the \MMfq polynomials to the \SMfq
system does not help to solve RD in the underdetermined case as stated in \cite{BBCGPSTV20}. Also, \SMfqm is an interesting modeling in itself to attack the RD
problem as it consists of more compact polynomials over the extension
field $\ff{q^m}$. Moreover, we are able to
formally prove the linear independence of these polynomials and more generally the exact dimension
of the vector space generated by them at each bi-degree $(b,1)$ for any $b \geq 1$, which is clearly the key quantity to evaluate the cost of such an attack. 

However, we show that it is not possible to solve the system by
using only these polynomials over $\ff{q^m}$, even at high bi-degree
$(b,1)$. Finally, note that it is also possible to
unfold the \SMfqm polynomials over $\ff{q}$ but at the cost of
multiplying the number of linear variables by a factor $m$ as we also
need to express each $\textstyle{x_j = \sum_{i=1}^m x_{i,j}\beta_i}$
in $\ff{q^m}$ as $m$ times more variables over $\ff{q}$. In~\cref{sec:modelingfq}, we show that the result of this operation is nothing more than the system \eqref{eq:SMfq} which is the Support Minors Modeling of \cite{BBCGPSTV20} applied to an RD instance, namely
\SMfq$=\Unfold$(\SMfqm). In~\cref{prop:LTQiI}, we also give a proof for the number of linearly independent polynomials in \SMfq that are not in \MMfq and which can be seen as the extra information brought by Support-Minors.

For the sake of clarity, most of the proofs are postponed
in~\cref{sec:proofs}.

\subsection{MaxMinors and Support-Minors  modelings over $\ff{q^m}$.}
\label{sec:modelingfqm}
In what follows, we always consider RD instances with a unique solution and whose rank weight is exactly $r$ instead of at most $ r$ (we may assume this, as trying 
all the weights smaller than $r$ adds at most a polynomial factor in the total complexity). Let $\Gm \in \ff{q^m}^{k \times n}$ be a full-rank generator matrix of
a linear code $\mathcal C$ of length $n$ and dimension $k$ over
$\mathbb F_{q^m}$, and let $\yv\in\mathbb F_{q^m}^n$ be the received
word affected by an error of weight $r$. With our assumption, the decoding problem amounts to
finding the unique codeword $\xv\Gm$ such that the weight of $\xv\Gm + \yv$ is
$r$. 

In this section, we analyze the link between the {\MMfqm modeling~\eqref{modeling:MMfqm}, consisting of polynomials $\textstyle{P_J=\minor{\Cm\trsp{\Hm_{\yv}}}{\any,J}}$, and the \SMfqm modeling~\eqref{modeling:SMfqm}, consisting of polynomials $Q_I=\minor{\textstyle{\binom{\xv\Gm+ \yv}{\Cm}}}{\any,I}$.} To this end, we first separate the polynomials from both systems into different sets by
defining for nonnegative integers $s$ and $i \in \Iint{1}{k}$:
\begin{align*}
 \mathcal Q_{s} &=  \lbrace Q_{I} : I \subset \Iint1n, ~\#I=r+1,~\#(I\cap \Iint1{k+1})  = s \rbrace, \\
  \mathcal Q_{\ge s} &=   \lbrace Q_I : I\subset\Iint{1}{n},~\#I=r+1,~ \#(I\cap\Iint{1}{k+1})\ge s\rbrace,\\
  \mathcal P &=  \lbrace P_J : J\subset\Iint{1}{n-k-1},~\#J=r\rbrace,\\
x_i  \mathcal P &\eqdef\lbrace x_iP: P\in\mathcal P\rbrace.  
\end{align*}
We are going to prove the following relations, where $\langle \cdot \rangle_{\ff{q}}$ means the vector space generated over $\ff{q}$:
\begin{align*}
  \mathcal Q_0 &\subset \langle \mathcal Q_1 ,\mathcal Q_{\ge 2}\rangle_{\ff{q}}   \tag{\cref{prop:Q0}}\\
  \langle \mathcal P, x_i\mathcal P : i \in \Iint{1}{k}, \mathcal Q_{\ge 2}\rangle_{\ff{q}} & =  \langle \mathcal Q_1 ,\mathcal Q_{\ge 2}\rangle_{\ff{q}}  \tag{\cref{prop:Q1}}\\
  \mathcal P, x_i\mathcal P : i \in \Iint{1}{k}, \mathcal Q_{\ge 2} & \text{ are linearly independent over } \ff{q} & \tag{ \cref{prop:LTQI}}
\end{align*}

The consequence is that if we linearize the (affine) \SMfqm system, we get
several reductions to zero and also $\textstyle\binom{n-k-1}{r}$ degree
falls\footnote{for affine systems, \emph{degree falls} correspond to linear combinations between polynomials of a given degree that yield nonzero polynomials of smaller degree.} that give the $P_J$'s polynomials. If we then eliminate $c_T$ variables using those linear polynomials, we get new reductions to zero which correspond to the $x_iP_J$'s. More generally, \cref{prop:Nb} tackles the augmented bi-degree $(b,1)$ case by giving the number of linearly
independent $Q_I$ polynomials for any $b \geq 1$ and without any particular assumption. For all these propositions, it will be helpful to notice that
\begin{fact}\label{lemma:generality}   \label{lemma:Hsystematic}
  The RD problem is equivalent to a problem where the code $\mathcal{C}$ has a generator matrix $\Gm$ in
  systematic form, i.e.
  $\Gm= \begin{pmatrix} \Im_k & * \end{pmatrix}$, where
  $\yv = \begin{pmatrix} \zerom_k & 1 & \any
  \end{pmatrix}$  and {where the extended code $\mathcal C + \langle \yv\rangle$ has a parity-check matrix $\Hm_{\yv} $ in systematic form, i.e., $\Hm_{\yv} =
  \begin{pmatrix}
    \any & \Im_{n-k-1}
  \end{pmatrix}$}.  Then,
  $\Hm \eqdef \textstyle{\binom{ \Hm_{\yv}}{\hv}}$ is a parity-check
  matrix for $\mathcal C$ for a vector $\hv =
  \begin{pmatrix}
    \any & 1 & \zerom_{n-k-1}
  \end{pmatrix}$ lying in the dual $\dual{\mathcal C}$. We have $\yv\trsp{\hv}=1$.
\end{fact}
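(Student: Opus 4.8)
The plan is to obtain the three normalizations — systematic $\Gm$, the prescribed shape of $\yv$, and systematic $\Hm_{\yv}$ — by a short chain of transformations, each of which turns an RD instance into an \emph{equivalent} one, i.e.\ one from whose solution $(\cv,\ev)$ one recovers a solution of the original instance with the same error weight $\rw{\ev}$. The relevant admissible moves are: permuting coordinates of $\Gm$ and $\yv$ simultaneously (this leaves the rank weight of any vector unchanged, hence preserves the solution up to the same permutation); replacing $\Gm$ by $\Am\Gm$ for $\Am\in GL_k(\ff{q^m})$ (this does not change $\mathcal C$ at all); replacing $\yv$ by $\yv-\cv_0$ for $\cv_0\in\mathcal C$ (the codeword part of the solution shifts by $\cv_0$, the error is unchanged); and rescaling $\yv\leftarrow\alpha^{-1}\yv$ for $\alpha\in\ff{q^m}\setminus\{0\}$ (valid because $\mathcal C$ is $\ff{q^m}$-linear, so $\alpha^{-1}\cv\in\mathcal C$, and $\rw{\alpha^{-1}\ev}=\rw{\ev}$).

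First I would put $\Gm$ in systematic form: since $\Gm$ has rank $k$, some $k$ of its columns are $\ff{q^m}$-independent, so after the corresponding column permutation the first $k\times k$ block of $\Gm$ is invertible, and left-multiplying by its inverse gives $\Gm=(\Im_k\mid\any)$. Writing $\yv=(\yv_1\mid\yv_2)$ with $\yv_1\in\ff{q^m}^k$, I would then replace $\yv$ by $\yv-\yv_1\Gm\in\yv+\mathcal C$, which makes the first $k$ coordinates of $\yv$ vanish. Assuming $r\ge 1$ (the case $r=0$ being trivial), $\yv\notin\mathcal C$, so $\yv=(\zerom_k\mid\yv')$ with $\yv'\neq\zerov$; swapping two columns among the last $n-k$ — which keeps $\Gm=(\Im_k\mid\any)$ and keeps the first $k$ entries of $\yv$ zero — we may assume the $(k+1)$-th coordinate of $\yv$ is some nonzero $\alpha$, and rescaling $\yv\leftarrow\alpha^{-1}\yv$ finally yields $\yv=(\zerom_k\mid 1\mid\any)$.

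For the parity-check matrices, observe that after these steps $\binom{\Gm}{\yv}$ has its first $k+1$ columns forming an invertible (unitriangular) matrix and generates the extended code $\mathcal C+\langle\yv\rangle$, which has dimension $k+1$ since $\yv\notin\mathcal C$; hence $\mathcal C+\langle\yv\rangle$ admits a generator matrix in systematic form $(\Im_{k+1}\mid\any)$ and therefore a parity-check matrix $\Hm_{\yv}=(\any\mid\Im_{n-k-1})$. To produce $\hv$, I would solve $\Gm\trsp{\hv}=\zerov$ under the constraints that the last $n-k-1$ entries of $\hv$ vanish and its $(k+1)$-th entry is $1$; using $\Gm=(\Im_k\mid\any)$, the first $k$ entries are then forced, namely $h_i=-\Gm_{i,k+1}$ for $i\in\Iint{1}{k}$, so a unique such $\hv=(\any\mid 1\mid\zerom_{n-k-1})\in\dual{\mathcal C}$ exists. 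A direct computation gives $\yv\trsp{\hv}=1$: the only surviving term is the product of the two $(k+1)$-th entries, all other entries of $\yv$ in positions $1,\dots,k$ and of $\hv$ in positions $k+2,\dots,n$ being zero. Since $\yv\trsp{\hv}=1\neq 0$, $\hv$ is not orthogonal to $\yv$, so $\hv\notin\dual{(\mathcal C+\langle\yv\rangle)}$, and adjoining $\hv$ to the $n-k-1$ rows of $\Hm_{\yv}$ gives $n-k$ linearly independent elements of the $(n-k)$-dimensional space $\dual{\mathcal C}$; hence $\Hm=\binom{\Hm_{\yv}}{\hv}$ is a parity-check matrix of $\mathcal C$.

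None of the individual steps is deep; the point requiring care is the bookkeeping — verifying that each transformation really does preserve equivalence of RD instances with the error weight unchanged, and that the column permutations used to normalize $\yv$ do not destroy the systematic form of $\Gm$ already obtained (this is why those permutations are confined to the last $n-k$ columns). The one place where the structure of the problem genuinely intervenes is the rescaling $\yv\leftarrow\alpha^{-1}\yv$, which is legitimate precisely because $\mathcal C$ is linear over the large field $\ff{q^m}$ and not merely over $\fq$.
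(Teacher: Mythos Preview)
Your proof is correct and follows essentially the same route as the paper: the same sequence of coordinate permutations, subtraction of a codeword, and $\ff{q^m}$-scaling to normalize $\Gm$ and $\yv$, followed by the systematic $\Hm_{\yv}$ from the extended code. Your construction of $\hv$ is slightly more direct --- you solve $\Gm\trsp{\hv}=\zerov$ under the shape constraints and read off $h_i=-\Gm_{i,k+1}$ explicitly --- whereas the paper starts from an arbitrary $\hv\in\dual{\mathcal C}$ independent of the rows of $\Hm_{\yv}$, reduces it modulo those rows to kill the last $n-k-1$ entries, and then rescales; the two constructions yield the same vector.
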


\begin{proof}
Up to a permutation of the coordinates, we can assume
that $\Gm$ is in systematic form $\Gm= \begin{pmatrix}
  \Im_k & * \end{pmatrix}$, and up to the addition of an
element in $\mathcal C$ that $\yv=(\zerom_k \;\any)$. As $\yv$
contains an error of weight $r$, it is non-zero, so that up to a
permutation of the coordinates of the code and up to the
multiplication by a constant in $\ff{q^m}$, we assume that
$\yv$ has the given shape $\yv = \begin{pmatrix} \zerom_k & 1 & \any
\end{pmatrix}$. Now, if $\widetilde{\Gm_{\yv}} =
\begin{pmatrix}
\Im_{k+1} & \Am
\end{pmatrix}$   is a generator matrix of $\mathcal C_{\yv}$ in systematic
form, then $\Hm_{\yv} \eqdef 
\begin{pmatrix}
-\trsp{\Am} & \Im_{n-k-1}
\end{pmatrix}$ is suitable. By considering an $\hv$ linearly independent from the rows of $\Hm_{\yv}$ and such that $\yv\trsp{\hv}\ne 0$, any linear combination between $\hv$ and the rows of $\Hm_{\yv}$ still satisfies the same properties. Therefore, we may assume that $\hv =
\begin{pmatrix}
\any & \zerom_{n-k-1}
\end{pmatrix}$, and moreover we have $\yv\trsp{\hv} = h_{k+1}\ne 0$. Thus, the vector $ h_{k+1}^{-1} \hv$ is indeed of the form
$\begin{pmatrix}
\any & 1 & \zerom_{n-k-1}
\end{pmatrix}$.
\qed
\end{proof}

\begin{repproposition}{prop:Q0}
  The polynomials in $\mathcal Q_0$ can be obtained as linear
  combinations between the polynomials in $\mathcal Q_{\ge 1}$:
  \begin{align}\label{eq:Q0}
 Q_{T+k+1} &= - \sum_{Q_I\in \mathcal Q_{\ge 1} } \minor{\Hm_{\yv}}{T,I}Q_I, & \forall T\subset\Iint{1}{n-k-1},~\# T = r+1.
  \end{align}
\end{repproposition}

\begin{proof}
  This comes from the relations
$ \minor{    \begin{pmatrix}
        \xv\Gm+\yv\\\Cm
      \end{pmatrix}
    \trsp{\Hm_{\yv}}}{\any,T}=0$, see~\cref{proof:Q0} for details. \qed
\end{proof}

\begin{repproposition}{prop:LTQI}  
  The polynomials in $\mathcal P \cup \mathcal Q_{\ge 2}$ are
  linearly independent, as
  \begin{align*}
    \LT(P_J) &= c_{J+k+1} & (P_J\in\mathcal P)\\
    \LT(Q_I) &= x_{i_1}c_{I\setminus\lbrace i_1\rbrace} & (Q_I\in\mathcal Q_{\ge 2}, i_1=\min(I))
  \end{align*}
  Moreover, each variable $c_{J+k+1}$ for any
  $J\subset\Iint{1}{n-k-1},~\#J = r$ appears only as the leading term
  of $P_J$ and does not appear in any of the polynomials in
  $\mathcal Q_{\ge 2}$ nor in $P_{J'}$ with $J'\ne J$.
\end{repproposition}
\begin{proof}
  See~\cref{proof:LTQIQ1}. \qed
\end{proof}
 
\begin{repproposition}{prop:Q1}
    The polynomials in $\mathcal Q_{1}$ generate the same $\Fqm$-vector space as the polynomials
  \begin{align*}
      \mathcal  P \cup \bigcup_{j=1}^k x_j\mathcal P
  \end{align*}
  modulo the polynomials in $\mathcal Q_{\ge 2}$. More
  precisely, for any $J\subset\Iint{1}{n-k-1},~\#J =r$ and $j\in\Iint{1}{k}$ we have
  \begin{align*}
  P_J &= Q_{\lbrace k+1\rbrace \cup (J+k+1)} + \sum_{Q_I \in \mathcal Q_{\ge 2}}
       (-1)^r \minor{\Hm}{J\cup\lbrace n-k\rbrace,I}Q_I\\
      x_{j} P_J &= Q_{\lbrace {j}\rbrace \cup (J + k +1)} + \sum_{Q_I\in\mathcal Q_{\ge 2}, {j}\in I}(-1)^{1+\Pos({j},I)}\minor{\Hm_{\yv}}{J, I\setminus\lbrace {j}\rbrace}Q_I 
  \end{align*}
  where $\Pos(i_u,I)=u$ for $I=\{i_1,\dots,i_{r+1}\}$ such that $i_1 < \dots < i_{r+1}$. 
\end{repproposition}
\begin{proof}
  This comes from the relations $P_J =   (-1)^r \minor{\begin{pmatrix}
      \xv\Gm+\yv\\\Cm
    \end{pmatrix}
  \trsp{
    \begin{pmatrix}
\Hm_{\yv}\\ \hv
\end{pmatrix}
}}{\any,J\cup\lbrace n-k\rbrace}$ and $x_{j} P_J =  (-1)^r \minor{\begin{pmatrix}
      \xv\Gm+\yv\\\Cm
    \end{pmatrix}
  \trsp{
    \begin{pmatrix}
      \Hm_{\yv}\\
       \ev_j
 \end{pmatrix}
}}{\any,J\cup\lbrace n-k\rbrace}$ with $\ev_j$ the $j$-th canonical basis vector in $\ff{q}^n$, see~\cref{proof:LTQIQ1} for details. \qed
\end{proof}

To conclude this section, we have shown that the polynomials $P_J$ and
$\mathcal Q_{\ge 2}$ are linearly independent and that the polynomials
in $\mathcal Q_0$ and $\mathcal Q_1$ are redundant to the
system. Moreover, each polynomial $P_J$  can be used to eliminate the variable $c_{J+k+1}$ from
the system, so that solving $\mathcal P \cup \mathcal Q_{\ge 2}$ amounts to solve
$\mathcal Q_{\ge 2}$, that does not contain the variables
$c_{J+k+1}$. Similarly to \cite{BBCGPSTV20}, a natural approach is now to linearize at higher bi-degree $(b,1)$ after
multiplying the polynomials by linear variables. 
Here, we are able to describe precisely the $\ff{q^m}$-vector space
generated by the polynomials $\mathcal Q_{\ge 2}$ augmented at bi-degree
$(b,1)$ (see~\cref{proof:Nb} for the proof). The basis is constructed from $\mathcal Q_{\ge 2}$ without any computation:
\begin{repproposition}{prop:Nb}
For any $b\ge 1$, the $\ff{q^m}$-vector space generated by the polynomials
$\mathcal Q_{\ge 2}$ augmented at bi-degree $(b,1)$ by multiplying by monomials of degree $b-1$ in the $x_i$ variables admits the following basis:
\begin{align}
  \mathcal B_b = \left\{  {x_{i_1}}^{\alpha_{i_1}}\dots {x_{k}}^{\alpha_k}Q_I : \substack{I=\lbrace i_1<i_2<\dots<i_{r+1}\rbrace, \\i_2\le k+1, \sum_{j\ge i_1}\alpha_j=b-1}\right\}
\end{align}
In particular, it has dimension
  \begin{align}\label{eq:Nbfqm}
    \Nbfqm{} & \eqdef \sum_{i = 1}^{k} \binom{n-i}{ r}\binom{k+b-1-i}{b-1} 
    - \binom{n-k-1}{ r}\binom{k+b-1}{b},
  \end{align}
  and there are
  \begin{align}\label{eq:Mb}
    \Mbfqm{} \eqdef \binom{k+b-1}{b}\left(\binom{n}{r} - \binom{n-k-1}{r} \right)
  \end{align}
  monomials of degree $(b,1)$. We have $\Nbfqm{} < \Mbfqm{}-1$
  for any $b\ge 1$.
\end{repproposition}

As a consequence, we see that the system $Q_{\ge 2}$ always has more monomials than polynomials and cannot be solved in this way at any
degree $b$. The reason is that our initial sets of polynomials are with
coefficients in $\ff{q^m}$ and do not take into account the fact that
the $c_T$'s are searched in $\ff{q}$ (the overall system is not
zero-dimensional).
This will lead us to propose in~\cref{sec:solvingfqm} a mixed modeling by using together
polynomials over $\ff{q^m}$ and over $\ff{q}$. Prior to that, we come back to the analysis of these $\ff{q}$ polynomials in the next section.

\subsection{MaxMinors and Support-Minors  modelings over $\ff{q}$.}
\label{sec:modelingfq}
Here we consider the \emph{unfolded} systems obtained by
expressing all polynomials of  \MMfqm (resp. \SMfqm) in the fixed basis $\mat\beta \eqdef  (\beta_1,\dots,\beta_m)$ of $\ff{q^m}$ over $\ff{q}$ and
taking each component, as described in~\cref{sec:preliminaries}.  For the $P_J$'s, this unfolding process yields by definition the original \eqref{eq:MMfq} system $\lbrace P_{i,J} \rbrace_{i,J}$ \cite{BBCGPSTV20} containing $m$ times more polynomials than \MMfqm and in the same variables. For the $Q_I$'s, as the linear variables
$x_j$ lie in the extension field $\ff{q^m}$, we express each $x_j$ in the basis $\mat\beta$ as  $\textstyle x_j= \sum_{i=1}^{m}\beta_{i}x_{i,j}$, yielding $m$ times more linear variables $x_{i,j}$'s. The same unfolding technique is then applied to obtain a system $\lbrace Q_{i,I} \rbrace_{i,I}$, and \cref{prop:SMUSM} will show that it exactly corresponds to the \eqref{eq:SMfq} system defined in the introduction.

Following previous work (e.g.~\cite{BBBGNRT20,BBCGPSTV20}), we assume that the \MMfq polynomials $P_{i,J}$ are
generically as linearly independent as possible. {In other words, if the matrix $\Hm_{\yv} =
\begin{pmatrix}
\any & \Im_{n-k-1}
\end{pmatrix} \in \ff{q^m}^{(n-k-1) \times n}$ is obtained from a random code $\mathcal{C}$ of dimension $k$ and length $n$ and from a random vector $\ev \in \ff{q^m}^n$ of weight $r$ below the Gilbert-Varshamov distance, we adopt the following assumption:}
\begin{assumption}\label{ass:PiJindependent}
  The $\textstyle{m\binom{n-k-1}{r}}$ linear polynomials
  $P_{i,J}$ in the $\textstyle{\binom{n}{r}}$ variables $c_T$ generate
  an $\ff{q}$-vector space of dimension
  $\textstyle{\min\left(m\binom{n-k-1}{r},\binom{n}{r}-1 \right)}$.
\end{assumption}
{To validate this hypothesis, we have also performed experiments. The code used for these simulations can be found in \url{https://github.com/mbardet/Rank-Decoding-tools}.}
\begin{remark}
  If we consider only one $P_J$ polynomial and if we denote by $\vv_J$
  the vector of minors of $(\Hm_{\yv})_{\any,J}$ of size $r$ and by
  $\boldsymbol \mu$ the vector of minors of $\Cm$ of size $r$, then $P_J=\vv_J \trsp{\boldsymbol \mu}$
  and the rank of the system $\lbrace P_{i,J} : 1\le i \le m\rbrace$
  is exactly the rank of $\vv_J$ in rank metric. More generally, the
  number of linearly independent polynomials in $\lbrace P_{i,J} \rbrace_{i,J}$ is the co-dimension of the subfield subcode of the code
  generated by the matrix $
  \begin{pmatrix}
\vv_J
\end{pmatrix}_{J\subset\Iint{1}{n-k-1}, \#J = r}$.
\end{remark}

On the contrary, we are able to prove this result for the \SMfq
polynomials on a specific RD instance. Note that such a statement is not proven for the SM polynomials on a random MinRank instance, so in a way this $\ff{q^m}$-linear structure enables one to remove this implicit assumption of \cite{BBCGPSTV20} in the RD case.
\begin{repproposition}{prop:LTQiI}
The polynomials in $\Unfold(\mathcal Q_{\ge 2})$ satisfy
$\LT(Q_{i,I}) = x_{i,i_1}c_{I\setminus\lbrace i_1\rbrace}$ with $i_1=\min(I)\le k$.
In particular, they are all linearly independent over $\ff{q}$.
\end{repproposition}
\begin{proof}
  This comes from
  $\LT(Q_I)=x_{i_1}c_{I\setminus\lbrace
    i_1\rbrace}=\textstyle\sum_{i=1}^m \beta_{i}
  x_{i,i_1}c_{I\setminus\lbrace i_1\rbrace}$, see \cref{prop:LTQI}.
\end{proof}

Finally, we show that the polynomials from \SMfq are the unfolded polynomials obtained from \SMfqm. To this end, it may be helpful to give more details about this modeling than those given in the introduction. Let $(\yv, \Cc,r)$ an RD instance where $\Cc$ is a code of generator matrix $\Gm$ and let
\begin{eqnarray*}
\Mm_0 & \eqdef & \Mat(\yv) \\
\Mm_{\ell,j} & \eqdef & \Mat(\beta_\ell \Gm_{j,*}) \;\;\text{for $\ell \in \Iint{1}{m}$, $j \in \Iint{1}{k}$}
\end{eqnarray*}
As observed in \cite{BBCGPSTV20}, this RD problem is equivalent to a MinRank instance with rank $r$, $K=km$ and matrices 
\[(\Mm_0,\Mm_{1,1},\dots,\Mm_{i,j},\dots,\Mm_{m,k})\in\ff{q}^{m\times n}.\]
There, the Support-Minors polynomials are all the maximal minors of the matrices $\textstyle{\binom{\rv_{i}}{ \Cm}}$  for all $i \in \Iint{1}{m}$ and $\rv_i$  the $i$-th row in the solution to the 
  MinRank problem,  namely
  \begin{align*}
    \;\rv_i = \Mat\left( \yv + \sum_{\ell=1}^m \sum_{j=1}^k x_{\ell,j} \beta_\ell \Gm_{j,*} \right)_{i,*} = \tr(\beta_i^\star(\yv+\xv\Gm)),
    \end{align*}
 where the second equality follows from \eqref{eq:row}. We have
   \begin{equation*}
\rank\left(\Mm_0 + \sum_{\ell=1}^m \sum_{j=1}^k x_{\ell,j} \Mm_{\ell,j}\right) \leq r.
  \end{equation*}
We then obtain
\begin{repproposition}{prop:SMUSM}
For any $i \in \Iint{1}{m}$ and any $I \subset \Iint{1}{n},~\#I=r+1$, we have
\begin{align*}
Q_{i,I} \eqdef \minor{
            \begin{pmatrix}
\rv_i  \\\Cm
\end{pmatrix}}{\any,I}
  &= \tr\left(\beta_i^\star Q_I\right) \mod I_q,
\end{align*}
where $I_q$ is the ideal generated by all the field equations $x_{\ell,j}^q - x_{\ell,j}$ and $c_T^q - c_T$.
\end{repproposition}
\begin{proof}
The proposition basically follows from the linearity of the trace and the determinant with respect to its first row  and from \eqref{eq:trmodIq}:
 \begin{align*}
   \tr\left( \beta_i^\star \minor{
             \begin{pmatrix}
 \yv +  \xv \Gm  \\\Cm
             \end{pmatrix}}{\any,I} \right) \mod I_q 
            &= & 
            \tr\left( \minor{
             \begin{pmatrix}
 \beta_i^\star \left(\yv +  \xv\Gm\right)  \\\Cm
            \end{pmatrix}}{\any,I} \right) \mod I_q\\
&  =   &  \minor{
            \begin{pmatrix}
 \tr\left(   \beta_i^\star\left( \yv +  \xv \Gm\right) \right) \\\Cm
            \end{pmatrix}}{\any,I}    
=           \minor{
            \begin{pmatrix}
\rv_i  \\\Cm
            \end{pmatrix}}{\any,I}. 
 \end{align*}
\qed
\end{proof}
\section{Algebraic approach to solve RD by combining  \SMfqm and \MMfq}\label{sec:solvingfqm}
From the material presented in the previous section, we conclude that the polynomials $P_{i,J}$  over $\ff{q}$ (i.e. \MMfq) are necessary to solve
the system: without them we cannot solve RD since the previously considered ideal without these polynomials was not zero-dimensional. However, we also noticed that the \SMfq polynomials over the small field involve a large number of linear variables compared to \SMfqm. This leads us to
propose a new \cref{modeling:final_fqm} to attack RD, which
relies on solving \SMfqm together with \MMfq. In this way, we take advantage of all the $\textstyle m\binom{n-k-1}{r}$ linear polynomials we can get in the $c_T$'s from \MMfq while keeping only $k$ linear variables $x_i$'s over $\ff{q^m}$ from \SMfqm . This increased compactness makes that even if this system were to be solved at
higher degree than \SMfq, it may perform better from a complexity
point of view. 

Let $ \NF(f,\langle P_{i,J}\rangle)$ be the normal form function that associates to any polynomial $f$
the unique polynomial $\widetilde{f} = f\mod \langle P_{i,J}\rangle $ such
that no $c_T$ leading term of a polynomial in the
$\langle P_{i,J}\rangle$ ideal appears in $\widetilde{f}$. \cref{modeling:final_fqm} is the system \eqref{eq:SMpfqm} over $\ff{q^m}$ which consists of the
  polynomials in $\mathcal Q_{\ge 2}$ in which the polynomials $P_{i,J}$'s
  are used to remove $c_T$ variables, i.e. $\lbrace \widetilde{Q_I},~Q_I \in \mathcal Q_{\ge 2} \rbrace$ where
  \begin{align*}
    \widetilde{Q_I}\eqdef \NF(Q_I, \langle P_{i,J}\rangle).
  \end{align*}
Then, we solve \cref{modeling:final_fqm} using the same technique as in \cite{BBCGPSTV20} by multiplying the polynomials by all possible monomials of degree $b-1$ in the $x_i$'s. Once again, the complexity analysis requires to estimate the dimension of the $\ff{q^m}$-vector space generated by the resulting bi-degree $(b,1)$ polynomials. According to~\cref{prop:Nb}, there are $\Nbfqm$ such polynomials but we provide in this section
new syzygies brought by the elimination of the $c_T$ variables using
the linear polynomials $P_{i,J}$. We call $\Nbfqsyz$ the
number of those new syzygies, so that the estimated dimension is $\Nbfqm - \Nbfqsyz$. The final cost follows by comparing this number to the number of monomials \Mbfq.

\begin{repproposition}{prop:recap}
  For any $b\ge 1$, the number of linearly independent polynomials at
  bi-degree $(b,1)$ in \SMpfqm is generically
  \begin{align*}
    \Nbfq &= \Nbfqm - \Nbfqsyz,
  \end{align*}
  with the exact value (from~\cref{prop:Nb})
  \begin{align}
    \Nbfqm &=               \sum_{i = 1}^{k} \binom{n-i}{ r}\binom{k+b-1-i}{b-1} 
             - \binom{n-k-1}{ r}\binom{k+b-1}{b} \tag{\ref{eq:Nbfqm}}
  \end{align}
  and the conjectured value, valid as long as $\Nbfq < \Mbfq$:
  \begin{align}
    \Nbfqsyz &=  (m-1)\sum_{i=1}^{b} (-1)^{i+1} 
    \binom{k+b-i-1}{b-i}\binom{n-k-1}{r+i}.
  \end{align}
  The number of monomials is
  \begin{align}
    \Mbfq &= \binom{k+b-1}{b}\left(\binom{n}{r} - m\binom{n-k-1}{r} \right),
  \end{align}
  so that we can solve \SMpfqm by linearization at bi-degree $(b,1)$  whenever
  \begin{align*}
    \Nbfq \ge \Mbfq-1.
  \end{align*}
  In this case, the final cost in $\fq$ operations is given by
  \begin{align*}
    \mathcal{O}\left( m^2\Nbfq {\Mbfq }^{\omega-1}\right),
  \end{align*}
  where $\omega$ is the linear algebra constant and where the $m^2$ factor comes from expressing each $\ff{q^m}$ operation involved in terms of $\ff{q}$ operations. 
\end{repproposition}
Note that it is always possible, whenever the ratio between polynomials and variables is much larger than 1, to drop excess polynomials by taking punctured codes much in the same way as in \cite[\S 4.2]{BBCGPSTV20}. 

\subsubsection{Analysis of the syzygies in~\cref{modeling:final_fqm}.} Contrary to Section \ref{sec:two}, we are not able to give a proof for the number of linearly independent syzygies due to the $P_{i,J}$'s. This comes from the fact that now, for some large enough $b$, we can solve the system, implying that the polynomials are not linearly independent at this degree anymore (hence we cannot give a general proof of independence). Also, we may find specific instances for which our conjecture fails. Still, we can analyse the generic behaviour on random
instances. Here, we describe generic syzygies coming from the $\ff{q^m}$
structure and we use them to count precisely the number of polynomials
and monomials at each bi-degree $(b,1)$ to determine the success of a
solving strategy by linearization in the generic case.

We start by giving a generalization of~\cref{prop:Q0}, that provides an explanation for the relations between the $\widetilde{Q}_I$ polynomials starting at bi-degree $(1,1)$.

\begin{repproposition}{prop:QIofPiJ}
  For any $T\subset\Iint{1}{n-k-1},~\#T=r+1$ and
  $1\le i \le m$, we obtain a relation between the $\widetilde{Q}_I$ polynomials given by
  \begin{align}
    \tr(\beta_i^\star)\widetilde{Q}_{T+k+1} + \sum_{\substack{I\subset\Iint{1}{n}\\\#I = r+1\\I\cap\Iint{ k+1}{n}\subsetneq T+k+1}}\tr(\beta_i^\star \minor{\Hm_{\yv}}{T,I})\widetilde{Q}_I = 0.
  \end{align}
  Note that the coefficients of any of these relations belong to $\fq$.
\end{repproposition}
\begin{proof}
  This comes from the fact that, for any $0\le \ell\le m-1$: 
  \begin{align*}
    \Gamma_{\ell,T} &\eqdef  \minor{
    \begin{pmatrix}
      \xv\Gm + \yv\\\Cm
    \end{pmatrix}
    \trsp{(\pow{\Hm_{\yv}}{\ell})}}{\any,T} = 0 \mod \langle P_{i,J}\rangle.
  \end{align*}
  Further details as well as the
  link between $P_{J}^{[\ell]}$ and $P_{i,J}$ are postponed in~\cref{proof:QIofPiJ}.  \qed
\end{proof}
\cref{prop:QIofPiJ} gives (at most) $\textstyle m\binom{n-k-1}{r+1}$ syzygies at bi-degree $(1,1)$ which include the relations from~\cref{prop:Q0} (the $\ell=0$ case in the proof).

At degree $b=2$, those relations multiplied by all linear
variables generate new relations, but they are not independent anymore:
indeed, for $1\le \ell \le m-1$ and any
$T_2\subset\Iint{1}{n-k-1},~\#T_2=r+2$ the following minor
gives $\textstyle (m-1)\binom{n-k-1}{r+2}$ relations between the
$ \Nbfqsyz[1]$ syzygies at bi-degree $(1,1)$:
\begin{align*}
  \minor{
	\begin{pmatrix}
		\xv\Gm + \yv\\ 	\xv\Gm + \yv \\ \Cm
	\end{pmatrix}
  \trsp{(\Hm_{\yv}^{[\ell]})}}{\any,T_2} = 0,
\end{align*}
More generally, a similar inclusion-exclusion combinatorial argument as those used to derive \cite[Heuristic 2]{BBCGPSTV20} leads to the following Conjecture \ref{conj:syz_PiJ}, that was verified experimentally for $b=2,~b=3$ and $b=4$. 
\begin{conjecture}\label{conj:syz_PiJ}
  For $b \geq 1$, the number of independent syzygies is expected to be equal to
  \begin{equation*}\label{eq:Nsyzb}
    \Nbfqsyz=(m-1)\sum_{i=1}^{b} (-1)^{i+1} 
    \binom{k+b-i-1}{b-i}\binom{n-k-1}{r+i}.
  \end{equation*}
\end{conjecture}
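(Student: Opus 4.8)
The plan is to prove the formula by an explicit inclusion--exclusion over a cascade of syzygies, syzygies between syzygies, and so on, all produced by the same mechanism that underlies \cref{prop:QIofPiJ}. Keep the systematic forms of \cref{lemma:generality}, and for $j\ge 1$ let $\Um_j$ be the $(r+j)\times n$ matrix whose first $j$ rows are all equal to $\xv\Gm+\yv$ and whose last $r$ rows have maximal minors the variables $c_T$ (so $\Um_1$ is the matrix of \cref{modeling:SMfqm}). Since $\Gm\trsp{\Hm_{\yv}}=\zerom$, $\yv\trsp{\Hm_{\yv}}=\zerom$, and $\pow{P_J}{\ell}=\sum_i\pow{\beta_i}{\ell}P_{i,J}\in\langle P_{i,J}\rangle$ for every $0\le\ell\le m-1$, the core input --- the one behind \cref{prop:QIofPiJ} --- is that $\Gamma_{\ell,U}\eqdef\minor{\Um_j\trsp{(\pow{\Hm_{\yv}}{\ell})}}{\any,U}\equiv 0 \pmod{\langle P_{i,J}\rangle}$ for all $\ell$, all $j\ge 1$, and all $U\subset\Iint{1}{n-k-1}$ with $\#U=r+j$.

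Level $i$ of the cascade consists of the $\Gamma_{\ell,U}$ with $\#U=r+i$. For $i=1$, Cauchy--Binet gives $\Gamma_{\ell,U}=\sum_I\pow{(\minor{\Hm_{\yv}}{U,I})}{\ell}Q_I$, whose trace/dual-basis repackaging $\sum_I\tr(\beta_a^\star\minor{\Hm_{\yv}}{U,I})\widetilde{Q_I}=0$ is exactly \cref{prop:QIofPiJ}; among the $m$ values of $\ell$, the instance $\ell=0$ is already an \emph{identity} between the un-substituted $Q_I$ (its defining matrix $\Um_1\trsp{\Hm_{\yv}}$ has a zero first row, cf. \cref{prop:Q0}), hence already absorbed into $\Nbfqm$, whereas $\ell=1,\dots,m-1$ become new only modulo $\langle P_{i,J}\rangle$ --- this is the uniform factor $m-1$. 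For $i\ge 2$ the matrix $\Um_i$ has two equal rows, so $\Gamma_{\ell,U}=0$ \emph{identically}; Laplace-expanding it along the first row (peeling off one copy of $\xv\Gm+\yv$) presents this as an $\fqm$-linear combination, with coefficients linear in $x_1,\dots,x_k$, of the level-$(i-1)$ objects $\Gamma_{\ell,U\setminus\{t\}}$ --- a syzygy between the level-$(i-1)$ syzygies multiplied by the $x_j$. At every level the $\ell=0$ instance is again the one already present before substitution, so each level contributes $(m-1)\binom{n-k-1}{r+i}$ genuinely new relations, living at bi-degree $(i,1)$, which lifted by the $\binom{k+b-i-1}{b-i}$ monomials of degree $b-i$ in the $x_j$ produce the term $(m-1)(-1)^{i+1}\binom{k+b-i-1}{b-i}\binom{n-k-1}{r+i}$, the alternating sign reflecting the nesting of syzygies, syzygies-of-syzygies, and so on.

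What remains is to prove that this alternating count is \emph{exact}: at bi-degree $(b,1)$ the new syzygies modulo $\langle P_{i,J}\rangle$ are spanned precisely by the lifted level-$1$ objects, their only $\fqm$-linear dependencies are generated by the lifted level-$2$ objects, and so on, the chain truncating cleanly at level $b$. Equivalently, one must exhibit a truncated Koszul-type complex built on the ``repeated-row'' relations $\Gamma_{\ell,\cdot}$ over $\fqm[x_1,\dots,x_k]$ and show that its Hilbert function in bi-degree $(b,1)$ equals the above alternating sum --- a statement of the same nature as, and in the relevant range implied by, the combinatorial heuristic behind \cite[Heuristic~2]{BBCGPSTV20}. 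I would establish levels $i=1$ and $i=2$ unconditionally (the $i=2$ relations $\minor{\Um_2\trsp{(\pow{\Hm_{\yv}}{\ell})}}{\any,U}=0$ already give the $b=2$ case recorded in the text) and leave the general statement conditional on this exactness.

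That exactness is the real obstacle, and it is why a conjecture rather than a theorem is stated. Unlike \cref{prop:LTQI} and \cref{prop:Nb}, where a triangular leading-term structure gave a basis for free, the substituted polynomials $\widetilde{Q_I}=\NF(Q_I,\langle P_{i,J}\rangle)$ have no such structure, so neither a direct leading-term argument nor a generic-independence argument is available. Moreover the formula can only be expected to hold while $\Nbfq<\Mbfq$: once $b$ is large enough that $\Nbfq\ge\Mbfq-1$ the system becomes solvable, which forces extra, solution-specific dependencies among the $\widetilde{Q_I}$ that are invisible to the combinatorial cascade. Hence the most one can realistically target by this route is a proof conditional on the exactness of the cascade, backed --- as in the paper --- by the explicit low-level relations and the experimental checks for $b=2,3,4$.
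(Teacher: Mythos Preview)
Your proposal is essentially the paper's own justification, only more carefully articulated: the paper does not prove this statement (it is stated as a conjecture), and its supporting argument is exactly the cascade you describe --- the level-$1$ relations are \cref{prop:QIofPiJ}, the level-$2$ relations are the repeated-row minors $\minor{\binom{\xv\Gm+\yv}{\binom{\xv\Gm+\yv}{\Cm}}\trsp{(\pow{\Hm_{\yv}}{\ell})}}{\any,T_2}=0$ for $\#T_2=r+2$, and the general term is obtained by the same inclusion--exclusion heuristic as \cite[Heuristic~2]{BBCGPSTV20}, checked experimentally for $b\le 4$. You correctly isolate the one genuine gap (exactness of the cascade, i.e.\ that there are no further dependencies beyond those generated by the repeated-row relations) and correctly note that this cannot hold once $\Nbfq\ge\Mbfq-1$; this is precisely why the paper leaves it as a conjecture.
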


\section{Hybrid technique on minor variables}
\label{sec:hybridMM}
In algebraic cryptanalysis, ``hybrid approach'' usually refers to a generic method to possibly decrease the complexity of an algebraic attack by (a) choosing a subset of unknowns, (b) specializing them to some value, (c) solving the new system with less unknowns and (d) finally trying all possible specializations of those unknowns. The point is that in certain cases, the complexity gain in solving the new system supersedes the loss in complexity coming from exhaustive search. In \cite{BBCGPSTV20}, an indirect approach is followed on the MaxMinors modeling. Instead of performing a naive exhaustive search on random minor variables, the authors proceed by fixing $a \geq 0$
columns in $\Cm$. It can readily be seen that this provides $\textstyle{N\eqdef \binom{n}{r}-\binom{n-a}{r}}$ linear polynomials involving the $c_T$'s. These polynomials can in turn be used to reduce the number of $c_T$ variables by the same amount and this costs only to test $q^{a \cdot r}$ different choices instead of trying $q^N$ choices if we had performed the naive exhaustive search on $N$ variables. 

We show here that a variation of this idea, namely if we can fix $a$ columns of $\Cm$ to $0$, or basically what amounts to the same, if we can fix to $0$ $a$ positions of the error $\ev$ we seek in the RD problem, can have a dramatic effect on the Support-Minors modeling. Not only do we have the aforementioned reduction in the $c_T$ variables, but we do have a reduction of the number of linear variables as well. Moreover, the effect of this hybrid approach is even independent from the algebraic modeling or algorithm we use to solve the MinRank/RD problem in the sense that this hybrid approach actually provides a reduction to a smaller MinRank/RD problem. More precisely (we give here the explanation just for the RD problem):
\begin{enumerate}
\item \label{item:1} If by chance $a$ positions of the error vector are zero and the $a$ positions belong to an information set of the code, it is possible to reduce the problem with parameters $(m,n,k,r)$ to a smaller instance with parameters $(m,n-a,k-a,r)$; 
\item This has a chance $\textstyle{\frac{1}{q^{ar}}}$ to happen for a random instance;
\item It is possible to change the initial instance into an instance satisfying Point \ref{item:1}, either by using a deterministic search among all $q^{ar}$ possible transformations, or by using a rerandomizing trick that will succeed with probability $\textstyle{O(q^{-ar})}$.
\end{enumerate}
The idea to look for a particular error with zero positions is used in
\cite[\S 5.2]{GRS16}, where the rerandomizing trick is implicit (see
the proof of Proposition 3 there). Here, we present a way to reduce
the solving of an RD instance to a smaller problem when the error
vector is zero on some positions. The advantage is that the method is
applicable to any algorithm solving RD.

As explained above,  this is more general and it actually applies to any
MinRank problem. The rerandomizing trick applies equally to both cases and we begin our discussion by explaining it. The proofs are somewhat simpler in the RD case and we start with this more specific case before turning to the MinRank case. We end the section with a probabilistic description of the rerandomization trick.

\subsection{Rerandomizing the MinRank and the RD instances}\label{ss:rerandomizing}
There is no reason a priori why $a$ positions of the RD solution $\ev$   or $a$ columns of the MinRank solution $\textstyle{\Em = \Mm_0 + \sum_{i=1}^K \Mm_i}$ would  be equal to $0$. The point is that 
we can multiply on the right $\ev$ or $\Em$ by an invertible $n \times n$ matrix $\Pm$ with coefficients over $\Fq$. This does not change the rank weight of $\ev$ or the rank of $\Em$, but now
$a$ positions or $\ev$ or $a$ columns of $\Em$ have a chance to be equal to $0$. Moreover, if we make the following assumption on the $(m,n,k,r)$-RD instance $(\yv,\Cc,r)$ or the 
$(m,n,K,r)$ MinRank instance $(\Mm_0,\Mm_1,\dots,\Mm_K,r)$,
\begin{assumption}\label{ass:basic}
In the RD case, we assume that the first $r$ positions of the solution $\ev$ are independent over $\Fq$. In the MinRank case, we assume that 
the first $r$ columns of the solution $\textstyle{\Em=\Mm_0 + \sum_{i=1}^K x_i \Mm_i}$ are independent.
\end{assumption}
 then 
we can even try at most $q^{ar}$ matrices belonging to the set 
{\small

\begin{equation}\label{eq:set_P}
  \cP \eqdef \left\{ \Pm_{\Am} = 
  \begin{pmatrix}   
  \ident_{r} & \zerom_{r \times (n-a-r)} & - \Am \\
\zerom_{(n-a-r) \times r} & \ident_{n-a-r} & \zerom_{(n-a-r) \times a} \\
\zerom_{a \times r} & \zerom_{a \times (n-a-r)} & \ident_{a} 
\end{pmatrix}
,~\Am \in \Fq^{r \times a} \right\}.
\end{equation}}

The point is that
 multiplying by matrices of this form amounts to leave the $(n-a)$  columns in the first two blocks unchanged, but adds to the  last $a$ positions of $\ev$  or the last $a$ columns  of $\Em$ all possible linear combinations of the $r$ first ones. One of them has to be $0$ because by assumption, the $r$ first positions/columns form a basis of the subspace $\vsg{e_1,\dots,e_n}{\Fq}$ or the column space of $\Em$.
 One could think that this would give a new instance of the MinRank problem associated to the matrices $\Mm'_0 = \Mm_0 \Pm_{\Am}$, $\Mm'_1 = \Mm_1 \Pm_{\Am},\dots , \Mm'_K = \Mm_K \Pm_{\Am}$, 
 or an RD problem associated to the word $\yv' = \yv \Pm_{\Am}$ and the code 
 $\Cc_{\Am} = \{ \cv \Pm_{\Am}: \cv \in \Cc\}$, however the fact that the last columns are equal to $0$ has an additional effect, we can namely reduce accordingly the dimension of the matrix code (in the rank metric case) or of the underlying $\Fqm$-linear code. Let us verify this in the RD case first. We are going now to use the following notation in the subsections that follow
 \begin{eqnarray}
 J & \eqdef & \Iint{n-a+1}{n} \label{eq:J}\\
 \Jch & \eqdef & \Iint{1}{n-a}. \label{eq:Jch}
 \end{eqnarray}

\subsection{RD instances}\label{sec:hyb_rd}
It turns out that in RD case, when $\left( \ev \Pm_{\Am} \right)_J = \zerom_{a}$, with a very mild condition on the shortened code at $J$ we obtain a reduction to an RD instance with smaller parameters, namely
\begin{proposition}\label{prop:shJ}
Assume that $\ev_J = \zerom$. Let $\sh{J}{\Cc}$ be the code $\Cc$ shortened at $J$, namely
$\sh{J}{\Cc}  = \{\cv_{\Jch}:\; \cv \in \Cc, \;\cv_J = \zerom\}$. Assume that $\sh{J}{\Cc}$ is of dimension $k-a$,  then by Gaussian elimination on a generator matrix $\Gm$ of $\Cc$ we can obtain a generator matrix of $\Cc$ in systematic form on the columns in $J$, i.e.
\begin{align*}
 \Dm  \Gm &= \bordermatrix{ & \Jch & J \cr
  & \Gm' & \zerom_{(k-a)\times a}\cr
           &\Bm & \Im_a}
\end{align*}
for some invertible matrix $\Dm\in \Fqm^{k\times k}$. Then $\Gm'$ is a generator matrix of $\Cc'\eqdef\sh{J}{\Cc}$. Define $\yv'\eqdef (\yv)_{\Jch} - \yv_J\Bm$. Then $(\yv',\Cc',r)$ is a valid instance of an RD problem of parameters $(m,n-a,k-a,r)$, from which we can deduce a solution of the initial problem $(\yv,\Cc,r)$.
\end{proposition}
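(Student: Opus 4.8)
The plan is to verify each assertion of the proposition in turn, all of which are essentially routine once the setup of \cref{lemma:Hsystematic} and the definition of shortened code are in place. First I would recall that $\sh{J}{\Cc}$ has dimension at least $k-a$ always (shortening by $a$ coordinates drops the dimension by at most $a$), so the hypothesis $\dim \sh{J}{\Cc} = k-a$ is exactly the generic case and says that the evaluation map $\cv \mapsto \cv_J$ restricted to $\Cc$ is surjective onto $\Fqm^a$. Equivalently, the $a\times a$ submatrix of a generator matrix $\Gm$ on the columns of $J$ can be brought to full rank, which is precisely what is needed for the Gaussian elimination step.

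Next I would carry out the Gaussian elimination explicitly: choose $a$ rows of $\Gm$ whose restriction to $J$ is invertible, use them to clear the $J$-columns of the other $k-a$ rows (this produces the block $\begin{pmatrix}\Gm' & \zerom\end{pmatrix}$), and then normalize the chosen $a$ rows so that their $J$-part is $\Im_a$ (this produces $\begin{pmatrix}\Bm & \Im_a\end{pmatrix}$); all of these operations are left-multiplication by an invertible $\Dm \in \Fqm^{k\times k}$. That the top block $\Gm'$ generates $\sh{J}{\Cc}$ is immediate: its rows are codewords of $\Cc$ vanishing on $J$, hence lie in $\sh{J}{\Cc}$, they are $\Fqm$-linearly independent (the full matrix $\Dm\Gm$ has rank $k$), and there are $k-a = \dim\sh{J}{\Cc}$ of them.

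Then I would check the claim about $\yv'$. Write $\cv \in \Cc$ for the codeword with $\rw{\yv - \cv} = r$, so $\ev = \yv - \cv$, and by hypothesis $\ev_J = \zerom$, i.e. $\yv_J = \cv_J$. Write $\cv = \uv\,\Dm\Gm$ for some $\uv \in \Fqm^k$; splitting $\uv = (\uv', \uv'')$ along the two row-blocks gives $\cv_{\Jch} = \uv'\Gm' + \uv''\Bm$ and $\cv_J = \uv''$. Hence $\uv'' = \yv_J$, and $(\yv - \cv)_{\Jch} = \yv_{\Jch} - \uv'\Gm' - \yv_J\Bm = \yv' - \uv'\Gm'$. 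Setting $\cv' \eqdef \uv'\Gm' \in \Cc'$ we get $\yv' - \cv' = \ev_{\Jch}$, whose rank weight equals $\rw{\ev} = r$ since $\ev$ is zero outside $\Jch$. This shows $(\yv',\Cc',r)$ is a valid RD instance of parameters $(m,n-a,k-a,r)$, and conversely given its solution $\cv' = \uv'\Gm'$ we recover $\ev_{\Jch} = \yv' - \cv'$, then $\ev = (\ev_{\Jch}, \zerom_a)$ (after undoing the coordinate identification), and finally $\cv = \yv - \ev$, which lies in $\Cc$ because $\yv - \ev$ agrees with the codeword $\uv'\,\Dm\Gm\big|_{\text{rows of }\Gm'} + \yv_J\cdot(\text{rows giving }\Bm,\Im_a)$ — more cleanly, $\cv = (\uv',\yv_J)\,\Dm\Gm \in \Cc$.

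The one genuine subtlety, rather than an obstacle, is bookkeeping the coordinate permutation: \cref{prop:shJ} is stated with $J = \Iint{n-a+1}{n}$ the last $a$ columns, whereas in the application one first multiplies $\ev$ by $\Pm_{\Am} \in \cP$ to force $\ev_J = \zerom$, so one must track that the whole construction is applied to the transformed instance $(\yv\Pm_{\Am}, \Cc_{\Am}, r)$ and that a solution there transports back via $\Pm_{\Am}^{-1}$ (which exists and is over $\Fq$, hence preserves rank weight) to a solution of the original instance. Everything else is linear algebra over $\Fqm$ with no estimation involved, so I expect no real difficulty; the main thing to be careful about is not to conflate the dimension hypothesis on $\sh{J}{\Cc}$ with genericity of the original code — it is a hypothesis that must be explicitly invoked to guarantee the $J$-columns of $\Gm$ have full rank $a$.
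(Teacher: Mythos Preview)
Your proof is correct and follows essentially the same route as the paper: the paper dismisses the Gaussian-elimination step as ``standard linear algebra'' and then performs exactly your computation, writing $\cv=\xv\Gm$, setting $(\xv',\xv'')=\xv\Dm^{-1}$ (your $\uv$), deducing $\xv''=\yv_J$ from $\ev_J=\zerom$, and concluding $\ev_{\Jch}=\yv'-\xv'\Gm'$. Your write-up is in fact more detailed on both the elimination step and the back-recovery of the original solution.
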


\begin{proof}
The first point is just standard linear algebra. For the second point, let $(\cv,\ev=\yv-\cv)$ be  the solution  of the RD problem and 
denote by $(\xv',\xv'')$ where $\xv' \in \Fq^{k-a}$ and $\xv'' \in \Fq^a$ the vector defined by
\begin{eqnarray*}
(\xv',\;\xv'') & = & \xv \Dm^{-1}\;\text{ where }\\
\cv & = & \xv \Gm. 
\end{eqnarray*} 
Observe now that
\begin{eqnarray*}
\ev_J & = & \yv_J - \cv_J \\
& = & \yv_J -\left( \xv \Gm \right)_J\\
& = & \yv_J - \left( \left( \xv',\;\xv''\right)\Dm \Gm \right)_J \\
& = & \yv_J - \xv''.
\end{eqnarray*}
Since $\ev_J=\zerom_{a}$, this implies $\yv_J=\xv''$. Therefore
  \begin{align*}
    \ev_{\Jch} = \yv_{\Jch}-\cv_{\Jch} &= \yv_{\Jch} - \xv'\Gm' - \xv''\Bm\\
    & = \underbrace{\yv_{\Jch}- \yv_J\Bm}_{\yv'} - \underbrace{\xv'\Gm' }_{=\cv'\in \Cc'}
  \end{align*}
Therefore $\yv' - \cv'$ is of rank weight $r$ and the proposition follows.
\qed
\end{proof}

This proposition is used as follows. When we want to solve an instance $(\yv,\Cc,r)$ of an RD problem of parameters $(m,n,k,r)$, we consider $q^{ar}$ RD instances $(\yv\pri,\Cc\pri,r)$ of parameters $(m,n-a,k-a,r)$ obtained from all $\Pm_{\Am} \in \cP$ by computing a generator matrix $\Gm_{\Am} \eqdef \Gm \Am$ (where $\Gm$ is a generator matrix of $\Cc$) of the code
$\Cc_{\Am} \eqdef \{\cv \Pm_{\Am}:\;\cv \in \Cc\}$, then put this matrix in (partial) systematic form on the columns in $J$ by Gaussian elimination and obtain 
\begin{equation}
\label{eq:condition}
\Gm\dpr =  \bordermatrix{ & \Jch & J \cr
  & \Gm\pri& \zerom_{(k-a)\times a}\cr
           &\Bm & \Im_a}.
           \end{equation}
           The RD instances $(\yv\pri,\Cc\pri,r)$ are defined from $\Gm\pri$, $\yv$ and $\Pm_{\Am}$ by
\begin{eqnarray*}
\Cc\pri & \eqdef & \{\xv \Gm\pri:\;\xv \in \Fqm^{k-a}\}\\
\yv\pri &\eqdef & \yv\dpr_{\Jch} -\yv\dpr_J \Bm,\text{ where}\\
\yv\dpr &\eqdef & \yv \Pm_{\Am}.
\end{eqnarray*}
Finally, one of these instances has a solution from which we recover the solution of the original problem thanks to Proposition \ref{prop:shJ}.

It remains to check under which condition we can put $\Gm_{\Am}$ in partial systematic form for any $\Am \in \Fq^{r \times a}$ as required in \eqref{eq:condition}. This is given by Proposition \ref{prop:shJ}: namely that $\sh{J}{\Cc_{\Am}}$ should have  dimension $k-a$ for any $\Am$. There are two cases to consider:

\noindent
{\bf Case 1:} $a+r \le k$.\\
In this case, there is a very mild condition on $\Cc$ for which the relevant property holds for any $\Am \in \Fq^{r \times a}$, namely that
\begin{lemma}\label{prop:r+a<=k}
  Provided that there exists a systematic set for $\Cc$ that contains $\{1,\dots,r\} \cup J$,  the code $\sh{J}{\Cc_{\Am}}$ has dimension exactly $k-a$ for all $\Am\in\Fq^{r\times a}$. 
\end{lemma}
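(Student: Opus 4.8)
The plan is to show that for every $\Am \in \Fq^{r\times a}$, the shortened code $\sh{J}{\Cc_{\Am}}$ has dimension exactly $k-a$, which by \cref{prop:shJ} is precisely the condition needed to put $\Gm_{\Am}$ in partial systematic form on the columns of $J$. I would first recall the standard fact that, since $\Cc$ has length $n$ and dimension $k$, the shortened code $\sh{J}{\Cc_{\Am}}$ always has dimension at least $k - \#J = k - a$, with equality if and only if the columns indexed by $J$ are ``independent'' in $\Cc_{\Am}$, i.e. the projection $\Cc_{\Am} \to \Fqm^a$ onto the coordinates in $J$ is surjective. Equivalently, there must exist a generator matrix of $\Cc_{\Am}$ whose restriction to the columns in $J$ has full rank $a$; equivalently $J$ is contained in some information set of $\Cc_{\Am}$.

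Next I would translate this surjectivity condition through the explicit form of $\Pm_{\Am}$. Writing a generator matrix $\Gm$ of $\Cc$ in blocks according to the partition $\Iint{1}{r}$, $\Iint{r+1}{n-a}$, $J$ as $\Gm = (\Gm^{(1)} \mid \Gm^{(2)} \mid \Gm^{(3)})$, right-multiplication by $\Pm_{\Am}$ leaves the first two blocks untouched and replaces the third block by $\Gm^{(3)} - \Gm^{(1)}\Am$. So the requirement is that the matrix $\Gm^{(3)} - \Gm^{(1)}\Am \in \Fqm^{k\times a}$ has rank $a$ for \emph{every} choice of $\Am \in \Fq^{r\times a}$. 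The hypothesis of the lemma is that there is a systematic (information) set of $\Cc$ containing $\{1,\dots,r\}\cup J$; since $a + r \le k$ in this case, this set has size $\le k$ and such a configuration is possible. I would use it to normalize $\Gm$ (by an invertible row transformation over $\Fqm$, which changes neither $\Cc_{\Am}$ nor its shortening) so that the $k \times k$ submatrix of $\Gm$ on the columns of $\{1,\dots,r\}\cup J$ together with $k-a-r$ further columns forming the information set equals the identity. In particular the restriction of $\Gm$ to the columns $\{1,\dots,r\}\cup J$ then looks like $\binom{\Im_r \;\; \zerom}{\zerom \;\; \Im_a}$ padded with zero rows, so that $\Gm^{(1)}$ has its nonzero part equal to $\Im_r$ supported on $r$ rows disjoint from the $a$ rows supporting the $\Im_a$ part of $\Gm^{(3)}$. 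Then $\Gm^{(3)} - \Gm^{(1)}\Am$ still contains the rows of $\Im_a$ intact (those $a$ rows of $\Gm^{(1)}$ are zero), hence has rank $a$ regardless of $\Am$. This proves $\dim \sh{J}{\Cc_{\Am}} = k-a$ for all $\Am$.

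The main obstacle is getting the normalization exactly right: I must be careful that the information set containing $\{1,\dots,r\}\cup J$ indeed lets me arrange, via a single invertible $\Fqm$-row operation on $\Gm$, for the $\Im_r$-block (on columns $1,\dots,r$) and the $\Im_a$-block (on columns $J$) to live on disjoint sets of rows — this is automatic because both column sets lie in one common information set, so the corresponding $k\times k$ submatrix is invertible and can be brought to $\Im_k$, forcing the two identity blocks onto disjoint rows. Once that is in place, the rank-$a$ claim for $\Gm^{(3)}-\Gm^{(1)}\Am$ is immediate and the rest is the standard shortening-dimension argument. I would also note explicitly at the end that right-multiplication by $\Pm_{\Am}$ is invertible (it is unitriangular over $\Fq$), so $\Cc_{\Am}$ has the same dimension $k$ as $\Cc$ and the lemma's conclusion is exactly what \cref{prop:shJ} requires to run the reduction.
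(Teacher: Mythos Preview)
Your proposal is correct and follows essentially the same approach as the paper. The only cosmetic difference is that the paper reorders coordinates so that the systematic set becomes $\Iint{1}{k}$ and $J$ becomes $\Iint{r+1}{r+a}$ (rewriting $\Pm_{\Am}$ accordingly) before computing $\Gm\Pm_{\Am}$ explicitly, whereas you keep the original coordinate order and normalize $\Gm$ by a row operation; in both cases the key observation is the same, namely that the $\Im_a$ block sitting in the $J$-columns is supported on rows disjoint from those carrying the $\Im_r$ block in columns $1,\dots,r$, so subtracting $\Gm^{(1)}\Am$ leaves those $a$ rows intact and the rank-$a$ conclusion follows.
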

\begin{proof}
By reordering the positions we may assume that the systematic set is $\Iint{1}{k}$ and $J=\Iint{r+1}{r+a}$ and 
$$\Pm_{\Am} = \begin{pmatrix}   
  \ident_{r} &  - \Am& \zerom_{r \times (n-a-r)}  \\
  \zerom_{a \times r} & \ident_{a}  & \zerom_{a \times (n-a-r)} \\
\zerom_{(n-a-r) \times r} &  \zerom_{(n-a-r) \times a} &  \ident_{n-a-r} 
\end{pmatrix}.$$
On the other hand we can assume by the hypothesis of the lemma that we can choose the generator matrix of $\Cc$ as 
$$
\Gm = \begin{pmatrix} 
\Im_k & \Rm
\end{pmatrix}.
$$
The generator matrix of $\Cc_{\Am}$ is of the form
$$
\Gm \Pm_{\Am} = \begin{pmatrix}   
  \ident_{r} &  - \Am& \zerom_{r \times (n-a-r)}  &  \Rm_1\\
  \zerom_{a \times r} & \ident_{a}  & \zerom_{a \times (k-a-r)} & \Rm_2 \\
\zerom_{(k-a-r) \times r} &  \zerom_{(k-a-r) \times a} &  \ident_{k-a-r}  & \Rm_3
\end{pmatrix}.
$$
This code $\Cc_{\Am}$ is therefore still systematic in the first $k$ positions and hence $\sh{J}{\Cc_{\Am}}$  has dimension exactly $k-a$.
\qed
\end{proof}

\noindent
{\bf Case 2:} $r+a >k$.\\
Note that in this case, the $\ff{q^m}$-linear code $\mathcal{D}$ of parameters $[r+a,k]$ which is generated by the matrix $\Gm_{*,\Iint{1}{r} \cup J} \in \ff{q^m}^{k \times (r+a)}$ is not the full code. 
It is also worthwhile to notice that  $\sh{J}{\Cc_{\Am}}$ has dimension $k-a$ if and only
if the matrix
$\Gm_{\any,J}-\Gm_{\any,\Iint{1}{r}}\Am$ has rank
$a$. To verify whether or not this property holds for any $\Am$ we use the following lemma.

\begin{lemma}\label{prop:r+a>k}
	The existence of a matrix $\Am \in \ff{q}^{r \times a}$ such that $\Gm_{*,J}-\Gm_{*,\Iint{1}{r}}\Am$ is rank defective is equivalent to the existence of a word of weight $ \leq a$ whose support is spanned by the $a$ last coordinates in the dual of $\mathcal{D}$.
\end{lemma}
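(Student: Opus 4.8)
The plan is to rewrite the matrix condition as a membership statement in the kernel of a generator matrix of $\mathcal{D}$, and then to exploit that the admissible matrices $\Am$ have their entries in the base field $\ff{q}$. Write $\Gm_{\any,\Iint{1}{r}\cup J} = \begin{pmatrix}\Gm_{\any,\Iint{1}{r}} & \Gm_{\any,J}\end{pmatrix}\in\ff{q^m}^{k\times(r+a)}$, which is a generator matrix of $\mathcal{D}$; by definition of the dual code, $\dual{\mathcal{D}} = \bigl\{\word{u}\in\ff{q^m}^{r+a} : \begin{pmatrix}\Gm_{\any,\Iint{1}{r}} & \Gm_{\any,J}\end{pmatrix}\trsp{\word{u}} = \zerov\bigr\}$ (this holds whether or not $\Gm_{\any,\Iint{1}{r}\cup J}$ has full row rank). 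Now $\Gm_{\any,J} - \Gm_{\any,\Iint{1}{r}}\Am$ is rank defective, i.e. fails to have rank $a$, exactly when there is a nonzero $\vv\in\ff{q^m}^{a}$ with $(\Gm_{\any,J}-\Gm_{\any,\Iint{1}{r}}\Am)\trsp{\vv} = \zerov$. Introducing $\word{w}\in\ff{q^m}^{r}$ with $w_i \eqdef \sum_{j=1}^{a}\Am_{i,j}v_j$ (so that $\trsp{\word{w}} = \Am\trsp{\vv}$), this is equivalent to $\begin{pmatrix}\Gm_{\any,\Iint{1}{r}} & \Gm_{\any,J}\end{pmatrix}\trsp{(-\word{w},\vv)} = \zerov$, that is, to $(-\word{w},\vv)\in\dual{\mathcal{D}}$ with $\vv\neq\zerov$.

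The crux is to read off what the constraint ``$\word{w}=\Am\vv$ with $\Am\in\ff{q}^{r\times a}$'' means: each $w_i$ is an $\ff{q}$-linear combination of $v_1,\dots,v_a$, hence $w_i\in\vsg{v_1,\dots,v_a}{\fq}$. From this I would argue both directions. For the direct implication, such $\Am,\vv$ produce a nonzero $\word{u}\eqdef(-\word{w},\vv)\in\dual{\mathcal{D}}$ (nonzero since $\vv\neq\zerov$) whose first $r$ coordinates all lie in the $\ff{q}$-span of its last $a$ coordinates; therefore $\text{Supp}(\word{u}) = \vsg{u_{r+1},\dots,u_{r+a}}{\fq}$, which has dimension at most $a$, so $\word{u}$ is a word of rank weight $\le a$ whose support is spanned by the last $a$ coordinates. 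For the converse, from such a word $\word{u}$ set $\vv\eqdef(u_{r+1},\dots,u_{r+a})$; then $\vv\neq\zerov$ (otherwise the support would be trivial and $\word{u}=\zerov$), each $u_i$ with $i\le r$ lies in $\vsg{v_1,\dots,v_a}{\fq}$, so we may pick $\Am_{i,j}\in\ff{q}$ with $u_i = -\sum_{j=1}^{a}\Am_{i,j}v_j$ — a choice that exists even when the $v_j$ are $\ff{q}$-linearly dependent — and the relation $\begin{pmatrix}\Gm_{\any,\Iint{1}{r}} & \Gm_{\any,J}\end{pmatrix}\trsp{\word{u}} = \zerov$ rearranges to $\Gm_{\any,J}\trsp{\vv} = \Gm_{\any,\Iint{1}{r}}\Am\trsp{\vv}$, so $\Gm_{\any,J}-\Gm_{\any,\Iint{1}{r}}\Am$ is rank defective.

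I expect the only delicate point to be the bookkeeping around the role of the base field. Over $\ff{q^m}$ the rank-deficiency condition would be ordinary linear algebra with nothing to do with the rank metric; it is solely the restriction $\Am\in\ff{q}^{r\times a}$ that forces the witnessing dual codeword to have its first block inside the $\ff{q}$-span of its last block — equivalently, to be a word of rank weight $\le a$ whose rank support sits on the last $a$ positions. Concretely one must verify that both translations are lossless: every $\vv$ witnessing rank deficiency completes to such a codeword, and conversely every such codeword yields a genuine $\ff{q}$-matrix $\Am$, which is exactly where the non-uniqueness of the $\ff{q}$-expansion of the $u_i$ is used. Finally, once the support constraint holds the bound ``weight $\le a$'' is automatic, so the real content of the lemma is the localization of the support on the last $a$ coordinates.
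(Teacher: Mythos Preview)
Your proof is correct and follows essentially the same approach as the paper: both translate rank deficiency of $\Gm_{\any,J}-\Gm_{\any,\Iint{1}{r}}\Am$ into the existence of a nonzero kernel vector $\vv$, form the concatenated vector $(-\Am\trsp{\vv},\trsp{\vv})$ in $\dual{\mathcal D}$, and observe that the $\fq$-constraint on $\Am$ forces the support to be spanned by the last $a$ coordinates. Your write-up is in fact more explicit about the converse direction than the paper's, which dispatches it with ``the converse statement is similar by constructing an inverse of the map $\Am\mapsto\vv_{\Am}$''.
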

\begin{proof}
	Assume that some $\Am \in \ff{q}^{r \times a}$
	is such that $\rank\left(\Gm_{*,J}-\Gm_{*,\Iint{1}{r}}\Am \right)<a$. This means that there exists a vector $\boldsymbol{\lambda}_{\Am} \in \ff{q^m}^a$ such that
	\begin{equation*}
	 - \Gm_{*,\Iint{1}{r}}\Am\trsp{\boldsymbol{\lambda}_{\Am}}+\Gm_{*,J}\trsp{\boldsymbol{\lambda}_{\Am}} =  \Gm_{*,\Iint{1}{r} \cup J}\underbrace{\begin{pmatrix}
 -\Am \trsp{\boldsymbol{\lambda}_{\Am}}\\		\trsp{\boldsymbol{\lambda}_{\Am}} \end{pmatrix}}_{:=\trsp{\vv_{\Am}}} = 0.
	\end{equation*}
	In particular, the vector $\vv_{\Am} \in \ff{q^m}^{a+r}$ belongs to $\dual{\mathcal{D}}$, its weight is $\leq a$ (as the entries of $\Am$ belong to $\fq$) and its support is spanned by the $a$ last coordinates. The converse statement is similar by constructing an inverse of the map $\Am \mapsto \vv_{\Am}$. \qed
      \end{proof}
Under the assumption that $\mathcal{D}$ behaves as a random code with parameters $[a+r,k]$, one can show that
\begin{proposition}
The probability that there exists in the dual of a random $\Fqm$-linear code of parameters $[a+r,k]$  a non zero codeword of weight $\leq a$ whose support is spanned by the $a$ last coordinates is upper-bounded by $\Theta\left(q^{(m+r)a-mk}\right)$ as $q\to \infty$.
\end{proposition}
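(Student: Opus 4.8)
The plan is a direct first-moment (union bound) argument. Set $N \eqdef a+r$ and note that in the regime $r+a>k$ under consideration here, the dual code $\dual{\mathcal D}$ has parameters $[N,\,N-k]$ over $\Fqm$ with $N-k\ge 1$. Since $\mathcal D$ is assumed to behave as a uniformly random $[N,k]$ code, $\dual{\mathcal D}$ is a uniformly random $(N-k)$-dimensional $\Fqm$-subspace of $\Fqm^{N}$. The event to be bounded is that $\dual{\mathcal D}$ contains a nonzero vector $\vv=(v_1,\dots,v_N)$ whose support equals $\langle v_{r+1},\dots,v_{r+a}\rangle_{\Fq}$; in particular $v_i\in\langle v_{r+1},\dots,v_{r+a}\rangle_{\Fq}$ for every $i\in\Iint{1}{r}$, which already forces the rank weight to be $\le a$. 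I would bound the probability of this event by the expected number of nonzero vectors of $\dual{\mathcal D}$ satisfying only this weaker containment property and then apply Markov's inequality.

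First I would count the candidate vectors in $\Fqm^{N}$. Let $\mathcal V$ be the set of $\vv\in\Fqm^{N}$ with $v_i\in\langle v_{r+1},\dots,v_{r+a}\rangle_{\Fq}$ for all $i\in\Iint{1}{r}$. Grouping by the last $a$ coordinates $\word w=(v_{r+1},\dots,v_{r+a})$ and writing $d(\word w)=\dim_{\Fq}\langle w_1,\dots,w_a\rangle$, there are exactly $q^{r\,d(\word w)}$ choices for $(v_1,\dots,v_r)$, hence
\begin{equation*}
\#\mathcal V=\sum_{\word w\in\Fqm^{a}}q^{r\,d(\word w)}\le q^{ma}\,q^{ra}=q^{(m+r)a},
\end{equation*}
and since the $\word w$ with $d(\word w)=a$ alone contribute $\Theta(q^{ma})\cdot q^{ra}$, one gets $\#\mathcal V=\Theta(q^{(m+r)a})$ as $q\to\infty$.

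Next I would use that, for any fixed nonzero $\vv\in\Fqm^{N}$, the probability that $\vv$ lies in a uniformly random $(N-k)$-dimensional subspace of $\Fqm^{N}$ is the ratio of the number of such subspaces through $\vv$ to the total, namely $\frac{q^{m(N-k)}-1}{q^{mN}-1}=\Theta(q^{-mk})$ as $q\to\infty$. Combining with the count, the expected number of nonzero $\vv\in\mathcal V\cap\dual{\mathcal D}$ is at most $\#\mathcal V\cdot\frac{q^{m(N-k)}-1}{q^{mN}-1}=\Theta(q^{(m+r)a-mk})$, so by Markov's inequality the probability that $\dual{\mathcal D}$ contains such a vector — a fortiori the probability that it contains one whose support is exactly spanned by its last $a$ coordinates — is $O(q^{(m+r)a-mk})$, i.e. it is upper-bounded by $\Theta(q^{(m+r)a-mk})$, as claimed.

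I do not expect a genuine obstacle; the argument is routine. The only points needing a little care are (i) the harmless relaxation from ``support spanned by the last $a$ coordinates'' to the superset $\mathcal V$, which is fine for an upper bound, and (ii) checking that both $\#\mathcal V$ and the subspace-containment probability have the claimed \emph{order} $\Theta(\cdot)$ rather than merely crude upper bounds, which is what lets one phrase the bound as $\Theta(q^{(m+r)a-mk})$. If a matching lower bound on the probability were also desired, it would follow from a second-moment estimate over $\mathcal V$, the pairwise overlaps being of lower order whenever $(m+r)a-mk$ stays bounded away from $0$.
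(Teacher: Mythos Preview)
Your argument is correct and follows the same first-moment/union-bound strategy as the paper: bound the probability by the expected number of bad vectors, estimate that expectation as (number of candidate vectors)$\times\Theta(q^{-mk})$, and conclude. The only difference is in the counting step: the paper further relaxes to \emph{all} vectors of rank weight $\le a$ and invokes a cited estimate on the rank-metric ball size $B_a=\Theta(q^{(m+r)a})$, whereas you count the set $\mathcal V$ directly via the decomposition over the last $a$ coordinates. Your counting is self-contained and in fact tighter (your $\mathcal V$ is exactly the set described in the statement, so no relaxation is really needed), while the paper's version is shorter at the cost of an external reference; both land on the same $\Theta(q^{(m+r)a})$ and hence the same bound.
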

\begin{proof}
This probability is upper-bounded by the probability that there exists simply a non zero codeword of weight $\leq a$ in such a code. Let $X$ be the number of such codewords. We use the fact that $\Prob{X \neq 0} \leq \esp(X)$ and that the
expected number $\esp(X)$ of non-zero vectors of weight $\leq a$ in such a code is given by
\begin{eqnarray*}
\esp(X) & = & \frac{B_a - 1}{q^{mk}},
\end{eqnarray*}
where $B_a$ is the size of a ball of radius $a$ in $\Fqm^{a+r}$ in the rank metric. By using 
\cite[Proposition 1]{L14a} the size of such a ball is of the form
$\Theta\left(q^{(m+a+r)a-a^2}\right)=\Theta\left(q^{(m+r)a}\right)$ for any nonnegative integer $a \leq m$.
We deduce the proposition from this. \qed
\end{proof}

\subsection{MinRank instances}\label{sec:hyb_minrank}
This reduction sketched for the RD problem also applies to MinRank. 
Consider a MinRank instance $(\Mm_0,\dots,\Mm_K)$ with target rank
$r$, and denote by $\textstyle{\Em=\Mm_0+\sum_{i=1}^K x_i\Mm_i}$ the rank $r$ matrix we are looking for. To explain the form taken by the reduced RD instances we got in Subsection \ref{sec:hyb_rd}, it was convenient to put the
generator matrix of the transformed code $\Cc_{\Am}=\Cc \Pm_{\Am}$ into systematic form. It will be helpful here to use a similar notion in the MinRank case by viewing a matrix as the vector formed by the concatenation of its rows.  To define the relevant systematic form we will use, 
we bring in the invertible linear map
\begin{align}
  \varphi:  \Fq^{m\times n}& \to \Fq^{mn}\\
  \Am & \mapsto   (\Am_{i,j})_{i\in\Iint{1}{m},j\in\Iint{1}{n}}\notag{}
\end{align}
where the image of $\varphi(\Am)$ is formed by the entries of $\Am$ in
column-major order (we could equivalently take the row-major order).
Using $\varphi$ we define the generator matrix associated to a MinRank instance as follows.
\begin{definition}
  Let $\Mm_1,\dots,\Mm_K$ be $K$ matrices in $\Fq^{m\times n}$, and define $\mathcal L$ the matrix code generated by the $\varphi(\Mm_i)$'s. Then the following matrix $\Lm$ is a $K\times mn$ generator matrix of $\mathcal L$:
  \begin{align*}
    \Lm(\Mm_1,\dots,\Mm_K) &\eqdef
          \begin{pmatrix}
            \varphi(\Mm_1)\\\vdots\\\varphi(\Mm_K)
          \end{pmatrix}\in\Fq^{K\times mn}.
  \end{align*}
\end{definition}
As noted in \cite[\S 4.4]{BESV22}, any elementary row operation on $\Lm$ corresponds to linear transformations of the variables $x_i$, i.e. we can always transform the initial MinRank instance to an equivalent one with $\Lm$ in echelon form. From now on, we will assume that $\Lm$ is in echelon form.
\begin{definition}
  We say that MinRank instance is in systematic form if its associated generator matrix is. We denote by $S$ the systematic positions. 
  \end{definition}
  It is clear that
  \begin{fact}
  \label{fa:easy}
  If the MinRank instance is in systematic form, we can equivalently reduce $\varphi(\Mm_0)$   w.r.t. the generator matrix, then it has $K$ zeros in positions belonging to $S$. In this case, $\textstyle{\varphi(\sum_{i=0}^K x_i\Mm_i)}$
   contains $K$ consecutive positions equal to $(x_i)_{i \in S}$, i.e. the $K$ entries of the matrix $\Em$ belonging to $S$ are exactly the $K$ corresponding linear variables.
\end{fact}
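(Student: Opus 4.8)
The plan is to unwind the definitions of $\varphi$, of the associated generator matrix $\Lm$ and of systematic form, after which the statement will follow from a one-line computation. Since $\varphi$ is $\fq$-linear, applying it to $\Em = \Mm_0 + \sum_{i=1}^K x_i\Mm_i$ gives
\[
\varphi(\Em) = \varphi(\Mm_0) + \sum_{i=1}^K x_i\,\varphi(\Mm_i),
\]
so the rows of $\Lm$ are exactly the vectors multiplied by the $x_i$'s. I would write $S=\{s_1,\dots,s_K\}\subset\Iint{1}{mn}$ for the systematic positions, ordered so that the pivot of row $i$ of $\Lm$ lies in column $s_i$. Being in systematic (reduced echelon) form means $\Lm_{\any,S}=\Im_K$, i.e. $\varphi(\Mm_i)_{s_j}=\delta_{i,j}$ for all $i,j\in\Iint{1}{K}$.

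First I would carry out the announced reduction of $\varphi(\Mm_0)$. Put $c_i\eqdef\varphi(\Mm_0)_{s_i}\in\fq$ and $\Mm_0'\eqdef\Mm_0-\sum_{i=1}^K c_i\Mm_i$. Then $\varphi(\Mm_0')_{s_j}=c_j-\sum_i c_i\delta_{i,j}=0$ for every $j$, so $\varphi(\Mm_0')$ vanishes on the $K$ positions of $S$, which is the first assertion. The key point to spell out carefully is that this substitution yields an \emph{equivalent} MinRank instance: from $\Mm_0+\sum_i x_i\Mm_i=\Mm_0'+\sum_i(x_i+c_i)\Mm_i$ one sees that $(\Mm_0',\Mm_1,\dots,\Mm_K)$ has the same rank-$r$ matrices $\Em$ as the original instance, through the affine change of variables $x_i\mapsto x_i+c_i$, which preserves ranks and is a bijection on the solution sets.

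Having replaced $\Mm_0$ by $\Mm_0'$ and relabelled, evaluating $\varphi(\Em)$ at a systematic position gives, for each $j\in\Iint{1}{K}$,
\[
\varphi(\Em)_{s_j}=\underbrace{\varphi(\Mm_0)_{s_j}}_{=0}+\sum_{i=1}^K x_i\,\varphi(\Mm_i)_{s_j}=\sum_{i=1}^K x_i\,\delta_{i,j}=x_j,
\]
so the $K$ entries of $\Em$ indexed by $S$ are precisely the linear variables $x_1,\dots,x_K$. These $K$ positions may be taken to be $\Iint{1}{K}$ after permuting the columns of $\Lm$ (a harmless reindexing of the matrix entries), which matches the ``consecutive'' wording of the statement.

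I do not expect any structural obstacle; the whole thing is bookkeeping. The two points to be careful about are (i) recording ``reducing $\varphi(\Mm_0)$ with respect to $\Lm$'' as a genuine equivalence of MinRank instances — the affine substitution above — rather than as an ad hoc rewriting, and (ii) keeping the identification between systematic positions of $\Lm$ and entries of $\Em$ consistent with the column-major convention fixed in the definition of $\varphi$. Everything else reduces to the identity $\varphi(\Mm_i)_{s_j}=\delta_{i,j}$ that is built into the definition of systematic form.
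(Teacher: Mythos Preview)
Your proof is correct. The paper does not actually prove this fact at all: it is introduced with ``It is clear that'' and left without argument, so your write-up supplies precisely the bookkeeping the paper omits. The core identity $\varphi(\Mm_i)_{s_j}=\delta_{i,j}$ coming from $\Lm_{\any,S}=\Im_K$, the reduction $\Mm_0' = \Mm_0 - \sum_i \varphi(\Mm_0)_{s_i}\Mm_i$, and the affine change $x_i\mapsto x_i+c_i$ showing equivalence of instances are exactly the right ingredients, and your final evaluation $\varphi(\Em)_{s_j}=x_j$ is the intended conclusion.
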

\begin{remark}
  It is not always possible to put a MinRank instance in systematic form, as not any permutation of columns in $\Fq^{nm}$ preserves the rank (the permutation needs to permute blocks of columns in the corresponding matrix).
  But as noted in \cite{BESV22}, a random MinRank instance will be in systematic form with high probability.
  
\end{remark}

We use the same notation as in \eqref{eq:J} and \eqref{eq:Jch} for $J$ and $\Jch$ and denote by $I$ the set of positions of $\Iint{1}{mn}$ that correspond to the 
columns indexed by the positions in $J$, that is $I = \cup_{j\in J}\Iint{(j-1)m+1}{jm}$. Following the approach in \cite[Prop. 3]{GRS16}, we first
analyze the complexity of solving the MinRank instance with the 
columns in $J$ specialized to zero. We will then see how we can reduce to
this case, by using either a deterministic, or a probabilistic
approach.
\begin{proposition}[Assuming the error is zero on coordinates in $J$]
  \label{prop:minrankZero}
  Consider a MinRank instance $(\Mm_0,\dots,\Mm_K)$ in $\Fq^{m\times n}$
  with target rank $r$.  Assume that $am\le K$ and that the solution
  $\xv$ satisfies $\Em_{\any,J}=\zerom_{m\times a}$, or equivalently
  $\varphi(\Mm_0)_{I}+\xv\Lm_{\any,I}=\zerom_{am}$. Let
  $\mathcal L'\eqdef \sh{I}{\mathcal L}$ be the code $\mathcal L$
  shortened at $I$.  Assume that $\sh{I}{\mathcal L}$ is of
  dimension $K-am$, then a solution $\xv$ for $(\Mm_0,\dots,\Mm_K)$
  with target rank $r$ can be deduced from the solution of a smaller
  MinRank instance $(\Mm_0',\dots,\Mm_{K-am}')$ in
  $\Fq^{m\times (n-a)}$ with target rank $r$.

  More precisely, by Gaussian elimination on $\Lm$ we can obtain a
  generator matrix of $\mathcal L$ in systematic form on the columns
  in $I$, i.e. after permuting positions, so that the last positions belong to $I$:
  \begin{align*}
     \Dm  \Lm &= \begin{pmatrix}
   \Lm' & \zerom_{(K-am)\times {am}} \\
           \Bm & \Im_{am}
           \end{pmatrix}
  \end{align*}
  for some invertible matrix $\Dm\in\Fq^{K\times K}$. Then $\Lm'\in\Fq^{(K-am)\times m(n-a)}$ is a generator matrix of   $\mathcal L'$. Define $\Mm_i'$ to be the $m\times (n-a)$ matrix corresponding to the $i$-th row in $\Lm'$, and\footnote{We abusively use the same name $\varphi:\Fq^{m\times n}\to\Fq^{mn}$ and $\Fq^{m\times (n-a)}\to\Fq^{m(n-a)}$.} $\Mm'_0=\varphi^{-1}(\varphi(\Mm_0)_{\Ich}-\varphi(\Mm_0)_{I}\Bm)$ of size $m\times (n-a)$, where $\Ich \eqdef \Iint{1}{mn} \setminus I$. Then $\Mm'_0,\Mm'_1,\dots,\Mm'_{K-am}\in\Fq^{m\times (n-a)}$ is a  MinRank instance with target rank $r$, and any solution $\xv'$ of this instance gives a solution $\xv=\Dm(\xv'\;\xv'')$ of the initial instance with $\xv''=-\varphi(\Mm_0)_{I}$.
\end{proposition}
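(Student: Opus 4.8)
The plan is to follow the proof of \cref{prop:shJ} for the RD case almost verbatim, the only additional work being the bookkeeping through the column-major map $\varphi$. The first step is purely linear-algebraic: since $am\le K$ and, by hypothesis, $\sh{I}{\mathcal L}$ has dimension $K-am$, a full-rank generator matrix $\Lm$ of $\mathcal L$ can be brought by Gaussian elimination --- after permuting coordinates so that $I$ occupies the last $am$ positions --- into the block form $\Dm\Lm=\left(\begin{smallmatrix}\Lm' & \zerom\\ \Bm & \Im_{am}\end{smallmatrix}\right)$ displayed in the statement, with $\Dm\in GL(K,\Fq)$; indeed, the hypothesis on $\dim\sh{I}{\mathcal L}$ is exactly what ensures both that the $\Ich$-projections of the codewords of $\mathcal L$ vanishing on $I$ already span a $(K-am)$-dimensional space and that the restriction map to the $I$-coordinates is onto $\Fq^{am}$, so that the identity block can be produced. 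The top block $\Lm'\in\Fq^{(K-am)\times m(n-a)}$ then generates $\mathcal L'=\sh{I}{\mathcal L}$, and the $\Mm'_i$ for $i\ge 1$ are the matrices $\varphi^{-1}$ of its rows. Recall that, as noted just before the statement, an invertible row operation on $\Lm$ corresponds to an invertible linear change of the variables $x_i$, so replacing $\Lm$ by $\Dm\Lm$ is harmless.

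The second step transports the solution $\xv$. Writing $(\xv',\xv'')=\xv\Dm^{-1}$ with $\xv'\in\Fq^{K-am}$, $\xv''\in\Fq^{am}$, so that $\xv=(\xv',\xv'')\Dm$, and applying $\varphi$ to $\Em=\Mm_0+\sum_i x_i\Mm_i$, one gets $\varphi(\Em)=\varphi(\Mm_0)+(\xv',\xv'')\Dm\Lm=\varphi(\Mm_0)+(\xv'\Lm'+\xv''\Bm,\, \xv'')$. Reading off the coordinates in $I$ --- which, by the very definition $I=\bigcup_{j\in J}\Iint{(j-1)m+1}{jm}$ and the column-major choice of $\varphi$, are exactly the entries of the columns of $\Em$ indexed by $J$ --- yields $\varphi(\Em)_I=\varphi(\Mm_0)_I+\xv''$, so the hypothesis $\Em_{\any,J}=\zerom_{m\times a}$ forces $\xv''=-\varphi(\Mm_0)_I$. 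Reading off the coordinates in $\Ich$ then gives $\varphi(\Em)_{\Ich}=\bigl(\varphi(\Mm_0)_{\Ich}-\varphi(\Mm_0)_I\Bm\bigr)+\xv'\Lm'=\varphi(\Mm'_0)+\xv'\Lm'$, i.e. $\Em_{\any,\Jch}=\Mm'_0+\sum_{i=1}^{K-am}x'_i\Mm'_i$.

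The third step checks that this is genuinely an equivalence of MinRank problems. Since every column of $\Em$ indexed by $J$ is zero, deleting these columns leaves the rank unchanged, so $\rank(\Em_{\any,\Jch})=\rank(\Em)=r$; hence $\xv'$ is a solution of $(\Mm'_0,\dots,\Mm'_{K-am})$ with target rank $r$. Conversely, given any solution $\xv'$ of the smaller instance, set $\xv''=-\varphi(\Mm_0)_I$ and $\xv=(\xv',\xv'')\Dm$; unwinding the same identities shows $\varphi(\Em)_I=\zerom_{am}$ and $\varphi(\Em)_{\Ich}=\varphi(\Mm'_0)+\xv'\Lm'$, so $\Em$ has its $J$-columns zero and its $\Jch$-columns of rank $r$, whence $\rank(\Em)=r$ and $\xv$ solves the original instance.

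The only genuinely delicate point is the identification, through $\varphi$, of ``shortening $\mathcal L$ at the positions $I\subset\Iint{1}{mn}$'' with ``deleting the columns $J\subset\Iint{1}{n}$'' of the corresponding $m\times n$ matrices, and dually of the last $am$ coordinates of $\varphi(\Em)$ with the last $a$ columns of $\Em$: this is precisely why $\varphi$ is taken in column-major order and why $I$ was set to $\bigcup_{j\in J}\Iint{(j-1)m+1}{jm}$, and it is what makes $\varphi^{-1}$ carry $\mathcal L'$ and its generator rows to a genuine MinRank instance over $\Fq^{m\times(n-a)}$. Once this correspondence is granted, every remaining step is the same elementary manipulation as in the proof of \cref{prop:shJ}.
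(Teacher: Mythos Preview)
Your proof is correct and follows essentially the same approach as the paper's: both define $(\xv',\xv'')=\xv\Dm^{-1}$, read off the $I$-coordinates of $\varphi(\Em)=\varphi(\Mm_0)+(\xv',\xv'')\Dm\Lm$ to obtain $\xv''=-\varphi(\Mm_0)_I$, and then read off the $\Ich$-coordinates to identify $\Em_{\any,\Jch}$ with the smaller MinRank instance. Your version is slightly more detailed --- you justify the block form of $\Dm\Lm$ via the dimension hypothesis, spell out the converse direction, and make explicit the role of the column-major ordering of $\varphi$ --- but the argument is the same.
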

\begin{proof}
To simplify the explanations, we assume that the positions in $\Iint{1}{mn} $ have been permuted, so that the last $am$ positions belong to $I$.
By hypothesis, we have $\textstyle{\Dm\Lm_{\any,I}=\binom{\zerom}{\Im_{am}}}$, so that if $\xv\Dm^{-1}=(\xv'\;\xv'')$ with $\xv'$ of size $K-am$, the hypothesis $\varphi(\Mm_0)_{I}+\xv\Lm_{\any,I}=\zerom$ is equivalent to $\xv''+\varphi(\Mm_0)_{I}=\zerom$. As $\Em_{J}=\zerom$, the matrix $\Em_{\Jch}$ has rank $r$, and is given by
  \begin{align*}
\varphi(\Em_{\Jch})=    \varphi(\Em)_{\Ich} &= \xv'\Lm'+\xv''\Bm + \varphi(\Mm_0)_{\Ich}\\
                        &= \xv'\Lm' - \varphi(\Mm_0)_{I}\Bm + \varphi(\Mm_0)_{\Ich}.\\
\text{i.e.} \;    \Em_{\Jch}&= \Mm_0' + \sum_{i=1}^{K-am} x_i'\Mm_i'.
  \end{align*}
  We get the smaller MinRank instance described in the proposition.
  \qed
\end{proof}
\paragraph{A deterministic way to reduce to the zero case.}
Similarly to the RD case,  we can reduce a MinRank problem of parameters $(m,n,K,r)$ to solving $q^{ar}$ MinRank instances of parameters $(m,n-a,K-am,r)$ obtained by multiplying the $\Mm_i$'s by $\Pm_{\Am}$ in $\cP$, then under the assumption that all shortened codes are of rank $K-am$ we can apply Proposition \ref{prop:minrankZero} to them.
Eventually, exactly one of the resulting 
$(\Mm_0 \Pm_{\Am},\dots,\Mm_K\Pm_{\Am})$ instances will have a solution that is zero on the columns $J$, and then leads to the desired solution. 
We give here an example where it is always true provided that $(r+a)m\le K$.
  \begin{lemma}\label{lemma:minranksystform}
    Assume the MinRank instance is in systematic form on a set of positions $S$ that contains $\Iint{1}{rm}\cup I$,  
  then $\sh{I}{\Lm_{\Am}}$ has rank $K-am$ for all $\Am\in\Fq^{r\times a}$.
  \end{lemma}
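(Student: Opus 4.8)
The plan is to transpose the argument of \cref{prop:r+a<=k} to the matrix-code picture. First I would normalize by a relabelling of the $n$ columns of all the $\Mm_i$'s so that the columns $\Iint{1}{r}$ stay fixed and $J$ becomes $\Iint{r+1}{r+a}$ (possible since the hypothesis forces $r+a\le n$): such a permutation moves the $\varphi$-coordinates only block by block, hence preserves the MinRank structure and the dimension of any shortening, it turns $\Iint{1}{rm}\cup I$ into $\Iint{1}{(r+a)m}$, and it turns $\Pm_{\Am}$ into
\[
\Pm_{\Am}=\begin{pmatrix}\Im_r & -\Am & \zerom_{r\times(n-a-r)}\\ \zerom_{a\times r} & \Im_a & \zerom_{a\times(n-a-r)}\\ \zerom & \zerom & \Im_{n-a-r}\end{pmatrix}.
\]
After this step the hypothesis reads $\Iint{1}{(r+a)m}\subseteq S$. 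Since $\dim\sh{I}{\mathcal C}=\dim\mathcal C-\rank(\Gm_{\any,I})$ for any generator matrix $\Gm$ of a code $\mathcal C$ (the kernel of the projection onto the coordinates in $I$ maps isomorphically onto the shortened code), and since $\Lm_{\Am}$ has the same rank $K$ as $\Lm$ because $\Mm\mapsto\Mm\Pm_{\Am}$ is an invertible linear map on $\Fq^{m\times n}$ (concretely $\Lm_{\Am}=\Lm(\Pm_{\Am}\otimes\Im_m)$), it is enough to prove that $\rank\bigl((\Lm_{\Am})_{\any,I}\bigr)=am$, i.e.\ that projecting $\mathcal L_{\Am}$ onto its $I$-coordinates is onto $\Fq^{am}$.

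For the core step I would compute that projection explicitly. A generic element of $\mathcal L_{\Am}$ is $\varphi(\Wm\Pm_{\Am})$ with $\Wm$ ranging over $\langle\Mm_1,\dots,\Mm_K\rangle_{\Fq}$, and its $I$-coordinates are the entries of the $m\times a$ block $(\Wm\Pm_{\Am})_{\any,J}=\Wm_{\any,\Iint{r+1}{r+a}}-\Wm_{\any,\Iint{1}{r}}\Am$, by the shape of $\Pm_{\Am}$. Because $\Iint{1}{(r+a)m}$ lies inside the information set $S$ of $\mathcal L$, the corresponding columns of a generator matrix in systematic form on $S$ are distinct standard basis vectors, so the projection of $\mathcal L$ onto these coordinates is surjective; equivalently, the map $\Wm\mapsto\Wm_{\any,\Iint{1}{r+a}}$ is onto $\Fq^{m\times(r+a)}$. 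Hence, given any $\Tm\in\Fq^{m\times a}$, I can choose $\Wm$ with $\Wm_{\any,\Iint{1}{r}}=\zerom$ and $\Wm_{\any,\Iint{r+1}{r+a}}=\Tm$, which gives $(\Wm\Pm_{\Am})_{\any,J}=\Tm$. This shows the projection of $\mathcal L_{\Am}$ onto the $I$-coordinates is surjective for every $\Am\in\Fq^{r\times a}$, whence $\rank\bigl((\Lm_{\Am})_{\any,I}\bigr)=am$ and $\dim\sh{I}{\mathcal L_{\Am}}=K-am$, as claimed.

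The step needing the most care is the bookkeeping around the column-major vectorization $\varphi$: checking that right multiplication by $\Pm_{\Am}$ on the $\Mm_i$'s becomes $\Lm\mapsto\Lm(\Pm_{\Am}\otimes\Im_m)$, that the $J$-columns of $\Wm\Pm_{\Am}$ are exactly $\Wm_{\any,\Iint{r+1}{r+a}}-\Wm_{\any,\Iint{1}{r}}\Am$, and that the relabelling bringing $J$ to $\Iint{r+1}{r+a}$ only ever permutes whole blocks of $m$ coordinates, so that it commutes with $\varphi$ and does not destroy the matrix structure (an arbitrary permutation of $\Fq^{mn}$ would). Apart from that, the argument is the exact analogue of the one already carried out for \cref{prop:r+a<=k}, now applied to $\mathcal L$ and its systematic generator matrix $\Lm$.
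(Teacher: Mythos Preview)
Your proof is correct and follows essentially the same approach as the paper: reorder the columns so that $J=\Iint{r+1}{r+a}$ and then show that $(\Lm_{\Am})_{\any,I}$ has full rank $am$. The only difference is in how that full-rank step is carried out: the paper explicitly computes each $\Mm_i^{\Am}$ on the first $r+a$ matrix-columns and writes down the resulting block structure of $\Lm_{\Am}$ (exhibiting an $\Im_{am}$ block on the $I$-positions), whereas you argue surjectivity of the projection abstractly by choosing $\Wm$ with $\Wm_{\any,\Iint{1}{r}}=\zerom$ and $\Wm_{\any,\Iint{r+1}{r+a}}=\Tm$, which the information-set hypothesis allows. Your route is slightly more conceptual and avoids the explicit block bookkeeping; the paper's route has the side benefit of displaying the full shape of $\Lm_{\Am}$.
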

  \begin{proof}
    The matrices $\Mm_i^{\Am}\eqdef \Mm_i\Pm_{\Am}$ are identical to
    $\Mm_i$ on the columns $\Jch$, and the columns in $J$ are
    $(\Mm_i^{\Am})_{\any,J}=(\Mm_i)_{\any,J}-(\Mm_i)_{\any,\Iint{1}{r}}\Am$.
    We reorder the positions so that the systematic positions are the
    $K$ first ones and such that $I = \Iint{rm+1}{(r+a)m}$.  If the
    MinRank instance is in systematic form then for $i\in\Iint{1}{K}$
    such that $i=(v-1)m+u$ with $v\in\Iint{1}{n}$ and
    $u\in\Iint{1}{m}$, we have that $ (\Mm_i)_{\any,\Iint{1}{r+a}}$
    has at most only one nonzero entry 1 in position $(u,v)$ if
    $v\le r+a$, and is all zero otherwise. This means that
    \begin{align*}
(\Mm_i)_{\any,\Iint{1}{r}}=\zerom_{m\times r} \text{ hence }      \Mm_i^{\Am}=\Mm_i & \text{ for } i\ge rm+1,
    \end{align*}
    and that
    \begin{align*}
      (\Mm_i^{\Am})_{\any,J}
      &=
 \begin{pmatrix}
            \zerom\\
            -\Am_{v,\any}\\\zerom
            \end{pmatrix} \leftarrow \text{ row } u
&\text{ for } i\in\Iint{1}{rm}.
    \end{align*}
Finally, this means that
\begin{align*}
  \Lm_{\Am} &=
              \bordermatrix{&&\text{\small positions in } I\cr
                               &\Im_{rm} & \left(\substack{\text{coefficients}\\\text{ depending}\\\text{ on } \Am}\right) &\zerom &  \Lm_{\Iint{1}{rm},\Iint{(a+r)m+1}{nm}}\cr
                                                                                                                 &\zerom & \Im_{am} & \zerom & \Lm_{\Iint{rm+1}{(r+a)m},\Iint{(a+r)m+1}{nm}}\cr
                &\zerom & \zerom & \Im_{K-(a+r)m} & \Lm_{\Iint{(r+a)m+1}{K},\Iint{(a+r)m+1}{nm}}
              }
\end{align*}
is full rank on the columns in $I=\Iint{rm+1}{(r+a)m}$, i.e. that $\sh{I}{\Lm_{\Am}}$ has rank $K-am$.
    \qed
  \end{proof}

\subsection{Probabilistic hybrid approach on  MinRank or RD instances}
\label{sec:hybridproba}

The approach given in the previous sections is a deterministic way to solve generic MinRank or RD instances. However, it does not work if the initial conditions on the solution $\Em$ of the MinRank problem or on the solution $\ev$ of the RD problem are not met, i.e. the first $r$ columns of $\Em$ are not linearly independent, or the first $r$ entries of $\ev$ are not linearly independent over $\Fq$. This can be fixed by considering instead a randomized algorithm, which consists in multiplying on the right the MinRank instance by a random $n \times n$ invertible
matrix $\Pm$ over $\Fq$ which gives with probability $\Om{1}$ a new instance of the $(m,n,K,r)$ MinRank problem which satisfies all the right assumptions and on which we can apply the aforementioned technique. Once we have solved the new MinRank problem, we recover the solution of the original MinRank by multiplying it on the right by $\Pm^{-1}$. A similar technique can also be used for the RD problem. This might even be improved slightly by multiplying on the right each time by a new $\Pm$ and making directly the bet that $\Em \Pm$ has all its $a$  columns in $J$ equal to $0$ (i.e we assume directly that we have an instance of the $(m,n-a,K-am,r)$ problem). This has a probability of 
$\Om{q^{-ar}}$ to happen. In both cases, we get a probabilistic algorithm of similar complexity as the deterministic algorithm, with the difference that it would work on {\em any} $(m,n,K,r)$ instance of
the MinRank problem.
\subsection{Complexity of the hybrid technique}
\label{sec:hyb_complexity}
Let $\mathcal A$ be an algorithm that solves the MinRank problem, and $T_{\mathcal A,\text{plain},(m,n,K,r)}$ its cost on a generic MinRank problem of parameters $(m,n,K,r)$.
In the MaxMinors case, the original purpose of fixing columns in $\Cm$ was to end up with an overdefined linear system. Here, fixing $a \geq 0$ columns  yields to the solving of a smaller problem.
Therefore, the cost of the hybrid technique is estimated by solving a minimization problem over $a \geq 0$.
Under the assumption 
that the resulting MinRank instances of parameters $(m,n-a,K - a m,r)$ behave as random, we have

\begin{proposition}\label{prop:cost_hybrid_sm}
	The time complexity of the proposed  hybrid technique on a generic MinRank problem of parameters $(m,n,K,r)$ is given by
	\begin{equation*}
T_{\mathcal A, \text{hybrid},(m,n,K,r)}  = \min_{a \geq 0}\left( q^{a  r}\cdot T_{\mathcal A, \text{plain},(m,n-a,K - a m,r)} \right).
	\end{equation*}
      \end{proposition}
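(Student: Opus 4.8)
The plan is to assemble the ingredients established in this section; the statement is essentially a complexity bookkeeping lemma, so the argument reduces to (i) counting how many reduced instances must be solved, (ii) identifying their parameters, and (iii) checking that the auxiliary linear algebra contributes only polynomial overhead.

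First I would fix an integer $a \geq 0$ and recall the construction of Subsection~\ref{sec:hyb_minrank}. The family $\cP$ of matrices $\Pm_{\Am}$ from \eqref{eq:set_P} is parametrized by $\Am \in \Fq^{r \times a}$, hence $\#\cP = q^{ar}$. Right-multiplying $(\Mm_0,\dots,\Mm_K)$ by $\Pm_{\Am}$ produces a MinRank instance with the same parameters whose solution matrix is $\Em\Pm_{\Am}$, and under Assumption~\ref{ass:basic} exactly one choice of $\Am$ forces $(\Em\Pm_{\Am})_{\any,J} = \zerom_{m\times a}$: for that $\Am$ the last $a$ columns of $\Em$ are written as the unique $\Fq$-linear combination of the first $r$ columns equal to them, which is precisely the column operation encoded by $\Pm_{\Am}$.

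Next I would invoke Proposition~\ref{prop:minrankZero}: provided $am \le K$ and the relevant shortened code has dimension $K-am$ --- which holds under the systematic-form hypothesis of Lemma~\ref{lemma:minranksystform}, or, generically, under the ``random instance'' assumption already made in the statement --- a single Gaussian elimination on $\Lm_{\Am}$ turns the instance with zero columns on $J$ into a MinRank instance $(\Mm_0',\dots,\Mm_{K-am}')$ in $\Fq^{m\times(n-a)}$ with the same target rank $r$, and any solution of the latter lifts back. Since we do not know a priori which $\Am$ is correct, we run $\mathcal A$ on all $q^{ar}$ reduced instances; exactly one carries a solution, which we recognize and lift. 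Forming each reduced instance (multiplication by $\Pm_{\Am}$, Gaussian elimination, reduction of $\varphi(\Mm_0)$) costs only a polynomial in the input size, hence is dominated by $T_{\mathcal A,\text{plain},(m,n-a,K-am,r)}$; so for this value of $a$ the total cost is $q^{ar}\cdot T_{\mathcal A,\text{plain},(m,n-a,K-am,r)}$.

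Finally, since $a=0$ recovers the plain algorithm, taking the minimum over all admissible $a\geq 0$ yields the claimed bound. If Assumption~\ref{ass:basic} fails, the probabilistic variant of Subsection~\ref{sec:hybridproba} (rerandomizing by a random invertible $\Pm$ over $\Fq$) achieves the same complexity up to a constant factor on \emph{any} instance. I expect the only genuinely delicate point to be that estimating $T_{\mathcal A,\text{plain},(m,n-a,K-am,r)}$ implicitly treats the $q^{ar}$ reduced instances as behaving like random $(m,n-a,K-am,r)$ MinRank instances; this is the heuristic flagged in the statement, and once it is granted the remainder is purely a matter of combining the preceding propositions. \qed
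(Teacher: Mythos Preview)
Your proposal is correct and matches the paper's approach: the paper does not give a separate proof of this proposition, treating it as an immediate consequence of the construction in Subsections~\ref{ss:rerandomizing}--\ref{sec:hybridproba} together with the stated assumption that the reduced instances behave as random. Your write-up simply makes explicit the bookkeeping (enumeration over $\cP$, invocation of Proposition~\ref{prop:minrankZero}, polynomial overhead of the reduction, minimization over $a$) that the paper leaves implicit.
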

We may obtain a similar statement in the RD case, where we consider any  algorithm $\mathcal{A}$ to solve RD.
\begin{proposition}\label{prop:cost_hybrid_rd}
	The time complexity of the proposed hybrid technique applied to an algebraic algorithm $\mathcal{A}$ to solve an RD problem of parameters $(m,n,k,r)$ is given by
	\begin{equation*}
		T_{\mathcal{A},\text{hybrid},(m,n,k,r)}  = \min_{a \geq 0}\left( q^{a  r}\cdot T_{\mathcal{A},\text{plain},(m,n-a,k - a,r)} \right).
	\end{equation*}
\end{proposition}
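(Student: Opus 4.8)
The plan is to assemble the pieces of \cref{sec:hyb_rd,sec:hybridproba} into a single algorithm and then do the cost bookkeeping; the statement is the RD analogue of \cref{prop:cost_hybrid_sm}, with \cref{prop:shJ} playing the role of \cref{prop:minrankZero}. Fix a nonnegative integer $a$ and set $J \eqdef \Iint{n-a+1}{n}$, $\Jch \eqdef \Iint{1}{n-a}$ as in \eqref{eq:J} and \eqref{eq:Jch}. Starting from the $(m,n,k,r)$ instance $(\yv,\Cc,r)$, I would form the $q^{ar}$ instances $(\yv\Pm_{\Am},\Cc_{\Am},r)$ indexed by $\Am \in \Fq^{r\times a}$, where $\Pm_{\Am} \in \cP$ is the matrix of \eqref{eq:set_P} and $\Cc_{\Am} \eqdef \{\cv\Pm_{\Am} : \cv \in \Cc\}$. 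Each is again an $(m,n,k,r)$ instance, since right multiplication by the invertible $\Fq$-matrix $\Pm_{\Am}$ preserves both the code dimension and the rank weight of $\ev\Pm_{\Am} = \yv\Pm_{\Am} - \cv\Pm_{\Am}$. Under \cref{ass:basic}, the first $r$ coordinates of $\ev$ form an $\Fq$-basis of $\text{Supp}(\ev)$, so exactly one $\Am^\star$ satisfies $(\ev\Pm_{\Am^\star})_J = \zerom_a$.

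For that $\Am^\star$, \cref{prop:shJ} converts $(\yv\Pm_{\Am^\star},\Cc_{\Am^\star},r)$ into an RD instance of parameters $(m,n-a,k-a,r)$, obtained by Gaussian elimination on a generator matrix of $\Cc_{\Am^\star}$ down to the partial systematic form \eqref{eq:condition}, and any solution of this smaller instance recovers the sought solution of $(\yv,\Cc,r)$. To run a black-box solver $\mathcal A$ uniformly on all $q^{ar}$ candidates, I need the hypothesis of \cref{prop:shJ}, namely $\dim \sh{J}{\Cc_{\Am}} = k-a$, for every $\Am$: when $a+r\le k$ this holds under the mild condition of \cref{prop:r+a<=k} (a systematic set of $\Cc$ containing $\Iint{1}{r}\cup J$), and when $a+r>k$, \cref{prop:r+a>k} reduces the failure event to the existence of a weight-$\le a$ codeword of $\dual{\mathcal D}$ supported on the last $a$ coordinates, which for a $\mathcal D$ behaving like a random $[a+r,k]$ code happens with probability $\OO{q^{(m+r)a-mk}} \to 0$; generically this is negligible and a degenerate instance is discarded. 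So, with overwhelming probability, the procedure ``for each $\Am \in \Fq^{r\times a}$, reduce via \cref{prop:shJ} and call $\mathcal A$ on the resulting $(m,n-a,k-a,r)$ instance, returning the first success'' is correct.

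For the cost, each of the $q^{ar}$ iterations performs one Gaussian elimination on a $k\times n$ matrix over $\Fqm$ — polynomial in the input size, hence dominated by the exponential cost of $\mathcal A$ — followed by one call to $\mathcal A$ on an $(m,n-a,k-a,r)$ instance, costing $T_{\mathcal A,\text{plain},(m,n-a,k-a,r)}$ by definition. Summing gives $q^{ar}\cdot T_{\mathcal A,\text{plain},(m,n-a,k-a,r)}$ up to polynomial factors, and the algorithm picks the best $a\ge 0$, which is the claimed minimum. If \cref{ass:basic} fails on the given instance, one first replaces it by $(\yv\Pm,\Cc\Pm,r)$ for a uniformly random invertible $\Pm\in\Fq^{n\times n}$; by the computation in \cref{sec:hybridproba} the first $r$ error coordinates become $\Fq$-independent with probability $\Om{1}$, so $\OO{1}$ expected retries restore the hypothesis without affecting the asymptotics (alternatively one bets directly that $\ev\Pm$ vanishes on $J$, which succeeds with probability $\Om{q^{-ar}}$ and gives the same count).

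The only genuinely delicate point, rather than routine bookkeeping, is the regime $a+r>k$ occurring in the minimization: there the reduction is valid only with high probability over the instance, so the proposition must be read as a statement about generic RD parameters, and one invokes the probability estimate accompanying \cref{prop:r+a>k} to ensure that all $q^{ar}$ reductions are simultaneously valid with overwhelming probability. Everything else — that right multiplication by $\Pm_{\Am}$ preserves the RD parameters and the rank weight, that precisely one $\Am$ is ``good'', and that the linear-algebra overhead is absorbed into $T_{\mathcal A}$ — is straightforward.
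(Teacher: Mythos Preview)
Your proposal is correct and follows exactly the approach the paper sets up: the paper states \cref{prop:cost_hybrid_rd} without an explicit proof, simply presenting it as the RD analogue of \cref{prop:cost_hybrid_sm} after having developed all the ingredients in \cref{sec:hyb_rd,ss:rerandomizing,sec:hybridproba}. You have correctly assembled these pieces---the $q^{ar}$ rerandomizations via $\Pm_{\Am}$, the reduction to a smaller instance via \cref{prop:shJ}, the case distinction $a+r\le k$ versus $a+r>k$ handled by \cref{prop:r+a<=k,prop:r+a>k}, and the probabilistic fallback of \cref{sec:hybridproba}---and the cost bookkeeping is the intended one.
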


In particular, this applies to the new \SMpfqm approach presented in this paper. The overall complexity may be easily computed by combining Proposition \ref{prop:recap} to obtain $T_{\text{SM-}\ff{q^m}^{+},\text{plain}}$ with Proposition \ref{prop:cost_hybrid_sm}.

\section{Estimated costs on MinRank and RD instances.}
\label{sec:numerical}

Finally, we provide the bit complexity of the attacks described in this paper on some parameter sets. First, we apply the hybrid technique described in Section \ref{sec:hybridMM} to the Support-Minors modeling on generic MinRank instances (see Proposition \ref{prop:cost_hybrid_sm}). The same technique is then used on the \SMpfqm system from Section \ref{sec:solvingfqm} to attack RD instances (see Proposition \ref{prop:recap} and Proposition \ref{prop:cost_hybrid_rd}). In both cases, these attacks are compared to former attacks on MinRank and RD.
{The {\tt magma} code used to produce the Tables and Figures is available on \url{https://github.com/mbardet/Rank-Decoding-tools}.}

\subsection{MinRank instances}
\label{ss:numerical}

On the plain MinRank problem, the approach of Section \ref{sec:hybridMM} on Support-Minors allows to reach smaller complexities than the ones obtained with the specialization technique of \cite{BBCGPSTV20} which consists in fixing linear variables. More interestingly, our proposed hybrid approach actually offers a trade-off between combinatorial attacks (e.g. Goubin's Kernel attack \cite{GC00}) and pure algebraic attacks. Indeed, the bet that we make can be seen as guessing $a \geq 0$ vectors in the right kernel of the low rank matrix $\Mm$ similary to \cite{GC00}, the difference being that we consider less vectors than $\textstyle\lceil \frac{K}{m} \rceil$. 

As an illustration, we give the complexity of our attack on the parameters of the MinRank based signature scheme \cite{BESV22b} in Table \ref{tab:minrank_challenge_sm_hybrid} which builds upon the seminal work of Courtois \cite{C01}. 
{Note that the parameters proposed in~\cite{BESV22b} already take into account our improved MinRank attack.}

\begin{table}
	\centering
	\caption{\label{tab:minrank_challenge_sm_hybrid}
          Bit complexity of the Kernel attack and the hybrid SM attack on the parameters from \cite{BESV22b}. The number of guessed vectors in the Kernel attack is equal to $\textstyle a := \lceil \frac{K}{n} \rceil$ and the final complexity in $\ff{q}$-operations is $\mathcal{O}\left(q^{a\cdot r} K^{\omega} \right)$.  For the SM attack, we report the triplet $(b,a,n_{\text{cols}})$ which leads to the best complexity: ``$a$" refers to the number of guessed columns, $``n_{\text{cols}}"$ is the number of columns in the reduced MinRank problem ($\leq n-a$) and $b$ is the degree at which we solve via SM. Finally, we adopt $\omega = 2$ as in \cite{BESV22b}, a constant factor of $7$ in Strassen's algorithm and we consider that a multiplication over $\ff{2^4}$ represents 23 binary operations.
          We also report in this table the optimized kernel attack as given in \cite{BESV22b} which improves  on the polynomial factor in front of the complexity.} 
	\begin{tabular}{|*{2}{c|}|c|c|c|c|c|c|}
          \hline
          $(q,m,n,K,r)$ & $\lambda$
          &  Kernel $(a)$  &Kernel in~\cite{BESV22} $(a)$
          &  SM Section \ref{sec:hybridMM}  $(b, a, n_{\text{cols}})$
          \\ \hline
          $(16,16,16,142,4)$ & 128 & 166 (9)  &158 (8) & 161 (5, 6, $n-a$)\\ \hline
          $(16,19,19,167,6)$ & 192 & 238 (9)  &231 (8)& 231  (7,  6,  $n-a$) \\ \hline
          $(16,22,22,254,6)$ & 256 & 311 (12)  &303 (11) & 297  (1, 11, $n-a$) \\ \hline
	\end{tabular}
\end{table}

\subsection{RD instances}

Recall that the cost of the best combinatorial attack of \cite{AGHT18} in $\ff{q}$ operations is
 \begin{equation}\label{eq:combi}
 \mathcal{O}\left( (n-k)^{\omega} m^{\omega} q^{r\left\lceil\frac{(k+1)m}{n}\right\rceil-m} \right),
 \end{equation}
 where $\omega$ is the linear algebra constant. It is now common to take $\omega=2$: this value is optimistic, but take into account any algorithm that could take advantage of the structure of the matrices. 
 Also, cryptographically relevant RD instances are such that $r = \mathcal{O}(\sqrt{n})$ or such that the weight $r$ is closer to the Gilbert-Varshamov bound, and we selected parameter sets corresponding to these two situations. The $r = \mathcal{O}(\sqrt{n})$ regime is for instance the one encountered in the NIST submissions ROLLO and RQC. In Table \ref{tab:ROLLO1}, we give the binary logarithm of the complexity of our attack ``over $\ff{q^m}$" on ROLLO-I parameters and we also keep track of the optimum values of $a$ and $b$. This cost is compared to the one of the combinatorial attack of \cite{AGHT18} (``comb") and to the one of the MaxMinors attack (``MM.").

\begin{table}
  \centering
  \caption{\label{tab:ROLLO1} Comparison between known attacks on the new ROLLO-I parameters in~\cite{BBCGPSTV20} and \cite{AABBBBCDGHZ20} after the 2021-04-21 update. The ``*''-symbol means that the best attack is obtained on the derived code from  key attack with parameters $(m,n,k,r) = \textstyle{ (m,2k-\left\lfloor\frac{k}d\right\rfloor,k-\left\lfloor\frac{k}d\right\rfloor,d)}$, where $d$ refers to the rank of the moderate weight codewords in the masked LRPC code. Otherwise, the attack is on an RD problem with parameters $(m,2k,k,r)$. The struck out numbers are the underestimated values from \cite[Table 3]{BBCGPSTV20}. We also adopt $\omega=2$, whereas previous values where computed with $\omega=2.81$.}
  \begin{tabular}{|l|c|c|c|c|c||c|c|c||c|c|c||c|c|}
    \hline
    Instance & $q$ & $k$ & $m$ & $r$ & $d$ &  \MMfq & $a$ & $p$ & \SMpfqm & $b$&$a$ &comb\\ 
    \hline
    {\tiny new2ROLLO-I-128} & 2 & 83 & 73 & 7 & 8  & 205 & 18& 0& \sout{180} {\bf 202}  &2 & 13 &  212  \\
    \hline
    {\tiny new2ROLLO-I-192} & 2 & 97 & 89 & 8 & 8  & 226*& 17 & 0&\sout{197*} {\bf 223*}   &1&14& { 282*}  \\
    \hline
    {\tiny new2ROLLO-I-256} & 2 & 113 & 103 & 9 & 9 & 371* & 30 &1&\sout{283*}  {\bf 366*}  &1&27& { 375*}  \\
    \hline
    \hline
    {\tiny ROLLO-I-128-spe} & 2 & 83 & 67 & 7 & 8  & 212 & 19& 0& 214 &2& 15 & {\bf 196}  \\
    \hline
    {\tiny ROLLO-I-192-spe} & 2 & 97 & 79 & 8 & 8  & 242*& 19 & 0&{\bf 241*}  &2&15&{ 251*} \\
    \hline
    {\tiny ROLLO-I-256-spe} & 2 & 113 & 97 & 9 & 9 & 380* & 31 &0&{376*} &2&27& {\bf 353*} \\
    \hline
  \end{tabular}
\end{table}

\cref{fig:graph_MM_combi,fig:graph_MM_combiq} contain a broader comparison between the same attacks for fixed $(m,n,k)=(31,33,15)$ and  weight $r$ between $2$ and $d_{\text{RGV}}=10$, for $q=2$ in \cref{fig:graph_MM_combi} and $q=256$ in \cref{fig:graph_MM_combiq}. We can see that for $q=2$, the algebraic attacks become less efficient than combinatorial attacks for large $r$.
This justifies the current trend for rank-based proposals to  now consider a different regime where the weight $r$ is chosen closer to the rank Gilbert-Varshamow bound $d_{\text{RGV}} = \mathcal{O}(n)$, see for instance \cite{AADGZ22,BBBG22}.
Note also that in the scheme of \cite{AADGZ22} which uses LRPC codes, choosing $d$ of the same order as $r$ somehow increases the rank of the moderate weight codewords in the masked LRPC code and therefore may allow to gain confidence in the indistinguishability assumption.
This behavior can be explained by the fact that for the combinatorial attacks, the exponential part of the complexity all depends on $q$, whereas for the \MMfq{} attack, the cost $\textstyle{q^{ar}\binom{n-a}{r}^\omega}$ contains a part depending on $q$ whereas the other part $\textstyle{\binom{n-a}{r}^\omega}$ does not depends on $q$. This is the same for \SMpfqm{}. 

\pgfplotstableread{
            x      y        label
           2       19    MM 
            3       26    MM 
           4       33    MM 
           5       48    MM 
           6       70    MM 
           7       96    MM 
           8       114    MM 
           9       139    MM 
           10       156    MM 
            2       20    b
            3       36    b
            4       52    b
            5       68    b
            6       84    b
            7       100    b
            8       116    b
            9       132    b
            10       148    b
            5       55    alg
            6       83    alg
            7       101    alg
            9       141    alg
            10       159    alg
            }\newdataFigA

\pgfplotstableread{
  x      y        label
             2 0 SMb
             3 0 SMb
             4 0 SMb
            5 1 SMb
            6 2 SMb
            7 1 SMb
            9 1 SMb
            10 1 SMb
}\newdataSMbA
 
\pgfplotstableread{
      x      y        label
            2 0 SMa
            3 0 SMa
            4 0 SMa
            5 1 SMa
            6 4 SMa
            7 7 SMa
            9 10 SMa
            10 11 SMa
 }\newdataSMaA

\pgfplotstableread{
  x      y        label
            2 0 MMa0
             3 0 MMa0
             4 0 MMa0
             5 2 MMa0
             6 5 MMa0
             7 8 MMa0
             8 9 MMa0
             9 11 MMa0
             10 12 MMa0
 }\newdataMMaA

\pgfplotstableread{
            x      y        label
           2       26    MM 
            2       34    b
            3       33    MM 
            3       162    b
            4       40    MM 
            4       290    b
            5       125    MM 
            5       65    alg
            5       418    b
            6       287    MM 
            6       242    alg
            6       546    b
            7       495    MM 
            7       439    alg
            7       674    b
            8       625    MM 
            8       802    b
            9       839    MM 
            9       777    alg
            9       930    b
            10       1003    MM 
            10       935    alg
            10       1058    b
           }\dataFigB

\pgfplotstableread{
  x      y        label
            2 0 MMa0
             3 0 MMa0
             4 0 MMa0
             5 2 MMa0
             6 5 MMa0
             7 8 MMa0
             8 9 MMa0
             9 11 MMa0
             10 12 MMa0
 }\dataMMaB

\pgfplotstableread{
  x      y        label
     2 0 SMa
     3 0 SMa
      4 0 SMa
      5 0 SMa
      6 3 SMa
      7 6 SMa
      9 10 SMa
      10 11 SMa
 }\dataSMaB

\pgfplotstableread{
  x      y        label
     2 0 SMb
     3 0 SMb
      4 0 SMb
      5 2 SMb
      6 13 SMb
      7 24 SMb
      9 1 SMb
      10 1 SMb
 }\dataSMbB

 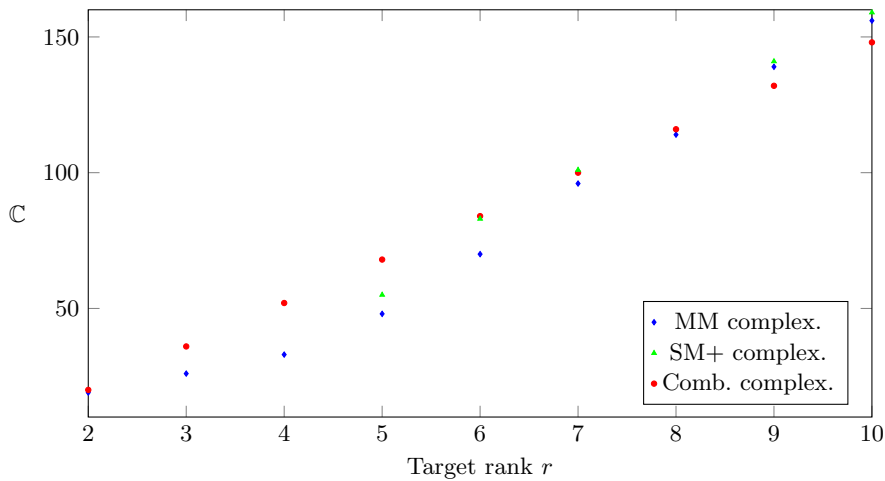
\begin{figure}[ht!]
  \centering
  \begin{tikzpicture} 
    \begin{axis}
      [
      legend entries={MM complex., SM+ complex., Comb. complex.},
      legend pos=south east,
      height=7.0cm,
      width=12.0cm,
      xmin=2,xmax=10,
      ymin=10,ymax=160,
      mark size=1pt,
      xlabel={Target rank $r$},
      ylabel={$\mathbb{C}$},
      ylabel style={rotate=-90}]
        \addplot [
            scatter,
            only marks,
            point meta=explicit symbolic,
            scatter/classes={
                MM={mark=diamond*,blue},
                alg={mark=triangle*,green},
                b={mark=*,red}    
            },
        ] table [meta=label] {\newdataFigA };
    \end{axis}
  \end{tikzpicture}
  \caption{Comparison between the theoretical 
$\log_2$  complexities $\mathbb{C}$ of \MMfq/\SMpfqm (the best one, hybrid and punctured version) and of the combinatorial attack for 
  RD instances with fixed $(m,n,k)=(31,33,15)$ and various values of $r$. The rank Gilbert-Varshamov bound is $d_{\text{RGV}}(m,n,k,q = 2)=10$.
   \label{fig:graph_MM_combi}}
\end{figure}

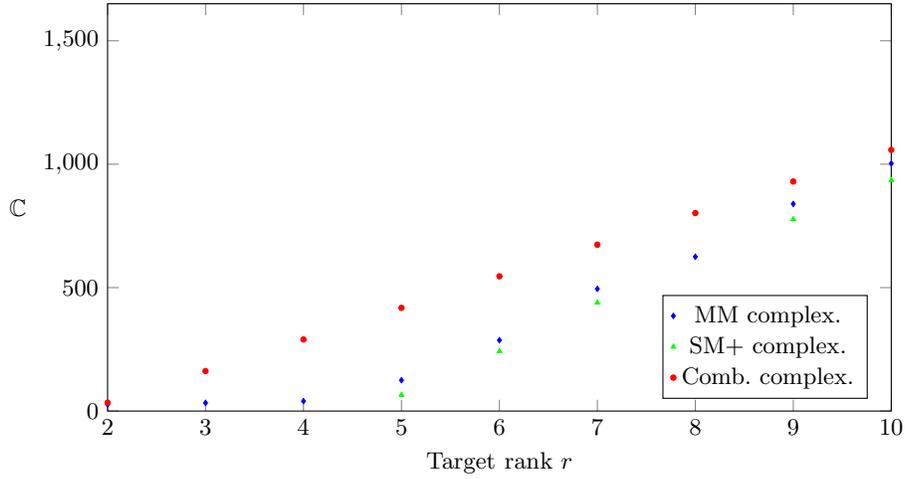
\begin{figure}[ht!]
  \centering
  \begin{tikzpicture} 
    \begin{axis}
      [
      legend entries={MM complex., SM+ complex., Comb. complex.},
      legend pos=south east,
      height=7.0cm,
      width=12.0cm,
      mark size=1pt,
      xmin=2,xmax=10,
      ymin=0,ymax=1650,
      xlabel={Target rank $r$},
      ylabel={$\mathbb {C}$},
      ylabel style={rotate=-90}]
        \addplot [
            scatter,
            only marks, 
            point meta=explicit symbolic,
            scatter/classes={
                MM={mark=diamond*,blue},
                alg={mark=triangle*,green},
                b={mark=*,red}    
            },
        ] table [meta=label] {\dataFigB };
    \end{axis}
  \end{tikzpicture}
  \caption{Same parameters as \cref{fig:graph_MM_combi} but with $q=2^8$.
   \label{fig:graph_MM_combiq}}
\end{figure}

\begin{figure}[ht!]
  \centering
  \begin{tikzpicture} 
    \begin{axis}
      [
      axis y line*=left,
      y axis line style=blue,
      legend entries={$a$ for MM (any $q$),$a$ for SM+ ($q=2$), $a$ for SM+ ($q=2^8$)},
      legend pos= south east,
      height=8.0cm,
      width=12.0cm,
      grid=major,
      xmin=2,xmax=10,
      ymin=0,ymax=13,
      xlabel={Target rank $r$},
      ylabel={$a$},
      ylabel style={rotate=-90}]
        \addplot [blue, mark=diamond*] table [meta=label] {\newdataMMaA  };
        \addplot [green, mark=triangle* ] table [meta=label] {\newdataSMaA};
        \addplot [red,  mark=+] table [meta=label] {\dataSMaB};
      \end{axis}
  \end{tikzpicture}
  \caption{Optimal values of $a$ with $m=31, n=33, k=15$,  $q=2$ or $q=2^8$, for \MMfq and \SMpfqm.\label{fig:optimala}}
\end{figure}

We illustrate in \cref{fig:graph_MM_combiq} the fact that our
approach over $\Fqm$ becomes interesting compared to \MMfq as $q$
increases, and for small values of $r$.  This can be explained by the
fact that \SMpfqm contains two blocks of variables, the $c_T$'s and
the $x_i$'s, and introducing the $x_i$'s variables has a computational
cost that make \MMfq competitive for large $r$. For large $q$, the
cost of the hybrid approach becomes higher and the \SMpfqm{} approach
more competitive, as it can solve with a smaller $a$ at a larger
$b$. We plot in
\cref{fig:optimala} the optimal values of $a$
and compare the \MMfq approach with \SMpfqm for $q=2$ and $q=2^8$.

\noindent 
\textbf{General picture of the complexities of generic RD instances.} 
Even if it is difficult to draw general conclusions for the complexity
of the different attacks against the Rank Decoding problem, our simulations
seem to show that the \MMfq and \SMpfqm algebraic attacks are particularly
more efficient than combinatorial attacks when, roughly, $r$ is small and $m$ is not too small
(typically the case of original LRPC parameters).

Regarding the case of the harder zone typically used in code-based cryptography, namely
$m=n,k=n/2$ and $r$ close to the Rank Gilbert-Varshamov bound,
our results seem to indicate that all attacks, both algebraic and combinatorial, 
have similar complexities.
Surprisingly enough, this seems to remain true even in the case of greater value of $q$ ($q>2$).

\section*{Conclusion}
We have presented here a new algebraic modeling for the RD problem,
\SMpfqm, that takes advantage of the $\fqm$-linearity of the problem
to adapte the Support Minors Modeling \SMfq for MinRank instances to
the RD case. This modeling extends the MaxMinors Modeling \MMfq for
systems that are not overdetermined. We have given a proof for the
number of linearly independent polynomials in \SMfqm, and good
heuristic explanation for the number of linearly independent
polynomials in \SMpfqm.

From the computational point of view, when the field $q$ is small, the
\MMfq Modeling is faster to solve than the combinatorial approach,
whereas it is the opposite for $r$ close to the rank GV
bound. However, when $q$ increases, the algebraic approaches \MMfq and
\SMpfqm becomes faster, and for small values of $r$ the \SMpfqm
Modeling beats the \MMfq Modeling.

Finally, we have proposed an hybrid approach that reduces the solving
of a MinRank (resp. RD) instance to the solving of several smaller
instances. This has the advantage to apply to any solving algorithm
for MinRank (resp. RD).

\subsubsection*{Acknowledgements.}
The authors thank the reviewers for their careful reading of the paper
and their helpful comments.

This research was funded by the French {\it Agence Nationale de la Recherche} and
 {\it plan France 2030} program under grant ANR-22-PETQ-0008 PQ-TLS.

\appendix
\section{Missing proofs from Section \ref{sec:two}}
\label{sec:proofs}

It will be helpful to notice that  \cref{lemma:generality} implies
\begin{lemma}\label{lem:minors}
Let $T\subset\Iint{1}{n-k-1}$, then 
\begin{eqnarray}
\minor{\Hm_{\yv}}{T,T+k+1}&=&1 \\
\minor{\Hm_{\yv}}{T,I}&=&0 \;\;\text{if
  $I \cap \Iint{k+2}{n} \nsubseteq T+k+1$}
\end{eqnarray}
\end{lemma}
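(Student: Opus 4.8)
The plan is to reduce everything to the \emph{systematic} form of $\Hm_{\yv}$ provided by~\cref{lemma:generality}. Concretely, I would first invoke that fact to assume $\Hm_{\yv} = \begin{pmatrix} \any & \Im_{n-k-1}\end{pmatrix}$, with rows indexed by $\Iint{1}{n-k-1}$ and columns by $\Iint{1}{n}$, and record the single structural observation that for each $j\in\Iint{1}{n-k-1}$ the column of $\Hm_{\yv}$ of index $k+1+j$ is exactly the $j$-th canonical basis vector $\ev_j\in\ff{q^m}^{n-k-1}$. I would also note in passing that the statement tacitly assumes $\#I=\#T$, so that $\minor{\Hm_{\yv}}{T,I}$ is a genuine (square) determinant.

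For the first identity I would observe that the columns of $\Hm_{\yv}$ indexed by $T+k+1$ are precisely the $\ev_t$ for $t\in T$; keeping only the rows in $T$, each such $\ev_t$ becomes the $t$-th canonical basis vector of $\ff{q^m}^{\#T}$ (since $t\in T$), so $(\Hm_{\yv})_{T,T+k+1}$ is the identity matrix and its determinant is $1$.

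For the second identity I would use the hypothesis $I\cap\Iint{k+2}{n}\nsubseteq T+k+1$ to choose an index $j\in I$ with $j\in\Iint{k+2}{n}$ but $j\notin T+k+1$, and write $j=k+1+j'$ with $j'\in\Iint{1}{n-k-1}$ and $j'\notin T$. Then the $j$-th column of $\Hm_{\yv}$ is $\ev_{j'}$, which restricts to the zero vector on the rows in $T$ because $j'\notin T$; hence $(\Hm_{\yv})_{T,I}$ has a zero column and its determinant vanishes. I do not expect any real obstacle here: the only step that needs to be stated with care is the reduction to systematic form, which is legitimate by~\cref{lemma:generality}, after which both claims are immediate from the shape of the identity block inside $\Hm_{\yv}$.
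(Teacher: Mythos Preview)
Your proof is correct and follows essentially the same approach as the paper: both invoke the systematic form $\Hm_{\yv}=\begin{pmatrix}\any & \Im_{n-k-1}\end{pmatrix}$ from \cref{lemma:generality}, then observe that $(\Hm_{\yv})_{T,T+k+1}$ is the identity matrix and that $(\Hm_{\yv})_{T,I}$ contains a zero column under the stated hypothesis. Your write-up is simply a more detailed unpacking of the paper's two-line argument.
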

\begin{proof}
This follows immediately from the fact that $\Hm_{\yv}$ is in systematic form in its $n-k-1$ last coordinates (i.e. for the positions $j \in \Iint{k+2}{n}$): $\Hm_y = \begin{pmatrix} * & \Im_{n-k-1} \end{pmatrix}$. Indeed 
$\minor{\Hm_{\yv}}{T,T+k+1}=|\Im_s|=1$ where $s = \#T$. The other minor is 0 since it contains a column which is $0$. \qed
\end{proof}
\subsection{Proof of Proposition \ref{prop:Q0}}
\label{proof:Q0}
Let us recall this proposition.
\repeatproposition{prop:Q0}

\begin{proof}
 We first observe that $Q$ in $\mathcal Q_0$ is of the form$Q_{T+k+1}$ with
$T\subset\Iint{1}{n-k-1},~\# T = r+1$. By definition we have
$(\xv\Gm+\yv)\trsp{\Hm_{\yv}}=0$ and hence by using the Cauchy-Binet formula \eqref{eq:Cauchy-Binet} we obtain
  \begin{align*}
    0&=  \minor{    \begin{pmatrix}
        \xv\Gm+\yv\\\Cm
      \end{pmatrix}
    \trsp{\Hm_{\yv}}}{\any,T} = \sum_{\substack{I\subset\Iint{1}{n}\\\#I=r+1}}  \minor{\Hm_{\yv}}{T,I}Q_I.
  \end{align*}
  We then use Lemma \ref{lem:minors}: $\minor{\Hm_{\yv}}{T,T+k+1}=1$, and $\minor{\Hm_{\yv}}{T,I}=0$ if
  $I\subset\Iint{k+2}{n}$, $I\ne T+k+1$.
  The previous equation expresses $Q_{T+k+1} \in \mathcal Q_0$ in
  terms of the $Q_I$'s in $Q_{\ge 1}$, namely
    $\textstyle{Q_{T+k+1} = - \sum_{Q_I\in \mathcal Q_{\ge 1}} \minor{\Hm_{\yv}}{T,I}Q_I}$.
    \qed
\end{proof}

\subsection{Proofs of Propositions \ref{prop:LTQI} and  \ref{prop:Q1} }
\label{proof:LTQIQ1}
For the proofs of~\cref{prop:Q1,prop:LTQI}, we recall that we use the grevlex
monomial ordering on the variables $x_1>\dots>x_k>c_T$ with the $c_T$'s ordered according to a reverse lexicographical ordering on $T$:
$c_{T'}>c_{T}$ if $t'_j=t_j$ for $j<j_0$ and $t'_{j_0}>t_{j_0}$ where
$T=\lbrace t_1<\dots<t_r\rbrace$ and
$ T'=\lbrace t'_1<\dots<t'_r\rbrace$. We denote by $\LT(f)$ the
leading term of a polynomial $f$ with respect to this term order.

We will also make use of the following lemma.
\begin{lemma}\label{lem:Q1}
Let $Q_I$ be an equation in $\mathcal Q_{\geq 2}$. We have
\begin{eqnarray*}
\LT(Q_I) &= &x_{i_1}c_{I_1} \\
 Q_I &= & x_{i_1}c_{I_1} \underbrace{-\xv\Gm_{\any,i_2}c_{I_2} + \dots + (-1)^{r}\xv\Gm_{\any,i_{r+1}}c_{I_{r+1}}}_{ \text{smaller terms of degree 2}}\\
      &   & \underbrace{- y_{i_2}c_{I_2} + \dots + (-1)^{r} y_{i_{r+1}}c_{I_{r+1}}}_{ \text{smaller terms of degree 1}}
\end{eqnarray*}
where $I=\lbrace i_1 < \dots < i_{r+1}\rbrace$ and $I_1 \eqdef I \setminus \{i_1\}$. The leading terms of such $Q_I$'s are all different and the variables $\lbrace c_{J+k+1} \rbrace_{J\subset\Iint{1}{n-k-1}}$ do not
appear in $Q_I$.
\end{lemma}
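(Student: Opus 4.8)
The plan is to compute $Q_I$ explicitly by a Laplace expansion and then read off its leading term from the combinatorics of the monomial order. First I would expand the $(r+1)\times(r+1)$ determinant $Q_I=\minor{\binom{\xv\Gm+\yv}{\Cm}}{\any,I}$ along its first row: writing $I=\lbrace i_1<\dots<i_{r+1}\rbrace$ and $I_j\eqdef I\setminus\lbrace i_j\rbrace$, deleting the first row and the $j$-th column of the submatrix leaves $\Cm$ restricted to the columns $I_j$, so the complementary minor of the $(1,j)$ entry is $c_{I_j}$ and
\[
Q_I=\sum_{j=1}^{r+1}(-1)^{j-1}\bigl(\xv\Gm_{\any,i_j}+y_{i_j}\bigr)c_{I_j}.
\]
By \cref{lemma:generality} we may assume $\Gm=\begin{pmatrix}\Im_k & \any\end{pmatrix}$ and $\yv=\begin{pmatrix}\zerom_k & 1 & \any\end{pmatrix}$, so that $\xv\Gm_{\any,i}=x_i$ and $y_i=0$ for every $i\le k$. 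The hypothesis $Q_I\in\mathcal Q_{\ge 2}$ means $\#(I\cap\Iint{1}{k+1})\ge 2$, so the two smallest elements $i_1<i_2$ both lie in $\Iint{1}{k+1}$, whence $i_1\le k$; the $j=1$ term is therefore exactly $x_{i_1}c_{I_1}$, which yields the displayed expansion of $Q_I$. The same observation shows that no variable $c_{J+k+1}$ occurs: since two of the $i_\ell$ lie in $\Iint{1}{k+1}$ and $I_j$ only discards one of them, $I_j$ still contains an element $\le k+1$, hence $I_j\not\subseteq\Iint{k+2}{n}$, i.e. $I_j\ne J+k+1$ for any $J$.

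For the leading term, recall the grevlex order with $x_1>\dots>x_k>c_T$, the $c_T$ being reverse-lexicographically ordered. Every degree-$2$ monomial beats the degree-$1$ monomials $y_{i_j}c_{I_j}$, and $Q_I$ does contain a degree-$2$ term (at least $x_{i_1}c_{I_1}$). I would then invoke the elementary fact that for two bi-degree $(1,1)$ monomials $x_a c_S$ and $x_b c_T$ with $S\ne T$ the larger one is always the one carrying the larger minor variable, regardless of $a,b$ (in the exponent difference the last nonzero coordinate lies in the $c$-block, with sign dictated by whether $c_S>c_T$). The minor variables occurring in $Q_I$ are precisely $c_{I_1},\dots,c_{I_{r+1}}$; since $I_1=\lbrace i_2,\dots,i_{r+1}\rbrace$ has minimum $i_2$ while every $I_j$ with $j\ge 2$ has minimum $i_1<i_2$, the reverse-lex order gives $c_{I_1}>c_{I_j}$ for all $j\ge 2$. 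As $c_{I_1}$ appears in $Q_I$ solely in the monomial $x_{i_1}c_{I_1}$, we conclude $\LT(Q_I)=x_{i_1}c_{I_1}=x_{i_1}c_{I\setminus\lbrace i_1\rbrace}$. Distinctness is then immediate: an equality $x_{i_1}c_{I\setminus\lbrace i_1\rbrace}=x_{i_1'}c_{I'\setminus\lbrace i_1'\rbrace}$ forces $i_1=i_1'$ and $I\setminus\lbrace i_1\rbrace=I'\setminus\lbrace i_1\rbrace$, and since $i_1=\min I$ and $i_1'=\min I'$, re-inserting the common minimum yields $I=I'$.

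The step I would be most careful about is this grevlex comparison: one must verify that for bi-degree $(1,1)$ monomials the minor variable is "more significant" than the linear variable, and that the reverse-lexicographic order among the sets $I_j$ is governed entirely by their minima. Once these two small order-theoretic facts are established, the remainder is routine bookkeeping — the Laplace expansion, the normalization of $\Gm$ and $\yv$ provided by \cref{lemma:generality}, and the counting argument about how many elements of $I_j$ lie in $\Iint{1}{k+1}$.
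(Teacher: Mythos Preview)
Your proof is correct and follows essentially the same approach as the paper: Laplace expansion along the first row, reduction to the systematic form of $\Gm$ and $\yv$ via \cref{lemma:generality}, and identification of the leading term by comparing the minor variables $c_{I_j}$ in the chosen order. The paper is terser (it simply asserts $I_1>I_2>\dots>I_{r+1}$ and lists the terms), whereas you spell out the grevlex comparison and the distinctness argument explicitly; your remark that ``the reverse-lexicographic order among the sets $I_j$ is governed entirely by their minima'' is only literally true for the comparison you actually need ($I_1$ against $I_j$, $j\ge 2$), but that is all the argument requires.
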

\begin{proof}
Since $Q_I$ is in $\mathcal Q_{\geq 2}$ we know that $i_1 \le k$.
We have
\begin{align*}
  Q_I &= \minor{
        \begin{pmatrix}
          \xv\Gm+\yv\\\Cm
        \end{pmatrix}
  }{\any,I} = \sum_{i_u\in I}(-1)^{1+u}(\xv\Gm_{\any,i_u} + y_{i_u})c_{I\setminus\lbrace i_u\rbrace}.
\end{align*}
Taking $\Gm$ and $\yv$ as in~\cref{lemma:generality}, for any
$i_u\in I^-=I\cap\Iint{1}{k}$ (and at least $i_1\in I^-$ by
assumption), we have $\xv\Gm_{\any,i_u}+\yv_{i_u}=x_{i_u}$. Let
$I_u = I\setminus\lbrace i_u\rbrace$ for $1\le u \le r+1$, then for
the chosen ordering we have $I_1>I_2>\dots>I_{r+1}$. The ordered terms
in $Q_I$ are then
\begin{align*}
   Q_I = & x_{i_1}c_{I_1} \underbrace{-\xv\Gm_{\any,i_2}c_{I_2} + \dots + (-1)^{r}\xv\Gm_{\any,i_{r+1}}c_{I_{r+1}}}_{ \text{smaller terms of degree 2}}\\
         & \underbrace{- y_{i_2}c_{I_2} + \dots + (-1)^{r} y_{i_{r+1}}c_{I_{r+1}}}_{ \text{smaller terms of degree 1}}
\end{align*}
so that$\LT(Q_I) = x_{i_1}c_{I_1}$ and these leading terms are different for
all the equations. For the last point, we observe that $\lbrace i_1<i_2\rbrace\subset \Iint{1}{k+1}$. This implies that
for any $i_u\in I$, the set $I\setminus\lbrace i_u\rbrace$ contains at
least one of $i_1,i_2$ so that it is not included in
$\Iint{k+2}{n}$, from which it follows that the variables $\lbrace c_{J+k+1} \rbrace_{J\subset\Iint{1}{n-k-1}}$ do not
appear in $Q_I$. \qed
\end{proof}
 
 We are ready now to prove Proposition \ref{prop:LTQI}:
 \repeatproposition{prop:LTQI}

\begin{proof}
Lemma \ref{lem:minors} already proves that the equations in $\mathcal Q_{\ge 2}$ are linearly independent. Consider now a $P_J \in \cP$.
Here
$J\subset\Iint{1}{n-k-1},~\#J =r$. By using the special
shape of $\Hm_{\yv}$ we have
\begin{align*}
  P_J &= \minor{\Cm\trsp{\Hm_{\yv}}}{\any,J} = \sum_{\substack{T\subset\Iint{1}{n}\\\#T=r}} c_T\minor{\Hm_{\yv}}{J,T} = \sum_{\substack{T\subset\Iint{1}{n},\#T=r,\\ T\cap\Iint{k+2}{n} \subset J+k+1}} c_T\minor{\Hm_{\yv}}{J,T}\\
      &= c_{J+k+1} + \sum_{\substack{T\subset\Iint{1}{n},\#T=r,\\ T\cap\Iint{k+2}{n} \subset J+k+1, T\cap\Iint{1}{k+1} \ne \emptyset}} c_T\minor{\Hm_{\yv}}{J,T}
\end{align*}
We used here again Lemma \ref{lem:minors}.
Note that the $c_T$'s in the sum are all smaller than $c_{J+k+1}$, so
that $c_{J+k+1}$ is the leading term of $P_J$ and does not appear in any other
$P_{J'}$. This shows that the polynomials in
$\mathcal P \cup \mathcal Q_{\ge 2}$ are linearly independent, as they
have distinct leading terms, and concludes the proof
of~\cref{prop:LTQI}. 
 \qed
\end{proof}

Let us now recall  Proposition \ref{prop:Q1} before proving it.
\repeatproposition{prop:Q1}

\begin{proof}
Consider $  \minor{\begin{pmatrix}
      \xv\Gm+\yv\\\Cm
    \end{pmatrix}
  \trsp{\Hm}}{\any,J\cup\lbrace n-k\rbrace}$. On one hand, we have with the Cauchy-Binet formula
    \begin{equation}\label{eq:xGpymH}
    \minor{\begin{pmatrix}
      \xv\Gm+\yv\\\Cm
    \end{pmatrix}
  \trsp{\Hm}}{\any,J\cup\lbrace n-k\rbrace}
  =  \sum_{\substack{I\subset\Iint{1}{n},\#I=r+1}} \minor{\Hm}{J\cup\lbrace n-k\rbrace,I} Q_I .
    \end{equation}
On the other hand, we use the particular shapes for $\Hm$, $\yv$ and $\hv$ given in~\cref{lemma:Hsystematic}:
\begin{eqnarray*}
\Hm & =& 
  \begin{pmatrix}
    \Hm_{\yv}\\\hv
  \end{pmatrix} \\
  \yv & = & \begin{pmatrix} \zerom_k & 1 & \any \end{pmatrix}\\
  \hv & = & \begin{pmatrix}
    \any & 1 & \zerom_{n-k-1}
  \end{pmatrix}
\end{eqnarray*}   and obtain
\begin{align*}
  \begin{pmatrix}
    \xv\Gm+\yv\\\Cm
  \end{pmatrix}
  \trsp{\Hm} =
  \begin{pmatrix}
    \yv\trsp{\Hm}\\\Cm\trsp{\Hm}
  \end{pmatrix} =
  \begin{pmatrix}
    \yv\trsp{\Hm_{\yv}}& \yv \trsp{\hv} \\ \Cm\trsp{\Hm_{\yv}} & \Cm\trsp{\hv}
  \end{pmatrix}
  =
  \begin{pmatrix}
    \zerov_{n-k-1} & 1\\
    \Cm\trsp{\Hm_{\yv}} & \Cm\trsp{\hv}
  \end{pmatrix},
\end{align*}
so that for any $J\subset\Iint{1}{n-k-1},~\#J =r$ we get
\begin{align*}
  \minor{\begin{pmatrix}
      \xv\Gm+\yv\\\Cm
    \end{pmatrix}
  \trsp{\Hm}}{\any,J\cup\lbrace n-k\rbrace}
  &=  \minor{  \begin{pmatrix}
      \zerov & 1\\
      \Cm\trsp{\Hm_{\yv}} & \Cm\trsp{\hv}
    \end{pmatrix}}{\any,J\cup\lbrace n-k\rbrace}
 &= (-1)^r \minor{\Cm\trsp{\Hm_{\yv}}}{\any,J} = (-1)^r P_J.
\end{align*}
By using
\begin{align*}
  \minor{\Hm}{J\cup\lbrace n-k\rbrace, I}
  &=
    \begin{cases}
      0 & \text{ if } I\cap\Iint{k+2}{n}\not\subset J+k+1\\
      (-1)^r & \text{ if } I = \lbrace k+1 \rbrace \cup (J+k+1),
    \end{cases}
\end{align*}
we then have
$P_J = \underbrace{Q_{\lbrace k+1\rbrace \cup (J+k+1)}}_{\in \mathcal
  Q_1} + (-1)^r\sum_{Q_I \in \mathcal Q_{\ge 2}}
\minor{\Hm}{J\cup\lbrace n-k\rbrace,I}Q_I $.

This gives a one-to-one correspondence between equations $P_J$ and
equations $Q_{\lbrace k+1\rbrace\cup J+k+1} \in \mathcal Q_1$. It
remains to show that the
$Q_{\lbrace i_1\rbrace \cup J+k+1}\in\mathcal Q_1$ with $i_1\le k$
reduce to $x_{i_1}P_{J}$ modulo $\mathcal Q_{\ge 2}$.

If $\gv_{i_1} \eqdef \Gm_{\lbrace {i_1} \rbrace, \any}$, we consider $\Hm_{{i_1}}$ a
parity-check matrix of the code
$\mathcal{C}_{i_1} \eqdef \langle {\yv},{\gv_1}, \dots, {\gv_{i_1-1}},
{\gv_{i_1+1}}, \dots, {\gv_{k}} \rangle$ such that
$\trsp{\Hm_{i_1}} = \begin{pmatrix} \trsp{\Hm_{\yv}} &
  \trsp{{\ev_{i_1}}} \end{pmatrix}$ and where $\ev_{i_1}$ is the ${i_1}$-th
canonical basis vector in $\ff{q}^n$.  Since $ \gv_{i_1}\trsp{\ev_{i_1}} = 1$,
we have
\begin{equation*}
  \begin{pmatrix}
    \xv\Gm+\yv\\\Cm
  \end{pmatrix}
  \trsp{\Hm_{i_1}} = 
  \begin{pmatrix}
    x_{i_1}\gv_{i_1} \trsp{\Hm_{i_1}} \\
    \Cm\trsp{\Hm_{{i_1}}} 
  \end{pmatrix} = \begin{pmatrix}
    \zerom & x_{i_1} \\
    \Cm\trsp{\Hm_{\yv}} & \Cm\trsp{{\ev_{i_1}}} 
  \end{pmatrix}.
\end{equation*}
For $J \subset \Iint{1}{n-k-1},~\#J = r$, one obtains
\begin{align*}
  \minor{\begin{pmatrix}
      \xv\Gm+\yv\\\Cm
    \end{pmatrix}\trsp{\Hm_{i_1}}}{\any,J\cup\lbrace n-k\rbrace}
& =\minor{  \begin{pmatrix} \zerov & x_{i_1}\\
      \Cm\trsp{\Hm_{\yv}} & \Cm\trsp{\hv_{i_1}}
    \end{pmatrix}}{\any,J\cup\lbrace n-k\rbrace}\\
   = \sum_{\substack{I \subset \Iint{1}{n}, \#I = r+1}}\minor{\Hm_{i_1}}{J\cup\lbrace n-k\rbrace,I}Q_I    =& (-1)^{r}x_{i_1} \minor{\Cm\trsp{\Hm_{\yv}}}{\any,J}= (-1)^{r}x_{i_1} P_J.
\end{align*}
By Laplace expansion along the last row, we have
$\minor{\Hm_{i_1}}{J\cup\lbrace n-k\rbrace, I}=0$ if ${i_1}\notin I$
and
$\minor{\Hm_{i_1}}{J\cup\lbrace n-k\rbrace,
  I}=(-1)^{r+1+\Pos({i_1},I)}\minor{\Hm_{\yv}}{J, I\setminus\lbrace
  {i_1}\rbrace}$ if ${i_1}\in I$, where $\Pos({i_1},I)$ denotes the position of $i_1$ in the ordered set $I$ (1 if it is the first element).
 We deduce from this that
\begin{align*}
  x_{i_1} P_J &= \sum_{\substack{I \subset \Iint{1}{n}, \#I = r+1, {i_1}\in I}}  (-1)^{1+\Pos({i_1},I)}\minor{\Hm_{\yv}}{J, I\setminus\lbrace {i_1}\rbrace}Q_I \\
               &= Q_{\lbrace {i_1}\rbrace \cup (J + k +1)} + \sum_{Q_I\in\mathcal Q_{\ge 2}, {i_1}\in I}(-1)^{1+\Pos({i_1},I)}\minor{\Hm_{\yv}}{J, I\setminus\lbrace {i_1}\rbrace}Q_I .
\end{align*}
Note that by the previous results,
$\LT(Q_{\lbrace i_1\rbrace+J+k+1})=x_{i_1}c_{J+k+1}$ so that all
equations in
$\textstyle{\mathcal P\cup \bigcup_{j=1}^k x_j \mathcal P\cup\mathcal Q_{\ge 2}}$
are linearly independent.
\end{proof}

\subsection{Proof of Proposition \ref{prop:Nb}}
\label{proof:Nb}
Let us first recall this proposition.
\repeatproposition{prop:Nb}

\begin{proof}
The set $\mathcal B_b$ clearly contains linearly independent
equations, since their leading terms are all different:
\begin{align*}
  \LT({x_{i_1}}^{\alpha_{i_1}}\dots {x_{k}}^{\alpha_k}Q_I) = {x_{i_1}}^{\alpha_{i_1}+1}\dots {x_{k}}^{\alpha_k}c_{I\setminus\lbrace i_1\rbrace}.
\end{align*}
The number of polynomials in $\mathcal B_b$ is the number of sets $I$ and $(\alpha_{i_1},\dots,\alpha_k)$:
\begin{align*}
  \Nbfqm{} &= \sum_{i_1=1}^k \sum_{i_2=i_1+1}^{k+1}\binom{n-i_2}{r-1}\binom{k-i_1+1+b-2}{b-1}
\end{align*}
which gives~\cref{eq:Nbfqm}, considering the identities $\textstyle{\sum_{i_2=i_1+1}^{k+1}\binom{n-i_2}{r-1}=\binom{n-i_1}{r}-\binom{n-k-1}{r}}$ and $\textstyle{\sum_{i_1=1}^k \binom{k-i_1+1+b-2}{b-1} = \binom{k+b-1}{b}}$. 
The number of monomials comes from the fact that the variables $c_{J+k+1}$ do not appear in $\mathcal Q_{\ge 2}$. The inequality $\mathcal N_b < \mathcal M_b-1$ is easy to derive using previous identities and $\textstyle{\binom{n-i_1}{r}<\binom{n-1}{r}}$ for all $i_1\ge 1$.

We will now show that the polynomials $x_{j}Q_I$ for $1\le j < i_1$,
$Q_I\in\mathcal Q_{\ge 2}$ reduce to zero modulo $\mathcal B_2$, which
is sufficient to conclude the proof. The number of such polynomials is
equal to the number of sets
$K=\lbrace k_1 < k_2 < \dots < k_{r+2}\rbrace \subset\Iint{1}{n}$ such that $k_3\le k+1$, and we are going to construct the
same number of independent syzygies between the polynomials at
bi-degree $(2,1)$. Indeed, for any such $K$, we have the relation
\begin{align}
  \minor{
  \begin{pmatrix}
    \xv\Gm + \yv\\\xv\Gm + \yv\\\Cm
  \end{pmatrix}}{\any,K} &= 0.
\end{align}
By Laplace expansion along the first row, one obtains
\begin{align*}
0    & = x_{k_1}Q_{\lbrace k_2,\dots,k_{r+2}\rbrace} - \sum_{u=2}^{r+2}  (-1)^{u}\left(\sum_{j=1}^k x_j \Gm_{j,k_u} + y_{k_u} \right) Q_{K\setminus\{k_u\}}.
\end{align*}
Since $\vert K \cap \Iint{1}{k+1} \vert \geq 3$, we obtain syzygies
between the relevant $Q_I$, say those such that
$\vert I \cap \Iint{1}{k+1} \vert \geq 2$. We will now show that those
syzygies are linearly independent. To this end, we order the $Q_I$'s
according to a grevlex order on the subsets $I$ as for the $c_T$
variables. The largest $Q_I$ is $Q_{\Iint{n-r}{n}}$, the
smallest one is $Q_{\Iint{1}{r+1}}$.  The syzygy associated to
$K$ is given by
\begin{equation*}
  {\mathcal G}^K \eqdef \begin{pmatrix}
    \underbrace{0}_{I\not \subset K}, \underbrace{(-1)^{1+u}\sum_{j=1}^k x_j\Gm_{j,k_u} + \yv_{k_u}}_{K\setminus I = \{k_u\}}
  \end{pmatrix}_{I\subset\Iint{1}{n}, \#I=r+1}.
\end{equation*}
The largest set $I$ such that the coefficient in front of $Q_I$ in
${\mathcal G}^K$ is non-zero is
$I = K_1 = K \setminus \lbrace k_1 \rbrace$ and this coefficient is
$x_{k_1}$.  The syzygies which have the same leading position
$Q_{K_1}$ as ${\mathcal G}^K$ are the
${\mathcal G}^{K_1 \cup \lbrace j \rbrace}$ for $1 \leq j <
k_1$. Finally, the highest degree part in the coefficient in front of
$Q_{K_1}$ in ${\mathcal G}^{K_1 \cup \lbrace j \rbrace}$ is $x_{j}$,
which shows that all the ${\mathcal G}^{K_1 \cup \lbrace j \rbrace}$
are linearly independent for $1 \leq j \leq k_1$.
\qed
\end{proof}

\subsection{Proof of~\cref{prop:QIofPiJ}}
\label{proof:QIofPiJ}
Let us recall this proposition.
\repeatproposition{prop:QIofPiJ}
\begin{proof}
  For this
purpose, we introduce the $\ell$-th Frobenius iterate of the $P_J$'s,
that has the advantage to satisfy the relation $\minor{\Mm^{[\ell]}}{} =\minor{\Mm}{}^{[\ell]} $ for
any square matrix $\Mm$. This is equivalent to using the unfolded equations
$P_{i,J}$ thanks to the following relation:
for any $J\subset\Iint{1}{n-k-1},\#J=r$ we have
  \begin{align*}
    \langle P_{i,J} : 1\le i \le m \rangle_{\fqm}
    &=
      \langle P_{J}^{[\ell]} : 0\le \ell \le m-1\rangle_{\fqm}.
  \end{align*}
Indeed,
  $\textstyle{P_{i,J} = \tr(\beta_i^\star P_J) = \sum_{\ell=0}^{m-1}
    (\beta_i^{\star})^{[\ell]}P_J^{[\ell]}}$ and
 $ \textstyle{P_J^{[\ell]} = \sum_{i=1}^m \beta_i^{[\ell]} P_{i,J}}$.

  For fixed $ 0\le \ell \le m-1$ and
  $T\subset\Iint{1}{n-k-1}, \#T=r+1$, we consider the minor
  \begin{align*}
    \Gamma_{\ell,T} &\eqdef \minor{
    \begin{pmatrix}
      \xv\Gm + \yv\\\Cm
    \end{pmatrix}
    \trsp{(\pow{\Hm_{\yv}}{\ell})}}{\any,T}.
  \end{align*}
  By Laplace expansion along the first row, this minor can be viewed as a combination with coefficients in $\ff{q^m}[x_i]$ between maximal minors of $\Cm\trsp{(\pow{\Hm_{\yv}}{\ell})}_{\any,T}$, and these minors are exactly the $\pow{P_J}{\ell}$'s for $J\subset T$. The normal form of $\Gamma_{\ell,T}$ with respect to $\langle P_{i,J}\rangle=\langle P_{J}^{[\ell]}\rangle$ is then 0. Also, using the Cauchy-Binet formula, each minor is a linear combination of the $Q_I$'s, given by   
  \begin{align*}
    \tilde{Q}_{T+k+1} + \sum_{\substack{I\subset\Iint{1}{n}, \#I=r+1,\\I\cap\Iint{ k+1}{n}\subsetneq T+k+1}} \tilde{Q}_{I}\minor{\pow{\Hm_{\yv}}{\ell}}{T,I} = 0. 
  \end{align*}
  To conclude the proof, we use the fact that the set of previous equations for all $0\le \ell\le m-1$ generate the same vector space over $\ff{q^m}$ as the set of equations
  \begin{align*}
    \tr(\beta_i^\star)\tilde{Q}_{T+k+1} + \sum_{\substack{I\subset\Iint{1}{n}\\\#I = r+1\\I\cap\Iint{k+1}{n}\subsetneq T+k+1}}\tr(\beta_i^\star \minor{\Hm_{\yv}}{T,I})\tilde{Q}_I =0,
  \end{align*}
  for all $1\le i \le m$.
\qed
\end{proof}


\begin{thebibliography}{10}
\providecommand{\url}[1]{\texttt{#1}}
\providecommand{\urlprefix}{URL }
\providecommand{\doi}[1]{https://doi.org/#1}

\bibitem{AABBBDGHZ17}
{Aguilar Melchor}, C., Aragon, N., Bettaieb, S., Bidoux, L., Blazy, O.,
  Deneuville, J.C., Gaborit, P., Hauteville, A., Z{\'e}mor, G.: {Ouroboros-R}.
  First round submission to the NIST post-quantum cryptography call (Nov 2017),
  \url{https://pqc-ouroborosr.org}

\bibitem{AABBBDGZ17}
{Aguilar Melchor}, C., Aragon, N., Bettaieb, S., Bidoux, L., Blazy, O.,
  Deneuville, J.C., Gaborit, P., Z{\'e}mor, G.: Rank quasi cyclic {(RQC)}.
  First round submission to the NIST post-quantum cryptography call (Nov 2017),
  \url{https://pqc-rqc.org}

\bibitem{AABBBBCDGHZ20}
{Aguilar Melchor}, C., Aragon, N., Bettaieb, S., Bidoux, L., Blazy, O., Bros,
  M., Couvreur, A., Deneuville, J.C., Gaborit, P., Z{\'e}mor, G., Hauteville,
  A.: Rank quasi cyclic {(RQC)}. Second Round submission to NIST Post-Quantum
  Cryptography call (Apr 2020), \url{https://pqc-rqc.org}

\bibitem{AABBBDGZCH19}
{Aguilar Melchor}, C., Aragon, N., Bettaieb, S., Bidoux, L., Blazy, O.,
  Deneuville, J.C., Gaborit, P., Z{\'e}mor, G., Couvreur, A., Hauteville, A.:
  Rank quasi cyclic {(RQC)}. Second round submission to the NIST post-quantum
  cryptography call (Apr 2019), \url{https://pqc-rqc.org}

\bibitem{AADGZ22}
{Aguilar Melchor}, C., Aragon, N., Dyseryn, V., Gaborit, P., Z\'emor, G.:
  {LRPC} codes with multiple syndromes: near ideal-size {KEMs} without ideals.
  ArXiv  \textbf{abs/2206.11961} (2022)

\bibitem{AAACDKLMMPPRS20}
Alagic, G., Jacob, A., Apon, D., Cooper, D., Dang, Q., Kelsey, J., Liu, Y.K.,
  Miller, C., Moody, D., Peralta, R., Perlner, R., Robinson, A., {Smith-Tone},
  D.: Status report on the second round of the {NIST} post-quantum cryptography
  standardization process. Tech. Rep. {NISTIR} 8309, NIST (Jul 2020).
  \doi{10.6028/NIST.IR.8309},
  \url{https://nvlpubs.nist.gov/nistpubs/ir/2020/NIST.IR.8309.pdf}

\bibitem{ABDGHRTZ17}
Aragon, N., Blazy, O., Deneuville, J.C., Gaborit, P., Hauteville, A., Ruatta,
  O., Tillich, J.P., Z{\'{e}}mor, G.: {LAKE} -- {L}ow r{A}nk parity check codes
  {K}ey {E}xchange. First round submission to the NIST post-quantum
  cryptography call (Nov 2017),
  \url{https://csrc.nist.gov/CSRC/media/Projects/Post-Quantum-Cryptography/documents/round-1/submissions/LAKE.zip}

\bibitem{ABDGHRTZ17a}
Aragon, N., Blazy, O., Deneuville, J.C., Gaborit, P., Hauteville, A., Ruatta,
  O., Tillich, J.P., Z{\'{e}}mor, G.: {LOCKER} -- {LO}w rank parity {C}hec{K}
  codes {E}nc{R}yption. First round submission to the NIST post-quantum
  cryptography call (Nov 2017),
  \url{https://csrc.nist.gov/CSRC/media/Projects/Post-Quantum-Cryptography/documents/round-1/submissions/LOCKER.zip}

\bibitem{ABDGHRTZABBBO19}
Aragon, N., Blazy, O., Deneuville, J.C., Gaborit, P., Hauteville, A., Ruatta,
  O., Tillich, J.P., Z{\'e}mor, G., {Aguilar Melchor}, C., Bettaieb, S.,
  Bidoux, L., Bardet, M., Otmani, A.: {ROLLO} (merger of {Rank-Ouroboros, LAKE
  and LOCKER}). Second round submission to the NIST post-quantum cryptography
  call (Mar 2019), \url{https://pqc-rollo.org}

\bibitem{ABGHZ19}
Aragon, N., Blazy, O., Gaborit, P., Hauteville, A., Z{\'{e}}mor, G.: Durandal:
  a rank metric based signature scheme. In: Advances in Cryptology -
  {EUROCRYPT} 2019 - 38th Annual International Conference on the Theory and
  Applications of Cryptographic Techniques, Darmstadt, Germany, May 19-23,
  2019, Proceedings, Part {III}. LNCS, vol. 11478, pp. 728--758. Springer
  (2019). \doi{10.1007/978-3-030-17659-4\_25},
  \url{https://doi.org/10.1007/978-3-030-17659-4\_25}

\bibitem{AGHRZ17}
Aragon, N., Gaborit, P., Hauteville, A., Ruatta, O., Z\'{e}mor, G.: Ranksign --
  a signature proposal for the {NIST's} call. First round submission to the
  NIST post-quantum cryptography call (Nov 2017),
  \url{https://csrc.nist.gov/CSRC/media/Projects/Post-Quantum-Cryptography/documents/round-1/submissions/RankSign.zip}

\bibitem{AGHT17}
Aragon, N., Gaborit, P., Hauteville, A., Tillich, J.P.: {Improvement of Generic
  Attacks on the Rank Syndrome Decoding Problem} (Oct 2017),
  \url{https://hal.archives-ouvertes.fr/hal-01618464}, working paper or
  preprint

\bibitem{AGHT18}
Aragon, N., Gaborit, P., Hauteville, A., Tillich, J.P.: A new algorithm for
  solving the rank syndrome decoding problem. In: 2018 {IEEE} International
  Symposium on Information Theory, {ISIT} 2018, Vail, CO, USA, June 17-22,
  2018. pp. 2421--2425. IEEE (2018). \doi{10.1109/ISIT.2018.8437464}

\bibitem{BBCPSV21}
Baena, J., Briaud, P., Cabarcas, D., Perlner, R.A., {Smith{-}Tone}, D., Verbel,
  J.A.: Improving support-minors rank attacks: applications to {GeMSS} and
  {Rainbow}. {IACR} Cryptol. ePrint Arch., accepted for publication in CRYPTO
  2022 p.~1677 (2021), \url{https://eprint.iacr.org/2021/1677}

\bibitem{BB21}
Bardet, M., Briaud, P.: An algebraic approach to the rank support learning
  problem. In: Cheon, J.H., Tillich, J.P. (eds.) Post-Quantum Cryptography.
  LNCS, vol. 12841, pp. 442--462. Springer International Publishing, Cham
  (2021)

\bibitem{BBBGNRT20}
Bardet, M., Briaud, P., Bros, M., Gaborit, P., Neiger, V., Ruatta, O., Tillich,
  J.: An algebraic attack on rank metric code-based cryptosystems. In:
  Canteaut, A., Ishai, Y. (eds.) Advances in Cryptology -- EUROCRYPT 2020. pp.
  64--93. Springer International Publishing, Cham (2020),
  \url{http://arxiv.org/abs/1910.00810}

\bibitem{BBCGPSTV20}
{Bardet}, M., {Bros}, M., {Cabarcas}, D., {Gaborit}, P., {Perlner}, R.,
  {Smith-Tone}, D., {Tillich}, J.P., {Verbel}, J.: Improvements of algebraic
  attacks for solving the rank decoding and minrank problems. In: Advances in
  Cryptology - {ASIACRYPT 2020}, International Conference on the Theory and
  Application of Cryptology and Information Security, 2020. Proceedings. pp.
  507--536 (2020). \doi{10.1007/978-3-030-64837-4\_17},
  \url{https://dx.doi.org/10.1007/978-3-030-64837-4_17}

\bibitem{BCGMM19}
Bellini, E., Caullery, F., Gaborit, P., Manzano, M., Mateu, V.: Improved
  {Veron} identification and signature schemes in the rank metric. In: Proc.
  IEEE Int. Symposium Inf. Theory - ISIT~2019. vol. abs/1903.10212, pp.
  1872--1876. {IEEE}, Paris, France (Jul 2019)

\bibitem{BESV22}
Bellini, E., Esser, A., Sanna, C., Verbel, J.: {MR-DSS -- Smaller MinRank-based
  (Ring-)Signatures}. In: Cheon~J.H., J.T. (ed.) Post-Quantum
  Cryptography~2022. LNCS, vol. 13512. Springer (Sep 2022),
  \url{https://eprint.iacr.org/2022/973}

\bibitem{BESV22b}
Bellini, E., Esser, A., Sanna, C., Verbel, J.: {MR-DSS -- Smaller MinRank-based
  (Ring-)Signatures}. IACR Cryptology ePrint Archive, Report~2022/973 (Oct
  2022), \url{https://eprint.iacr.org/2022/973}, version 20220921:142218

\bibitem{BGHM20}
Bellini, E., Gaborit, P., Hasikos, A., Mateu, V.: Enhancing code based
  zero-knowledge proofs using rank metric. In: Krenn, S., Shulman, H.,
  Vaudenay, S. (eds.) Cryptology and Network Security - 19th International
  Conference, {CANS} 2020, Vienna, Austria, December 14-16, 2020, Proceedings.
  Lecture Notes in Computer Science, vol. 12579, pp. 570--592. Springer (2020).
  \doi{10.1007/978-3-030-65411-5\_28},
  \url{https://doi.org/10.1007/978-3-030-65411-5\_28}

\bibitem{B21}
Beullens, W.: Improved cryptanalysis of {UOV} and {Rainbow}. In: Canteaut, A.,
  Standaert, F.X. (eds.) Advances in Cryptology -- EUROCRYPT 2021. Lecture
  Notes in Computer Science, vol. 12696, pp. 348--373. Springer International
  Publishing (2021). \doi{10.1007/978-3-030-77870-5\_13},
  \url{https://doi.org/10.1007/978-3-030-77870-5_13}

\bibitem{B22}
Beullens, W.: Breaking {Rainbow} takes a weekend on a laptop. In: Advances in
  Cryptology - CRYPTO~2022. LNCS, Springer-Verlag (2022),
  \url{https://eprint.iacr.org/2022/214}

\bibitem{BBBG22}
Bidoux, L., Briaud, P., Bros, M., Gaborit, P.: {RQC} revisited and more
  cryptanalysis for rank-based cryptography. ArXiv  \textbf{abs/2207.01410}
  (2022)

\bibitem{BV88}
Bruns, W., Vetter, U.: Determinantal Rings, lncs, vol.~1327. Springer (1988)

\bibitem{BFS99}
Buss, J.F., Frandsen, G.S., Shallit, J.O.: The computational complexity of some
  problems of linear algebra. J. Comput. System Sci.  \textbf{58}(3),  572--596
  (Jun 1999)

\bibitem{CSV17}
Cabarcas, D., Smith{-}Tone, D., Verbel, J.: Key recovery attack for {ZHFE}. In:
  Post-Quantum Cryptography~2017. LNCS, vol. 10346, pp. 289--308. Utrecht, The
  Netherlands (Jun 2017). \doi{10.1007/978-3-319-59879-6\_17},
  \url{https://doi.org/10.1007/978-3-319-59879-6\_17}

\bibitem{CS96}
Chabaud, F., Stern, J.: The cryptographic security of the syndrome decoding
  problem for rank distance codes. In: Advances in Cryptology - ASIACRYPT~1996.
  LNCS, vol.~1163, pp. 368--381. Springer, Kyongju, Korea (Nov 1996)

\bibitem{C01}
Courtois, N.: Efficient zero-knowledge authentication based on a linear algebra
  problem {MinRank}. In: Advances in Cryptology - ASIACRYPT~2001. LNCS,
  vol.~2248, pp. 402--421. Springer, Gold Coast, Australia (2001),
  \url{https://doi.org/10.1007/3-540-45682-1_24}

\bibitem{CGGOT14}
Couvreur, A., Gaborit, P., Gauthier{-}Uma{\~{n}}a, V., Otmani, A., Tillich,
  J.P.: Distinguisher-based attacks on public-key cryptosystems using
  {Reed-Solomon} codes. Des. Codes Cryptogr.  \textbf{73}(2),  641--666 (2014)

\bibitem{CLO15}
Cox, D., Little, J., O'Shea, D.: Ideals, Varieties, and algorithms: an
  Introduction to Computational Algebraic Geometry and Commutative Algebra.
  Undergraduate Texts in Mathematics, Springer-Verlag, New York. (2015).
  \doi{10.1007/978-3-319-16721-3}

\bibitem{DT18}
{Debris-Alazard}, T., Tillich, J.P.: A polynomial attack on a {NIST} proposal:
  Ranksign, a code-based signature in rank metric. preprint (Apr 2018),
  \url{https://eprint.iacr.org/2018/339.pdf}, {IACR} Cryptology ePrint Archive

\bibitem{FLP08}
Faug{\`e}re, J.C., {Levy-dit-Vehel}, F., Perret, L.: Cryptanalysis of
  {Minrank}. In: Wagner, D. (ed.) Advances in Cryptology - CRYPTO~2008. LNCS,
  vol.~5157, pp. 280--296 (2008),
  \url{https://doi.org/10.1007/978-3-540-85174-5_16}

\bibitem{G85}
Gabidulin, E.M.: Theory of codes with maximum rank distance. Problemy Peredachi
  Informatsii  \textbf{21}(1),  3--16 (1985)

\bibitem{GPT91}
Gabidulin, E.M., Paramonov, A.V., Tretjakov, O.V.: Ideals over a
  non-commutative ring and their applications to cryptography. In: Advances in
  Cryptology - EUROCRYPT'91. pp. 482--489. No.~547 in LNCS, Brighton (Apr 1991)

\bibitem{GHPT17a_sv}
Gaborit, P., Hauteville, A., Phan, D.H., Tillich, J.P.: Identity-based
  encryption from rank metric. In: Advances in Cryptology - CRYPTO (2017),
  \url{https://doi.org/10.1007/978-3-319-63697-9_7}

\bibitem{GMRZ13}
Gaborit, P., Murat, G., Ruatta, O., Z{\'e}mor, G.: Low rank parity check codes
  and their application to cryptography. In: Proceedings of the Workshop on
  Coding and Cryptography WCC'2013. Bergen, Norway (2013),
  \url{www.selmer.uib.no/WCC2013/pdfs/Gaborit.pdf}

\bibitem{GRS16}
Gaborit, P., Ruatta, O., Schrek, J.: On the complexity of the rank syndrome
  decoding problem. IEEE Trans. Inform. Theory  \textbf{62}(2),  1006--1019
  (2016)

\bibitem{GRSZ14}
Gaborit, P., Ruatta, O., Schrek, J., Z{\'{e}}mor, G.: New results for
  rank-based cryptography. In: Progress in Cryptology - AFRICACRYPT~2014. LNCS,
  vol.~8469, pp. 1--12 (2014)

\bibitem{GSZ11}
Gaborit, P., Schrek, J., Z{\'{e}}mor, G.: Full cryptanalysis of the chen
  identification protocol. In: Post-Quantum Cryptography - 4th International
  Workshop, PQCrypto 2011, Taipei, Taiwan, November 29 - December 2, 2011.
  Proceedings. pp. 35--50 (2011). \doi{10.1007/978-3-642-25405-5{\_3}}

\bibitem{GZ14}
Gaborit, P., Z{\'{e}}mor, G.: On the hardness of the decoding and the minimum
  distance problems for rank codes. IEEE Trans. Inform. Theory
  \textbf{62(12)},  7245--7252 (2016)

\bibitem{GC00}
Goubin, L., Courtois, N.: Cryptanalysis of the {TTM} cryptosystem. In: Okamoto,
  T. (ed.) Advances in Cryptology - ASIACRYPT~2000. LNCS, vol.~1976, pp.
  44--57. Springer (2000)

\bibitem{HPS98}
Hoffstein, J., Pipher, J., Silverman, J.H.: {NTRU}: A ring-based public key
  cryptosystem. In: Buhler, J. (ed.) Algorithmic Number Theory, Third
  International Symposium, ANTS-III, Portland, Oregon, USA, June 21-25, 1998,
  Proceedings. LNCS, vol.~1423, pp. 267--288. Springer (1998)

\bibitem{KS99}
Kipnis, A., Shamir, A.: Cryptanalysis of the {HFE} public key cryptosystem by
  relinearization. In: Advances in Cryptology - CRYPTO'99. LNCS, vol.~1666, pp.
  19--30. Springer, Santa Barbara, California, USA (Aug 1999).
  \doi{10.1007/3-540-48405-1}, \url{https://doi.org/10.1007/3-540-48405-1_2}

\bibitem{LP06a}
{Levy-dit-Vehel}, F., Perret, L.: Algebraic decoding of rank metric codes. Talk
  at the Special Semester on Gr{\"o}bner Bases - Workshop D1 pp. 1--19 (2006),
  \url{https://ricamwww.ricam.oeaw.ac.at/specsem/srs/groeb/download/Levy.pdf}

\bibitem{L14a}
Loidreau, P.: Asymptotic behaviour of codes in rank metric over finite fields.
  Des. Codes Cryptogr.  \textbf{71}(1),  105--118 (2014)

\bibitem{MTSB12}
Misoczki, R., Tillich, J.P., Sendrier, N., Barreto, P.S.L.M.:
  {MDPC}-{McEliece}: New {McEliece} variants from moderate density parity-check
  codes (2012). \doi{10.1109/ISIT.2013.6620590},
  \url{http://eprint.iacr.org/2012/409}

\bibitem{OJ02}
Ourivski, A.V., Johansson, T.: New technique for decoding codes in the rank
  metric and its cryptography applications. Problems of Information
  Transmission  \textbf{38}(3),  237--246 (2002). \doi{10.1023/A:1020369320078}

\bibitem{O05}
Overbeck, R.: A new structural attack for {GPT} and variants. In: Mycrypt.
  LNCS, vol.~3715, pp. 50--63 (2005)

\bibitem{PCYTD15}
Petzoldt, A., Chen, M., Yang, B., Tao, C., Ding, J.: Design principles for
  {HFEv}- based multivariate signature schemes. In: Iwata, T., Cheon, J.H.
  (eds.) Advances in Cryptology - {ASIACRYPT} 2015 - 21st International
  Conference on the Theory and Application of Cryptology and Information
  Security, Auckland, New Zealand, November 29 - December 3, 2015, Proceedings,
  Part {I}. LNCS, vol.~9452, pp. 311--334. Springer (2015).
  \doi{10.1007/978-3-662-48797-6\_14},
  \url{https://doi.org/10.1007/978-3-662-48797-6\_14}

\bibitem{SS92}
Sidelnikov, V.M., Shestakov, S.: On the insecurity of cryptosystems based on
  generalized {Reed-Solomon} codes. Discrete Math. Appl.  \textbf{1}(4),
  439--444 (1992)

\bibitem{S93}
Stern, J.: A new identification scheme based on syndrome decoding. In: Stinson,
  D. (ed.) Advances in Cryptology - CRYPTO'93. LNCS, vol.~773, pp. 13--21.
  Springer (1993)

\bibitem{TPD21}
Tao, C., Petzoldt, A., Ding, J.: Efficient key recovery for all {HFE} signature
  variants. In: Malkin, T., Peikert, C. (eds.) Advances in Cryptology -
  {CRYPTO} 2021 - 41st Annual International Cryptology Conference, {CRYPTO}
  2021, Virtual Event, August 16-20, 2021, Proceedings, Part {I}. Lecture Notes
  in Computer Science, vol. 12825, pp. 70--93. Springer (2021).
  \doi{10.1007/978-3-030-84242-0\_4},
  \url{https://doi.org/10.1007/978-3-030-84242-0\_4}

\bibitem{VBCPS19}
Verbel, J., Baena, J., Cabarcas, D., Perlner, R., Smith{-}Tone, D.: On the
  complexity of ``superdetermined'' {Minrank} instances. In: Post-Quantum
  Cryptography~2019. LNCS, vol. 11505, pp. 167--186. Springer, Chongqing, China
  (May 2019). \doi{10.1007/978-3-030-25510-7\_10},
  \url{https://doi.org/10.1007/978-3-030-25510-7\_10}

\end{thebibliography}
\end{document}